%% file: main.tex
\documentclass[11pt]{article}

\usepackage[T1]{fontenc}
\usepackage[utf8]{inputenc}
\usepackage{amsmath,amssymb,amsthm,mathtools}
\usepackage{mathpartir}
\usepackage{stmaryrd}
\usepackage[numbers]{natbib}
\usepackage{xcolor}
\usepackage[colorlinks=true,linkcolor=blue!50!black,citecolor=green!40!black,urlcolor=blue!50!black]{hyperref}
\usepackage{microtype}
\usepackage{enumitem}
\usepackage{graphicx}
\usepackage{booktabs}

\theoremstyle{plain}
\newtheorem{theorem}{Theorem}[section]
\newtheorem{proposition}[theorem]{Proposition}
\newtheorem{lemma}[theorem]{Lemma}
\newtheorem{corollary}[theorem]{Corollary}

\theoremstyle{definition}
\newtheorem{definition}[theorem]{Definition}
\newtheorem{example}[theorem]{Example}
\newtheorem{remark}[theorem]{Remark}

\newcommand{\type}{\ \mathsf{type}}
\renewcommand{\hom}[3]{\mathsf{hom}_{#1}(#2,#3)}
\newcommand{\homx}[2]{\mathsf{hom}(#1,#2)}
\newcommand{\Tri}{\mathsf{Tri}}
\newcommand{\Fill}{\mathsf{Fill}}
\newcommand{\Canon}{\mathsf{Canon}}
\newcommand{\Fork}{\mathsf{Fork}}
\newcommand{\Sense}{\mathsf{Sense}}
\newcommand{\Refr}{\mathsf{Ref}}
\newcommand{\apart}{\mathrel{\#}}
\newcommand{\simV}[1]{\mathrel{\simeq_{#1}}}
\newcommand{\apV}[1]{\mathrel{\#_{#1}}}
\newcommand{\MC}{\Delta}

\newcommand{\refl}{\mathsf{refl}}
\newcommand{\isContr}{\mathsf{isContr}}

\newcommand{\emp}{\mathbf{0}}
\newcommand{\transp}{\mathsf{tr}}
\newcommand{\Sem}[1]{\llbracket #1 \rrbracket}

\title{A Type Theory of Sense:\\ \large Witnessed Choice in Stratified Semantic Spaces}

\author{Iman Poernomo}
\date{June 2026}

\begin{document}
\maketitle

\begin{abstract}
We present a dependent type theory, \textbf{TTS}, in which Fregean
\emph{sense} is a mathematical object and semantic verdicts are
\emph{consequences of measurement}. The theory is motivated by a recent
empirical discovery: the token subspaces of large language models are not
manifolds. Semantic space is stratified, and a trajectory of meaning
through it encounters loci at which composition succeeds in
\emph{multiple inequivalent ways}. Classical homotopy type theory cannot
see these loci: its Kan condition guarantees that composites exist and
are essentially unique. TTS replaces the global Kan assumption with a
local analysis of \emph{filler spaces}, graded by \emph{regimes} ---
instruments at declared resolutions, ordered by refinement. Two
constructive relation formers carry the logic: regime-indexed
indiscernibility, and regime-indexed apartness in the tradition of
Brouwer, Heyting and Bishop. Apartness has no premise-free introduction
rule; separation witnesses enter the theory only through a
\emph{measurement context} --- a record of constants interpreted by an
actual instrument run --- so that TTS is, by construction, a logic of
consequence from measurement data. Two grades of compositional demand
are derived, not primitive --- \emph{canonicity} (the instrument cannot
distinguish any two fillers: composition is forced, no semantic decision
occurs) and the \emph{fork} (two warranted fillers, recorded apart:
composition demands a genuine choice) --- with a third, \emph{locked},
derived from them. Their mutual exclusion is a theorem; that fork
verdicts persist under refinement while canonicity does not are
theorems; and the calculus is conservative over its base, derives no
fork from the empty record, and derives nothing the record does not
force (provenance). \emph{Sense} is then a choice of
filler; \emph{reference} is the horn boundary that constrains the
choice; hyperintensional distinctness is apartness of fillers at a
regime; and the consistency of an identity discovered by one instrument
with a separation held by another is characterized exactly by the order
relation between the instruments --- a law instantiated twice, by
lexical ambiguity at filler spaces and by the name-state configuration
of Frege's own celestial case. We give the calculus, a semantics in
regime-indexed simplicial apartness spaces with soundness, an account of
basins and over-coherence, and a falsifiable bridge to two measurable
phenomena: manifold-hypothesis violation in token embedding spaces and
forking tokens in autoregressive generation.
\end{abstract}

\section{Introduction}\label{sec:intro}

\subsection{Meaning after the manifold}

A large language model in operation traverses a learned space. A
context is a position; a continuation is a directed step; an
unfolding conversation is a trajectory. Whatever ``meaning'' is for
such a system, it is enacted in the geometry of these traversals.
For a decade the default picture of that geometry was the
\emph{manifold hypothesis}: representations live on a smooth
low-dimensional submanifold of the embedding space, so that locally
there is always a well-behaved, essentially unique way to continue.

The default picture is now false as a matter of measurement. Robinson,
Dey and Chiang showed that the token subspaces of GPT-2, Llemma-7B,
Mistral-7B and Pythia-6.9B reject the manifold hypothesis --- and, for
three of the four, the weaker fiber-bundle hypothesis --- at
significance $p < 3\times10^{-7}$, with some token neighbourhoods
possessing no well-defined local dimension at all
\citep{robinson2025manifold}; an earlier study of the same group found
the token subspace to be a \emph{stratified} space, with strata of
varying dimension and curvature \citep{robinson2024structure}, and
independent probes confirm strongly varying local intrinsic dimension
in contextual embeddings \citep{ruppik2025lid, li2025stratified}.
Complementarily, Bigelow et al.\ exhibited \emph{forking tokens} in
autoregressive generation: positions at which resampling a single
token flips the entire downstream outcome distribution between
discrete alternatives \citep{bigelow2025forking}. Generation, observed
from outside, passes through points of genuine choice.

This paper takes the measurements seriously as constraints on
\emph{logic}. If semantic space is stratified rather than smooth, then
a formal theory of meaning for such spaces must be able to express,
within its formal apparatus, the three local \emph{grades} of
compositional demand a trajectory can encounter:

\begin{enumerate}[label=(\roman*)]
\item \textbf{Canonical.} Locally the space behaves like a
  manifold: a continuation exists and is essentially unique. No
  semantic decision occurs. (The manifold hypothesis holds here; in
  simplicial language, inner horns fill uniquely --- the Segal
  condition \citep{joyal2002, riehlshulman2017}.)
\item \textbf{Forked.} Composition succeeds in several
  \emph{inequivalent} ways: the space of fillers is inhabited but
  disconnected. The trajectory must choose, and nothing in the local
  geometry makes the choice for it. (The manifold hypothesis fails;
  this is the grade the measurements above detect.)
\item \textbf{Locked.} Composition is canonical \emph{and}
  every exit is foreclosed: the trajectory is confined to a tightly
  self-confirming region. (Degenerate fluency; in language models,
  the phase of repetition and collapse \citep{arnold2024phase,
  nakaishi2026critical}.)
\end{enumerate}

A caution before the formalism, because the introduction has so far
spoken loosely. Manifold smoothness, Segal-style uniqueness of
composition, and connectedness of a measured completion distribution
are \emph{three different properties}; none follows from the others.
A smooth manifold contains indefinitely many paths --- unique
continuation needs dynamics, not just smoothness --- and
non-manifoldness does not by itself disconnect any particular filler
space. The honest statement of the bridge: non-manifold loci
\emph{motivate candidate sites} at which compositional canonicity
should be tested; whether a given filler space separates is measured
independently, by the fork test of \S\ref{sec:empirics}. The grades
above are the formal properties; the geometric glosses are the
intended picture in the models of \S\ref{sec:semantics}, not
identifications.

One more lesson of the measurements shapes the design. Every verdict
above is \emph{instrument-relative}: the geometry was measured by a
particular probe at a particular resolution, and a structure invisible
at one resolution is decisive at another. We give this relativity
first-class status: TTS judgments are graded by \emph{regimes} ---
instruments at declared resolutions, ordered by refinement --- and the
laws relating verdicts across regimes (separations persist under
refinement; canonicity does not; two instruments' verdicts on one pair
can disagree exactly when the connecting instrument is not a refinement
of the separating one) are theorems of the system, not glosses.

Homotopy type theory (HoTT) \citep{hottbook} is constitutionally blind
to (ii) and (iii). Its types are Kan complexes: every horn fills, and
the space of fillers of an inner horn is contractible. Composition
never poses a question. This is precisely what makes HoTT an
extraordinary logic of \emph{mathematical} space --- and what makes it
the wrong logic of \emph{semantic} space, where the interesting events
are exactly the failures of canonicity.

\subsection{Sense as choice of filler}

The second motivation is older. Frege distinguished the
\emph{reference} of an expression from its \emph{sense} --- the mode
of presentation under which the reference is given --- to explain why
true identity statements can be informative \citep{frege1892}. The
distinction has resisted mathematization: possible-worlds intensions
are too coarse (logically equivalent contents collapse), while
syntactic individuations are too fine. The modern proposals ---
sense as algorithm \citep{moschovakis2006}, sense as procedure in a
ramified hierarchy \citep{duzi2010}, sense as program with
computational identity \citep{bentzen2020}, sense of a proof as its
derivational structure \citep{tranchini2016, ayhan2021, ayhan2025} ---
share a conviction we endorse: sense is something \emph{enacted}, a
way of arriving, not a static intension. What they lack is a geometry
in which ``ways of arriving'' have intrinsic structure: a space of
ways, with its own connectivity, whose disconnectedness is precisely
hyperintensional difference.

TTS supplies that geometry, and the stratified picture of semantic
space supplies its empirical referent. The central identification of
this paper:

\begin{quote}
\emph{The reference of a compositional demand is its boundary --- the
horn: what is already given and must be respected. A sense is an
inhabitant of the filler space of that horn --- a way of completing
what is given. Hyperintensional distinctness is apartness of fillers.
An identity statement is informative because a path between apart-held
fillers must be constructed and witnessed; it is never given for
free.}
\end{quote}

Where the filler space is canonical (grade (i)), sense and
reference coincide: there is only one way of arriving, and Frege's
distinction has no purchase --- which is why it is invisible to any
logic that assumes the Kan condition. The distinction \emph{is} the
fork. Frege's puzzle lives exactly where the manifold hypothesis
fails.

\subsection{The grounding requirement}

A logic that asserts ``these two ways of meaning are distinct'' owes
an account of what \emph{witnesses} the distinctness. Negative
formulations --- a primitive judgment of non-coherence, or an
unprovability claim --- make the witness endogenous: the theory
stipulates the failure rather than exhibiting it, and the judgment
floats free of any mathematics that could certify it. We regard this
as a hard adequacy condition, and we meet it twice over. First, with a
venerable constructive instrument: an \emph{apartness} structure in the
tradition of Brouwer, Heyting and Bishop \citep{heyting1956,
bishopbridges1985, troelstravandalen1988}. Apartness is positive:
$f \apV{V} g$ is inhabited by a separation witness, subject to
irreflexivity, symmetry and cotransitivity, and indexed by the regime
$V$ that achieved the separation. Second --- and this is the
architectural step --- apartness has \emph{no premise-free introduction
rule}: separation witnesses enter the theory only as the constants of a
\emph{measurement context} $\MC$, the record of an actual instrument
run. The judgment form of TTS is $\Gamma \vdash_{\MC} \mathcal{J}$:
hypotheses on the left, the record below. Nothing in the theory is
stipulated to fail; failures of canonicity are exhibited by data, and
the slogan is enforced by the metatheory --- \emph{variables
hypothesize, constants record}.

Three consequences distinguish the resulting theory from any system
with primitive negative judgments. First, the mutual exclusion of
canonicity and fork is a \emph{theorem} (Theorem~\ref{thm:exclusion}):
indiscernibility transports apartness into self-apartness, which is
absurd. The logical architecture polices itself. Second, the calculus
provably cannot manufacture its own verdicts: it is conservative over
its base, derives no fork from the empty record
(Corollary~\ref{cor:noforkexnihilo}), and every derivable fork holds in
every model realizing the record (Theorem~\ref{thm:provenance}). TTS
is, by construction and by theorem, a \textbf{logic of consequence from
measurement data}: the world supplies the verdicts; the logic audits
what they force. Third, the theory needs no primitive judgment of
\emph{emptiness} at all. In the intended models generation is total ---
there is always a next token --- so the constructively meaningful
failure of the Kan condition is never ``no filler'' but ``too many,
inequivalently.'' Absence, where needed, is the ordinary derived
negation $\Fill(\sigma)\to\emp$; the load-bearing new judgment is the
fork.

\subsection{Contributions and plan}

\begin{enumerate}[label=(\arabic*)]
\item A dependent type theory, TTS, extending Martin-L\"of type theory
  with directed hom- and triangle types, regime-indexed
  indiscernibility and apartness, and measurement contexts, presented
  calculus-first with proof terms (\S\ref{sec:calculus}).
\item The derived grades of canonicity and fork; the exclusion theorem;
  the persistence asymmetry (forks persist under refinement, canonicity
  does not); and the cross-regime consistency theorem characterizing
  exactly when instruments may disagree (\S\ref{sec:grades}).
\item A formal reconstruction of sense, reference, hyperintensionality
  and the informativeness of identity, with lexical ambiguity as a
  worked example --- and the names configuration (name-states in a
  presentation space) falling under the same cross-regime law: two
  Fregean phenomena, one theorem, two typing instantiations
  (\S\ref{sec:sense}).
\item A semantics in regime-indexed simplicial apartness spaces with
  soundness; metatheorems making the measurement architecture
  more than rhetoric --- conservativity, no-fork-ex-nihilo, provenance,
  and consistency of honest records; and the characterization of
  \emph{basins} as components whose internal demands are canonical,
  with \emph{locked} basins as the third grade
  (\S\ref{sec:semantics}).
\item A falsifiable bridge to measurement: apartness verdicts at
  pullback regimes as the formal content of resampling experiments,
  with manifold-hypothesis violations and forking tokens as the
  empirical anchors (\S\ref{sec:empirics}).
\end{enumerate}

Related work is discussed in \S\ref{sec:related}; open problems,
including completeness and mechanization, in \S\ref{sec:conclusion}.
One independence claim, stated up front because it allocates risk.
The empirical case is at the pilot stage and says so
(\S\ref{sec:empirics}); the philosophical and logical contributions
do not wait on it. That HoTT is constitutionally blind to
compositional choice, and that a calculus of witnessed, graded
verdicts can see it --- with exclusion, persistence-asymmetry, and
cross-instrument consistency as theorems --- stands even if semantic
spaces prove smoother than the present measurements suggest. Put
most plainly: what this paper builds is a small logic of
\emph{instrument-relative semantic disagreement}, in which
difference must be witnessed, identification need not erase
difference, and verdicts propagate asymmetrically under refinement.
Frege, stratified LLM geometry, and alignment-induced mode collapse
are three applications of that one machine.

A note on method: the paper is organized so that nothing is asked of
the reader on credit. An actual measurement, four data points small,
opens \S\ref{sec:calculus} (Example~\ref{ex:miniature}), and every
construct of the calculus is anchored to it as it is introduced; the
full experiment of \S\ref{sec:empirics} is that example at scale.

\section{The calculus TTS}\label{sec:calculus}

TTS is an extension of intensional Martin-L\"of type theory (MLTT) with
$\Pi$-, $\Sigma$-, identity, empty, unit, sum and propositional-truncation
types; we follow the conventions of \citep{hottbook}, writing $p : a =_A b$
for paths, $\refl$, and $\transp$ for transport. Four layers are added: a
\emph{semantic signature} and its types (\S\ref{ssec:semtypes}), directed
structure (\S\ref{ssec:hom}--\ref{ssec:tri}), \emph{regime-indexed}
indiscernibility and apartness (\S\ref{ssec:regimes}--\ref{ssec:apart}), and
\emph{measurement contexts} (\S\ref{ssec:mctx}). The slogan governing the
whole design, defended in \S\ref{ssec:mctx}: \textbf{variables hypothesize;
constants record.}

\paragraph{A running example.} Throughout, the reader is invited to hold one
concrete instance of every formal object. Let $A$ be a space of contextual
states of a discourse --- for a language model, embedded prompt states ---
and $a, b, c$ three such states. A continuation $f : \hom{A}{a}{b}$ is a
realized stretch of discourse carrying the context from $a$ to $b$: say, the
words arriving at an occurrence of \emph{bank}. A second continuation
$g : \hom{A}{b}{c}$ departs from that occurrence without disambiguating it.
The pair $(f,g)$ is a compositional demand: the passage must be readable
whole. Everything below is the logic of that situation.

And so that no construct of this section floats free of measurement, we
fix at the outset one \emph{actual} instrument run, in miniature, drawn
from the experiment reported in full in \S\ref{sec:empirics}. The
calculus will refer back to it constantly; nothing later in the paper is
needed to understand it.

\begin{example}[A measurement, in miniature]\label{ex:miniature}
A language model (\texttt{gpt-3.5-turbo-instruct}, temperature $1.0$) was
given the boundary
\begin{quote}
\emph{``We met them near the bank, just before sunset.''}
\end{quote}
Among its sampled continuations are these four (first sentences shown;
all data verbatim from the run):
\begin{itemize}[itemsep=1pt]
\item[$\varphi_1$:] \emph{``Corrapiran decided to keep watch outside,
  while I entered the small bank alone.''}
\item[$\varphi_2$:] \emph{``The bank was closed and the grounds were
  abandoned when we got there.''}
\item[$\varphi_3$:] \emph{``They were standing on the riverbank opposite
  from us, watching the water flow by.''}
\item[$\varphi_4$:] \emph{``As we paddled our kayaks up to the bank, we
  see that they were waiting for us.''}
\end{itemize}
The \emph{instrument}: embed each continuation
(\texttt{text-embedding-3-small}) and measure cosine distances. The
measured values:
\[
\begin{array}{c|cccc}
d_{\cos} & \varphi_1 & \varphi_2 & \varphi_3 & \varphi_4 \\\hline
\varphi_1 & 0 & .585 & .691 & .698 \\
\varphi_2 & & 0 & .720 & .652 \\
\varphi_3 & & & 0 & .455 \\
\varphi_4 & & & & 0
\end{array}
\]
Read at resolution $\varepsilon = 0.60$ --- call this instrument $V$ ---
the four continuations fall into exactly two chain-components:
$\{\varphi_1, \varphi_2\}$ (the financial readings, joined by their
$.585$ step) and $\{\varphi_3, \varphi_4\}$ (the riparian readings,
joined by $.455$), with every cross-distance exceeding $0.60$. The
instrument's \emph{record} of this run is the verdict list
\[
\MC_0 \;=\; \big(\;
\gamma_1 : \varphi_1 \simV{V} \varphi_2,\;\;
\gamma_2 : \varphi_3 \simV{V} \varphi_4,\;\;
\delta_1 : \varphi_1 \apV{V} \varphi_3,\;\;
\delta_2 : \varphi_1 \apV{V} \varphi_4,\;\dots\big)
\]
--- two connections, four separations. Now vary the instrument and watch
the verdicts move. At the \emph{finer} resolution $\varepsilon' = 0.40$
(call it $V'$, with $V' \le V$) every pairwise distance exceeds the
threshold: all four continuations separate; the $\delta$-entries survive,
the $\gamma$-entries do not recur. At the \emph{coarser} resolution
$\varepsilon'' = 0.70$ (call it $W$, with $V < W$) the four continuations
form a \emph{single} component: this instrument connects $\varphi_1$ to
$\varphi_3$ --- the very pair $V$ separates. Three facts of this little
dataset are, in miniature, the three load-bearing laws of the calculus
built in this section: separations persist under refinement while
connections do not (the monotonicity rules of \S\ref{ssec:apart}); one
instrument's verdicts on one pair never conflict (the clash lemma,
\S\ref{sec:grades}); and a coarse connection coexisting with a fine
separation is not a contradiction but a fact about \emph{two}
instruments (the cross-regime theorem, \S\ref{ssec:crossregime}). The
record $\MC_0$ is a measurement context in the sense of
\S\ref{ssec:mctx}; the fork verdict it licenses is
Definition~\ref{def:fork}; the full protocol --- calibrated boundaries,
$N = 300$, a persistence criterion --- is \S\ref{sec:empirics}.
\end{example}

\subsection{The semantic signature and semantic types}\label{ssec:semtypes}

\begin{definition}[Semantic signature]\label{def:semsig}
A \emph{semantic signature} $\Sigma$ comprises (i) a collection of
\emph{base semantic types} $\mathfrak{B}$, and (ii) typed \emph{term
constants}: names for points, and names for instrument-observed
continuations and fillers of declared horns.
\end{definition}

The term constants are not decoration: the closed-judgment metatheorems of
\S\ref{sec:semantics} concern closed terms, and over a constant-free
signature the directed fragment's only closed terms are degeneracies ---
the calculus would issue no fork verdict on any non-degenerate horn. The
record presupposes a naming discipline supplied by the instrument:
\emph{the instrument names what it separates} --- in
Example~\ref{ex:miniature}, the names $\varphi_1, \dots, \varphi_4$.

\begin{definition}[Semantic types]\label{def:semtype}
The \emph{semantic types} over $\Sigma$ form the grammar
\[
S \;::=\; \mathfrak{B}
  \;\mid\; \hom{S}{a}{b}
  \;\mid\; \Tri_S(f,g;h)
  \;\mid\; \textstyle\sum_{x:S} S'(x) \quad (S'(x) \text{ semantic}).
\]
Semantic types are a distinguished subclass of MLTT types: all MLTT
structural and eliminator rules apply to them (in particular
$\Sigma$-projections and $\Sigma$-elimination); only $\mathsf{hom}$ and
$\Tri$ are eliminator-free. $\Pi$-types, identity types, $\mathbf{0}$,
$\mathbf{1}$, $+$ and $\|\cdot\|$ are \emph{not} semantic. The relation
formers of \S\ref{ssec:apart} apply only to semantic types, over arbitrary
(possibly open) terms.
\end{definition}

Why this perimeter: semantic types are the types an instrument can
point at --- discourse states, continuations, warrants, and their
pairings. The needed distinction is between a function as an
\emph{object of} measurement and a function as an \emph{instrument
of} measurement: instruments point \emph{through} functions
constantly (every readout $\rho$ of \S\ref{ssec:readout} is one),
but they do not point \emph{at} them --- no record entry separates
two functions qua functions, and propositions likewise present
nothing for a resolution to resolve. The relation formers are
confined to the grammar for that reason; readouts live in the
signature, on the instrument side of the line.
``Eliminator-free'' means: no induction principle --- the theory can
\emph{name} continuations (via the signature's constants) but never
case-analyse them, a poverty that Remark~\ref{rem:canonunder} will show
is exactly what keeps canonicity honest.

\emph{Convention (substitution stability).} All new formers and constants of
this section commute with substitution. The path-action $e_* f$ of an
ambient path on a continuation is \emph{derived} (ordinary transport in
$x \mapsto \hom{A}{x}{b}$), not primitive: the directed fragment's
primitives are exactly the five rules displayed below.

\subsection{Directed hom-types}\label{ssec:hom}

For $A$ semantic and $a, b : A$, the type of \emph{continuations} from $a$
to $b$. Continuations are directed: no symmetry is posited --- a stretch of
discourse cannot be un-said --- and no composition is assumed: composition
is exactly what is at stake.

\begin{mathpar}
\inferrule[hom-form]
  {\Gamma \vdash_{\MC} A \type_{\mathsf{sem}} \\ \Gamma \vdash_{\MC} a : A \\ \Gamma \vdash_{\MC} b : A}
  {\Gamma \vdash_{\MC} \hom{A}{a}{b} \type_{\mathsf{sem}}}

\inferrule[hom-degen]
  {\Gamma \vdash_{\MC} a : A}
  {\Gamma \vdash_{\MC} \mathsf{id}_a : \hom{A}{a}{a}}
\end{mathpar}

In the running example an inhabitant of $\hom{A}{a}{b}$ is a realized
traversal; $\mathsf{id}_a$ is the null continuation --- remaining at $a$,
adding nothing.

\subsection{Triangle types and filler spaces}\label{ssec:tri}

A \emph{triangle} is a warrant that a candidate $h : \hom{A}{a}{c}$
genuinely composes $f$ and $g$: that reading the passage whole as $h$ is
compatible with having arrived by $f$ and departed by $g$. For a language
model the warrant is realizability: the model, holding both flanking
commitments, can enact $h$ as a fluent traversal.

\begin{mathpar}
\inferrule[tri-form]
  {\Gamma \vdash_{\MC} f : \hom{A}{a}{b} \\
   \Gamma \vdash_{\MC} g : \hom{A}{b}{c} \\
   \Gamma \vdash_{\MC} h : \hom{A}{a}{c}}
  {\Gamma \vdash_{\MC} \Tri_A(f,g;h) \type_{\mathsf{sem}}}

\inferrule[tri-degen-l]
  {\Gamma \vdash_{\MC} f : \hom{A}{a}{b}}
  {\Gamma \vdash_{\MC} \lambda^{\mathsf{l}}_f : \Tri_A(\mathsf{id}_a, f; f)}

\inferrule[tri-degen-r]
  {\Gamma \vdash_{\MC} f : \hom{A}{a}{b}}
  {\Gamma \vdash_{\MC} \lambda^{\mathsf{r}}_f : \Tri_A(f, \mathsf{id}_b; f)}
\end{mathpar}

(In \textsc{tri-form} and throughout, the presupposition
$A \type_{\mathsf{sem}}$ is carried by the premises' typing. The
degenerate triangles are the trivial warrants: prefixing or suffixing
the null continuation changes no reading, and the theory says so.)

\begin{definition}[Horn; filler space]\label{def:filler}
An \emph{inner $2$-horn} in a semantic type $S$ is a tuple
$\sigma := (a, b, c : S,\; f : \hom{S}{a}{b},\; g : \hom{S}{b}{c})$; its
\emph{boundary vertices} are $a, b, c$ (``vertex'' is horn-relative: for
$S$ itself a hom-type, the horn's vertices are ambient edges). Its
\emph{filler space} is
\[
\Fill_S(\sigma) \;:\equiv\; \sum_{h \,:\, \hom{S}{a}{c}} \Tri_S(f,g;h),
\]
semantic by the grammar; we write $\Fill(\sigma)$ when $S$ is clear.
\end{definition}

A filler is a \emph{pair}: a composite together with its compositional
warrant. Note what is not assumed: neither inhabitation (no Kan condition)
nor uniqueness of composition. Operationally, $\Fill_S(\sigma)$ is the
space of \emph{readings} --- everything the system can warrantedly do when
handed the boundary; sampling a model's continuations of a prompt is
sampling (a readout of) this space, a remark \S\ref{sec:empirics} makes
precise.

\begin{remark}[Why the warrant is binary]\label{rem:warrant}
$\Tri_A(f,g;h)$ is inhabited or not, while fluent enactment is
plainly graded: a bridge can be barely coherent or perfect. The
gradation is not lost; it is \emph{relocated}. The warrant is a
realizability condition --- can the system traverse the passage this
way at all --- and quality variation among warranted fillers is
exactly what the regime structure on $\Fill(\sigma)$ measures: a
barely-coherent bridge and a perfect one are both fillers, and any
instrument worth its resolution separates them. Grading the warrant
itself would duplicate, inside $\Tri$, the work the relations
already do outside it.
\end{remark}

\begin{remark}[Higher horns]\label{rem:higher}
We develop the theory for inner $2$-horns, where the semantic phenomena of
interest already occur. The generalization to $\Lambda^n_i$ is mechanical
but notationally heavy; nothing below depends on it.
\end{remark}

\subsection{Regimes}\label{ssec:regimes}

\begin{definition}[Regimes]\label{def:regime}
TTS is parameterized by a poset $(\mathcal{V}, \le)$ of \emph{regimes},
treated as meta-level indices (in the manner of universe levels), not
internal types. \textbf{Convention: $V' \le V$ means $V'$ is \emph{finer}}
--- it separates more and identifies less. In the intended models a regime
is an \emph{instrument}: a metric with a resolution, $V = (d, \varepsilon)$,
and finer means smaller $\varepsilon$. We write $V < W$ for the strict
order. The three instruments of Example~\ref{ex:miniature} are three
regimes: cosine distance read at $0.40$, $0.60$ and $0.70$ respectively,
ordered $V' \le V \le W$ --- one metric, three resolutions, three
verdict-sets.
\end{definition}

A regime is the formal home of an old hermeneutic commonplace: there is no
verdict of sameness or difference \emph{simpliciter}, only sameness or
difference \emph{as measured by an instrument at a resolution}. The
calculus never mentions numbers; it tracks only the order structure ---
which instrument refines which.

\subsection{Indiscernibility and apartness}\label{ssec:apart}

Two regime-indexed relation formers on each semantic type, both valued in
h-propositions (their isProp structure is built into the formers; where a
sum appears in a conclusion it is truncated).\footnote{For readers
outside type theory: an \emph{h-proposition} is a type with at most one
inhabitant up to identity --- a yes/no matter; the \emph{truncation}
$\|T\|$ records \emph{that} $T$ is inhabited while forgetting
\emph{which} inhabitant --- mere existence. The distinction earns its
keep at Definition~\ref{def:fork}.} In both formation rules below the
type $S$ is presupposed semantic ($S \type_{\mathsf{sem}}$), per
Definition~\ref{def:semtype}.

\paragraph{$V$-indiscernibility.} $u \simV{V} v$: \emph{the instrument $V$
cannot tell $u$ from $v$ apart} --- there is a chain of sub-resolution
deformations from one to the other. (The entries $\gamma_1, \gamma_2$ of
$\MC_0$ in Example~\ref{ex:miniature}.)

\begin{mathpar}
\inferrule[$\simeq$-form]
  {\Gamma \vdash_{\MC} u : S \\ \Gamma \vdash_{\MC} v : S}
  {\Gamma \vdash_{\MC} u \simV{V} v \type}

\inferrule[$\simeq$-refl]
  {\Gamma \vdash_{\MC} u : S}
  {\Gamma \vdash_{\MC} \mathsf{r}_V(u) : u \simV{V} u}

\inferrule[$\simeq$-sym]
  {\Gamma \vdash_{\MC} p : u \simV{V} v}
  {\Gamma \vdash_{\MC} p^{\dagger} : v \simV{V} u}

\inferrule[$\simeq$-trans]
  {\Gamma \vdash_{\MC} p : u \simV{V} v \\ \Gamma \vdash_{\MC} q : v \simV{V} w}
  {\Gamma \vdash_{\MC} p \cdot q : u \simV{V} w}

\inferrule[$\simeq$-mono]
  {\Gamma \vdash_{\MC} p : u \simV{V'} v \\ V' \le V}
  {\Gamma \vdash_{\MC} \mathsf{up}(p) : u \simV{V} v}
\end{mathpar}

\textsc{$\simeq$-mono} reads: what a finer instrument cannot distinguish, a
coarser one cannot either.

\begin{lemma}[id-to-regime]\label{lem:idtoregime}
For every regime $V$ there is a function
$\mathsf{itr}_V : (u =_S v) \to (u \simV{V} v)$.
\end{lemma}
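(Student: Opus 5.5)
The natural route is path induction on the identity type, in the style of the HoTT book's $\mathsf{idtoeqv}$. Fix a regime $V$ and a semantic type $S$. By the based form of the identity eliminator ($J$), it suffices to define $\mathsf{itr}_V$ on the reflexivity path. Concretely, I would consider the family
\[
C(u,v,p) \;:\equiv\; u \simV{V} v \qquad (u, v : S,\ p : u =_S v),
\]
which is a well-formed type family by \textsc{$\simeq$-form}: both $u$ and $v$ are variables of semantic type $S$, and the relation formers apply to open terms of semantic types. The diagonal case $C(u,u,\refl_u)$ is inhabited by $\mathsf{r}_V(u)$ via \textsc{$\simeq$-refl}. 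Path induction then produces $\mathsf{itr}_V(p) : u \simV{V} v$ for every $p : u =_S v$, with the computation rule $\mathsf{itr}_V(\refl_u) \equiv \mathsf{r}_V(u)$.

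An equivalent construction uses transport. One takes the family $P(x) :\equiv u \simV{V} x$ over $S$ and sets $\mathsf{itr}_V(p) :\equiv \transp^{P}(p, \mathsf{r}_V(u))$. Either presentation works; I would state the $J$-form and mention the transport form as the one that generalizes.

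The only point requiring care is that the motive is legitimate. The identity type $u =_S v$ is not itself semantic, but the rules never need it to be. Only the codomain $u \simV{V} v$ must be formed, and it is formed over semantic $S$ with open variables $u, v$, which Definition~\ref{def:semtype} explicitly permits. The elimination target $u \simV{V} v$ is an ordinary (h-propositional) MLTT type, so $J$ applies to it without restriction.

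I also need to check that the measurement context $\MC$ is inert here. The construction uses no record constants, so it works uniformly in $\vdash_{\MC}$ for any $\MC$, including the empty record. This matters later: the function $\mathsf{itr}_V$ introduces only indiscernibility, never apartness, so it does not threaten the no-fork-ex-nihilo discipline.

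I expect no real obstacle. Because the relation is h-propositional, no coherence data about $\mathsf{itr}_V$ on higher paths is needed. Finally, naturality in $V$ follows as a remark: $\mathsf{up}(\mathsf{itr}_{V'}(p))$ and $\mathsf{itr}_V(p)$ agree, since both inhabit the same proposition.
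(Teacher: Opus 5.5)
Your proposal is correct and is the paper's proof: path induction with motive $(x,y,p)\mapsto x \simV{V} y$, a family over open terms licensed by \textsc{$\simeq$-form}, with base case $\mathsf{r}_V$. The transport variant and your remarks on independence from the record and on naturality in $V$ are harmless additions.
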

\begin{proof}
Path induction with motive $(x, y, p) \mapsto x \simV{V} y$ (an h-prop
family over open terms, licensed by \textsc{$\simeq$-form}) and base
$\mathsf{r}_V$.
\end{proof}

Mathematical identity is thus indiscernible at every regime. The converse
is \emph{not} claimed --- indiscernibility at every declared regime does
not yield a path --- and this one-directionality is essential to what
follows.

\paragraph{$V$-apartness.} $u \apV{V} v$: \emph{the instrument $V$ has the
resources to separate $u$ from $v$} --- a positive structure in the
constructive-apartness tradition of Brouwer, Heyting and Bishop
\citep{heyting1956, bishopbridges1985, troelstravandalen1988,
vonplato1995}.

\begin{mathpar}
\inferrule[$\#$-form]
  {\Gamma \vdash_{\MC} u : S \\ \Gamma \vdash_{\MC} v : S}
  {\Gamma \vdash_{\MC} u \apV{V} v \type}

\inferrule[$\#$-irrefl]
  {\Gamma \vdash_{\MC} \delta : u \apV{V} u}
  {\Gamma \vdash_{\MC} \mathsf{absurd}(\delta) : \emp}

\inferrule[$\#$-sym]
  {\Gamma \vdash_{\MC} \delta : u \apV{V} v}
  {\Gamma \vdash_{\MC} \delta^{\dagger} : v \apV{V} u}

\inferrule[$\#$-cotrans]
  {\Gamma \vdash_{\MC} \delta : u \apV{V} v \\ \Gamma \vdash_{\MC} w : S}
  {\Gamma \vdash_{\MC} \mathsf{split}(\delta, w) :
    \big\| (u \apV{V} w) + (w \apV{V} v) \big\|}

\inferrule[$\#$-mono]
  {\Gamma \vdash_{\MC} \delta : u \apV{V} v \\ V' \le V}
  {\Gamma \vdash_{\MC} \mathsf{down}(\delta) : u \apV{V'} v}

\inferrule[$\#$-ext]
  {\Gamma \vdash_{\MC} p : u \simV{V} u' \\ \Gamma \vdash_{\MC} \delta : u \apV{V} v}
  {\Gamma \vdash_{\MC} \mathsf{ext}(p, \delta) : u' \apV{V} v}
\end{mathpar}

Glosses. \textsc{$\#$-mono}: a separation achieved by a coarse instrument
persists under every refinement --- one cannot lose a measured difference
by looking harder. (In Example~\ref{ex:miniature}: the $\delta$-entries
recorded at $0.60$ all hold at $0.40$; the $\gamma$-entries do not ---
which is why there is a \textsc{$\simeq$-mono} going coarse-ward and no
$\simeq$-rule going fine-ward.) \textsc{$\#$-ext}: apartness is
extensional over indiscernibility \emph{at the same regime} --- what an
instrument cannot distinguish, it cannot separate differently. \textsc{$\#$-cotrans}: if two
readings are separated, any third reading is separated from at least one of
them; it is cotransitivity that makes apartness classes behave like
components of a space. (Cotransitivity is retained as constitutive of
constructive apartness and holds in all intended models; no result in this
paper depends on it.)

\begin{remark}[Exogeneity, made architectural]\label{rem:exo}
There is \emph{no premise-free introduction rule} for apartness at any
regime: every $\#$ in a conclusion traces to a $\#$ in a premise
(\textsc{$\#$-cotrans} redistributes recorded apartness to new comparison
points; it cannot create apartness ex nihilo --- the erasure argument of
Appendix~\ref{app:soundness}, which also yields
Corollary~\ref{cor:noforkexnihilo}, makes this precise). Closed separation witnesses enter the theory in
exactly one way: through the measurement context of
\S\ref{ssec:mctx}, where they are constants interpreted by an actual
instrument record --- such as $\MC_0$ of Example~\ref{ex:miniature},
whose $\delta_1$ is not a stipulation but the number $.691$ read against
the threshold $0.60$. Distinctness claims are obligations to exhibit
data.
\end{remark}

\subsection{Measurement contexts}\label{ssec:mctx}

\begin{definition}[Measurement context]\label{def:mctx}
A \emph{measurement context} is a finite list
\[
\MC \;::=\; \emptyset
  \;\mid\; \MC,\, \delta : u \apV{V} v
  \;\mid\; \MC,\, \gamma : a \simV{V} b
\]
whose terms are \emph{closed} terms of semantic types and whose indices are
regimes (or pullback indices, \S\ref{ssec:readout}). $\MC$ is well-formed
($\MC\ \mathsf{ok}$) when each entry's terms are well-typed closed terms of
a single semantic type. The judgment form of TTS is
$\Gamma \vdash_{\MC} \mathcal{J}$, and the single rule by which the record
enters derivations is
\begin{mathpar}
\inferrule[$\MC$-ax]
  {(\delta : u \apV{V} v) \in \MC}
  {\Gamma \vdash_{\MC} \delta : u \apV{V} v}
\end{mathpar}
(presupposing $\Gamma\ \mathsf{ok}$, $\MC\ \mathsf{ok}$; dually for
$\gamma$-entries). $\MC$-entries are \emph{constants}: they cannot be
$\lambda$-abstracted or substituted for. Weakening of $\MC$ is admissible;
there is no $\MC$-strengthening.
\end{definition}

The intended reading: $\MC$ is the \emph{record} of an instrument run.
The reader has already held one in full: $\MC_0$ of
Example~\ref{ex:miniature} --- two connections, four separations, each
entry a number read against a declared threshold. ($\MC$-ax) is the rule
that lets a derivation cite $\delta_1$; nothing else about the run
enters the logic. TTS$(\Sigma, \MC)$ is then a \textbf{logic of
consequence from measurement data}: it derives what the record forces ---
which forks stand, which identities are refuted, which cross-instrument
configurations cohere --- and, by the metatheorems of
\S\ref{sec:semantics}, it can do nothing else: no fork is derivable from
the empty record (Corollary~\ref{cor:noforkexnihilo}), and every derivable
fork holds in every model realizing the record
(Theorem~\ref{thm:provenance}).

\begin{remark}[Variables hypothesize, constants record]\label{rem:slogan}
A separation hypothesized for the sake of argument is a $\Gamma$-variable:
it is $\Pi$-bindable, and it shows up in the \emph{type} of the resulting
term --- a conditional claim. A separation that has been measured is an
$\MC$-constant: it underwrites \emph{categorical} claims, and the
distinction is real exactly where the metatheorems of
\S\ref{sec:semantics} live --- closed theorems over a fixed record. We
also state it honestly: within one derivation a constant behaves like a
hypothesis, just as a theory's axioms behave like premises of its
proofs. The difference is that axioms cannot be discharged --- no $\Pi$
binds an $\MC$-entry, so no theorem internalizes a counterfactual
record. What certifies that the record itself is honest is
meta-theoretic and stated as such
(Theorem~\ref{thm:recordconsistency}).
\end{remark}
\section{Canonicity and the fork}\label{sec:grades}

Section~\ref{sec:calculus} supplied raw material and asserted nothing: hom-types
need not be inhabited, triangles need not exist, and apartness has no
premise-free introduction. The system so far cannot even say that composition
succeeds. The poverty is deliberate. The content of the logic lives in two
derived classifications of a filler space --- the \emph{grades} --- and in the
theorems governing their interaction across regimes.

\subsection{The clash lemma}

\begin{lemma}[Per-regime clash]\label{lem:path-apart}
For $S$ semantic, $u, v : S$ and any regime $V$:
\[
\mathsf{clash}_V : (u \simV{V} v) \to (u \apV{V} v) \to \emp .
\]
\end{lemma}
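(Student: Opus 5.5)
The plan is a direct construction of a term by two $\lambda$-abstractions followed by one application each of \textsc{$\#$-ext} and \textsc{$\#$-irrefl}. Fix $S$ semantic, $u, v : S$ and a regime $V$. Work in the context extended by hypotheses $p : u \simV{V} v$ and $\delta : u \apV{V} v$; these are $\Gamma$-variables, not $\MC$-constants, so they may be $\lambda$-abstracted at the end, in keeping with Remark~\ref{rem:slogan}. The goal is then to produce a term of type $\emp$ in this extended context.

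The key step is to use \textsc{$\#$-ext} to move the left endpoint of $\delta$ along $p$. Apply \textsc{$\simeq$-sym} to obtain $p^{\dagger} : v \simV{V} u$. Since \textsc{$\#$-ext} transports the \emph{left} argument of an apartness along an indiscernibility whose source is that left argument, I first flip $\delta$ by \textsc{$\#$-sym} to get $\delta^{\dagger} : v \apV{V} u$. Then \textsc{$\#$-ext} applied to $p^{\dagger} : v \simV{V} u$ and $\delta^{\dagger} : v \apV{V} u$ yields $\mathsf{ext}(p^{\dagger}, \delta^{\dagger}) : u \apV{V} u$.

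There is an equivalent route: apply \textsc{$\#$-ext} to $p : u \simV{V} v$ and $\delta : u \apV{V} v$ directly, which gives $v \apV{V} v$. Either way we obtain a self-apartness at the single regime $V$. Feeding it to \textsc{$\#$-irrefl} produces $\mathsf{absurd}(\mathsf{ext}(p,\delta)) : \emp$. Abstracting twice with $\Pi$-introduction gives
\[
\mathsf{clash}_V :\equiv \lambda p.\,\lambda \delta.\,\mathsf{absurd}(\mathsf{ext}(p,\delta)) : (u \simV{V} v) \to (u \apV{V} v) \to \emp .
\]

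The only point needing care, and the main obstacle, is the side conditions rather than the combinatorics. Each relation former must be well formed, which requires $u, v$ to be terms of a semantic type; open terms are allowed, as stated in Definition~\ref{def:semtype}. The indiscernibility and the apartness must also be at the \emph{same} regime $V$, since \textsc{$\#$-ext} is only available intra-regime. The statement fixes a single $V$, so this is met. No use of $\MC$, \textsc{$\#$-cotrans}, the monotonicity rules, or truncation elimination is needed, because the conclusion $\emp$ is produced directly.
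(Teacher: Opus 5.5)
Your proof is correct and is essentially the paper's. The paper applies \textsc{$\#$-ext} at $(u,u',v) := (u,v,v)$, with premises exactly $p$ and $\delta$, to obtain $v \apV{V} v$ and then closes with \textsc{$\#$-irrefl}, which is your final term $\lambda p.\,\lambda\delta.\,\mathsf{absurd}(\mathsf{ext}(p,\delta))$; your first route via $p^{\dagger}$ and $\delta^{\dagger}$ is also valid, but as the paper notes, that symmetry shuffle is unnecessary.
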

\begin{proof}
Given $p : u \simV{V} v$ and $\delta : u \apV{V} v$, apply
\textsc{$\#$-ext} instantiated at $(u, u', v) := (u, v, v)$ --- its premises
are exactly $p$ and $\delta$ --- to obtain $v \apV{V} v$; conclude by
\textsc{$\#$-irrefl}. The left-handed \textsc{$\#$-ext} suffices directly; no
symmetry shuffle is needed.
\end{proof}

In words: no single instrument may hold, at once, a deformation of one
reading into another and a separation between them. Across \emph{different}
instruments no such law holds --- that is Theorem~\ref{thm:crossregime},
and it is a feature.

\begin{corollary}[id-clash]\label{cor:idclash}
$(u =_S v) \to (u \apV{V} v) \to \emp$, by Lemma~\ref{lem:idtoregime} and
$\mathsf{clash}_V$. (Under the intended discrete interpretation of identity,
\S\ref{sec:semantics}: a recorded separation refutes literal pointwise
identity.)
\end{corollary}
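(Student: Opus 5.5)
The plan is to compose Lemma~\ref{lem:idtoregime} with the per-regime clash of Lemma~\ref{lem:path-apart}; the corollary is exactly this composite. Given $e : u =_S v$ and $\delta : u \apV{V} v$, I would form $\mathsf{itr}_V(e) : u \simV{V} v$. Lemma~\ref{lem:idtoregime} provides $\mathsf{itr}_V$ at every regime, so I instantiate it at the same $V$ that indexes the apartness. I then apply $\mathsf{clash}_V$ to the pair $(\mathsf{itr}_V(e), \delta)$ to obtain an inhabitant of $\emp$. So the proof term is
\[
\lambda e.\,\lambda \delta.\;\mathsf{clash}_V(\mathsf{itr}_V(e))(\delta) \;:\; (u =_S v) \to (u \apV{V} v) \to \emp .
\]

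The only thing to check is that the typing side conditions line up. First, $S$ must be semantic, so that \textsc{$\simeq$-form} and \textsc{$\#$-form} apply and both lemmas are available. This is inherited from the hypothesis on $u, v : S$ in Lemma~\ref{lem:path-apart}. Second, the regime indices must match: the indiscernibility produced by $\mathsf{itr}_V$ is at $V$, which is exactly what $\mathsf{clash}_V$ consumes. No \textsc{$\simeq$-mono} or \textsc{$\#$-mono} step is needed.

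If one instead wanted the clash at an arbitrary apartness regime $V'$, this still goes through without monotonicity. Identity yields indiscernibility at \emph{every} regime, so one instantiates $\mathsf{itr}_{V'}$ directly. This is worth noting because it is the feature distinguishing identity from a recorded $\simV{W}$ connection. A connection recorded at $W$ gives nothing at finer regimes, and that gap is exactly what Theorem~\ref{thm:crossregime} exploits.

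I expect no real obstacle. The one point deserving a sentence is that the path induction in Lemma~\ref{lem:idtoregime} ranges over open terms. This means the composite is a legitimate term in any $\Gamma$ and any $\MC$: weakening of $\MC$ is admissible, and no $\MC$-entry is used. The parenthetical gloss about the discrete interpretation of identity is semantic commentary on \S\ref{sec:semantics}, not part of the syntactic claim, so I would not try to prove it here.
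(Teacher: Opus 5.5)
Your proposal is correct and is the paper's own argument: instantiate $\mathsf{itr}_V$ from Lemma~\ref{lem:idtoregime} at the same regime $V$, then pass the resulting $u \simV{V} v$ together with $\delta$ to $\mathsf{clash}_V$. Your side remarks are also accurate: no monotonicity step is needed, and the discrete-identity parenthetical is semantic commentary rather than part of the syntactic claim.
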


\subsection{The grades, exclusion, and persistence}

\begin{definition}[Canonicity at a regime]\label{def:canon}
$\Canon_V(\sigma) \;:\equiv\; \big\|\Fill(\sigma)\big\| \times
\prod_{\varphi, \psi \,:\, \Fill(\sigma)} \varphi \simV{V} \psi$
--- an h-proposition: the filler space is inhabited and the instrument $V$
cannot tell any two of its inhabitants apart. \emph{The geometry decides;
no semantic decision occurs.} Operationally: every continuation the system
produces at $\sigma$ is, at this resolution, a deformation of every other
--- three hundred draws, one cluster, no news.
\end{definition}

\begin{definition}[Fork at a regime]\label{def:fork}
$\Fork_V(\sigma) \;:\equiv\; \sum_{\varphi, \psi \,:\, \Fill(\sigma)}
\varphi \apV{V} \psi$.
A fork witness is a triple --- two complete fillers, each a composite with
its warrant, and a recorded separation between them. $\Fork_V$ is
\emph{deliberately not truncated}: the verdict names its witnesses; the
truncation $\|\Fork_V(\sigma)\|$ is the judgment of record. (Exclusion and
persistence below use plain $\Sigma$-elimination and are insensitive to the
choice.) In the miniature: were $\varphi_1, \varphi_3$ the fillers of a
fixed horn, $(\varphi_1, \varphi_3, \delta_1)$ would be the paradigm
inhabitant --- two readings of one passage, measurably apart; the
precise gap between sampled completions and horn-fillers is the scoping
question \S\ref{sec:empirics} addresses.
\end{definition}

\begin{theorem}[Exclusion at a regime]\label{thm:exclusion}
$\Canon_V(\sigma) \to \Fork_V(\sigma) \to \emp$.
\end{theorem}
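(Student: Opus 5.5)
The plan is to prove the statement as a direct composition of the clash lemma with the two definitions, using only $\Sigma$- and $\times$-elimination. Assume $c : \Canon_V(\sigma)$ and $k : \Fork_V(\sigma)$, and aim to produce a term of $\emp$.

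First, project the second component of $c$ to obtain
\[
\pi_2(c) : \prod_{\varphi, \psi : \Fill(\sigma)} \varphi \simV{V} \psi .
\]
The truncated inhabitation component $\|\Fill(\sigma)\|$ is not needed, because $k$ itself supplies inhabitants. Next, eliminate $k$ by $\Sigma$-elimination into a triple $(\varphi, \psi, \delta)$ with $\varphi, \psi : \Fill(\sigma)$ and $\delta : \varphi \apV{V} \psi$. The target $\emp$ is an h-proposition, and $\Fork_V$ is untruncated in any case, so this elimination is unproblematic. Had we used $\|\Fork_V(\sigma)\|$ instead, truncation elimination into the h-proposition $\emp$ would still be licensed.

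Then apply the canonicity component at this pair to get $\pi_2(c)(\varphi,\psi) : \varphi \simV{V} \psi$. Finally, invoke Lemma~\ref{lem:path-apart} at the semantic type $\Fill(\sigma)$, which is semantic by Definition~\ref{def:filler}:
\[
\mathsf{clash}_V\big(\pi_2(c)(\varphi,\psi),\, \delta\big) : \emp .
\]
Abstracting over $c$ and $k$ gives the required term of type $\Canon_V(\sigma) \to \Fork_V(\sigma) \to \emp$.

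There is no real obstacle here. The only point needing care is that the clash lemma applies: the relation formers are restricted to semantic types, and $\Fill(\sigma)$ is a $\Sigma$ of a hom-type and a $\Tri$-type, hence semantic by the grammar of Definition~\ref{def:semtype}. The regime indices must also match, and they do, since both $\Canon_V$ and $\Fork_V$ are at the same $V$. This matching is exactly what fails across different regimes, which is why the theorem is stated per regime.
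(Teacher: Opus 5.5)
Your proof is correct and matches the paper's argument. You project the pairwise-indiscernibility component of $\Canon_V(\sigma)$, instantiate it at the two fillers of the fork triple, and close with $\mathsf{clash}_V$ against $\delta$, discarding the truncated inhabitation component. Your extra check that $\Fill(\sigma)$ is semantic, so that the clash lemma applies, is left implicit in the paper but is right.
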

\begin{proof}
From $(i, c) : \Canon_V(\sigma)$ and $(\varphi, \psi, \delta) :
\Fork_V(\sigma)$: $c\,\varphi\,\psi : \varphi \simV{V} \psi$; apply
$\mathsf{clash}_V$ with $\delta$. (The truncated component $i$ is
discarded; no truncation-elimination into data occurs.)
\end{proof}

Exclusion remains a \emph{theorem}, not an axiom --- the architecture
polices itself --- and it is now indexed: it forbids a single instrument
from issuing both verdicts on one horn. It says nothing about two
instruments, by design.

\begin{theorem}[Fork persistence]\label{thm:persist}
For $V' \le V$: $\Fork_V(\sigma) \to \Fork_{V'}(\sigma)$, by
\textsc{$\#$-mono} on the third component. \emph{Persistence is
monotonicity}: a fork recorded by a coarse instrument stands under every
refinement.
\end{theorem}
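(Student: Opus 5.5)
The plan is to construct the map $\Fork_V(\sigma) \to \Fork_{V'}(\sigma)$ directly, by $\Sigma$-elimination on the fork witness, rebuilding the triple with its first two components unchanged. Given $V' \le V$, take an arbitrary $w : \Fork_V(\sigma)$. By the definition of $\Fork_V$ (Definition~\ref{def:fork}) as an untruncated iterated $\Sigma$-type, $w$ decomposes as $(\varphi, \psi, \delta)$ with $\varphi, \psi : \Fill(\sigma)$ and $\delta : \varphi \apV{V} \psi$. Because $\Fork$ is not truncated, there is no obstruction from eliminating a truncation into data.

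Next, apply \textsc{$\#$-mono} to $\delta$ together with the meta-level premise $V' \le V$, which gives $\mathsf{down}(\delta) : \varphi \apV{V'} \psi$. The formation premises are met: $\Fill(\sigma)$ is semantic by Definition~\ref{def:filler} and the grammar of Definition~\ref{def:semtype}, so $\varphi \apV{V'} \psi$ is well-formed by \textsc{$\#$-form}. The fillers themselves do not depend on the regime, since $\Fill(\sigma)$ is regime-free. So the triple $(\varphi, \psi, \mathsf{down}(\delta))$ inhabits $\Fork_{V'}(\sigma)$, and the desired term is
\[
\lambda w.\ \mathsf{let}\ (\varphi,\psi,\delta) := w\ \mathsf{in}\ (\varphi,\psi,\mathsf{down}(\delta)).
\]
For the truncated verdict of record, functoriality of $\|\cdot\|$ lifts this map to $\|\Fork_V(\sigma)\| \to \|\Fork_{V'}(\sigma)\|$.

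There is no real obstacle here. The only point needing care is bookkeeping: regimes are meta-level indices rather than internal types, so the construction is a family of terms indexed by the meta-level inequality $V' \le V$, not an internal $\Pi$ over regimes. I would also remark that the argument uses \textsc{$\#$-mono} alone, and that the measurement context $\MC$ plays no role. The map holds under any record, which is what makes it a persistence law rather than a recorded fact.
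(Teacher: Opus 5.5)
Your proposal is correct and matches the paper's proof: eliminate the untruncated $\Sigma$, apply \textsc{$\#$-mono} to the separation witness in the third component, and repackage it with the same two fillers. Your lift to the truncated verdict via functoriality of $\|\cdot\|$ is a harmless addition, in line with the paper's remark that persistence does not depend on whether $\Fork_V$ is truncated.
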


\begin{proposition}[Canonicity is not persistent]\label{prop:asym}
There is a model of TTS (Definition~\ref{def:sas}) and a horn $\sigma$ with
$\Canon_{\varepsilon}(\sigma)$ and $\Fork_{\varepsilon'}(\sigma)$ for
resolutions $\varepsilon' < \varepsilon$.
\end{proposition}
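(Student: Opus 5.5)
The plan is to build an explicit finite model, in the spirit of Example~\ref{ex:miniature}, in which a single horn has exactly two fillers at a known metric distance. Regimes will be cosine-style resolutions $\varepsilon$ on one metric $d$, with $V' \le V$ iff $\varepsilon' \le \varepsilon$. Indiscernibility at $\varepsilon$ is interpreted as $\varepsilon$-chain-connectedness, and apartness at $\varepsilon$ as lying in distinct $\varepsilon$-chain components.

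Concretely, take a base semantic type $A$ interpreted as a simplicial set with vertices $a,b,c$, edges $f:a\to b$ and $g:b\to c$, two edges $h_1,h_2:a\to c$, and two $2$-simplices witnessing $\Tri_A(f,g;h_1)$ and $\Tri_A(f,g;h_2)$. Add the degenerate simplices needed for \textsc{hom-degen} and \textsc{tri-degen-l/r}. Then $\Fill(\sigma)$ for $\sigma=(a,b,c,f,g)$ has exactly two points, $\varphi_1=(h_1,t_1)$ and $\varphi_2=(h_2,t_2)$. Equip it with $d(\varphi_1,\varphi_2)=\eta$ for some fixed $\eta$, say $\eta=0.5$, echoing the miniature. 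Choose resolutions $\varepsilon'<\eta<\varepsilon$, for instance $\varepsilon'=0.4$ and $\varepsilon=0.6$.

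The argument then has three steps.

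\begin{enumerate}
\item Verify that this structure is a model in the sense of Definition~\ref{def:sas}. This amounts to checking the rule conditions, and only the regime rules need thought. Chain-connectedness is reflexive, symmetric, transitive and upward monotone in $\varepsilon$, which gives the $\simeq$-rules. Lying in distinct components is irreflexive and symmetric. It is downward monotone because components only split under refinement. It is extensional over same-regime chains, and cotransitive because any $w$ lies in some component that differs from at least one of the two given components. Identity is interpreted discretely, so Lemma~\ref{lem:idtoregime} holds automatically.
\item Read off $\Canon_\varepsilon(\sigma)$. $\Fill(\sigma)$ is inhabited by $\varphi_1$. The only nontrivial pair is $(\varphi_1,\varphi_2)$, and it is $\varepsilon$-chain-connected in one step because $\eta<\varepsilon$.
\item Read off $\Fork_{\varepsilon'}(\sigma)$. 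The witness is $(\varphi_1,\varphi_2,\delta)$, where $\delta$ is the separation that holds because $\eta>\varepsilon'$ puts the two fillers in singleton components. If the semantics interprets apartness only through records, take the model to realize $\MC=(\delta:\varphi_1\apV{\varepsilon'}\varphi_2)$. That record is honest, since the interpretation validates it.
\end{enumerate}

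The main obstacle is step 1, and specifically the way regimes act on derived semantic types. $\Fill(\sigma)$ is a $\Sigma$-type, so its metric and its regime relations must be the ones the semantics assigns to $\Sigma$, presumably induced from the hom and $\Tri$ components. I would first check that Definition~\ref{def:sas} allows an arbitrary metric on the $2$-simplices, or on the edges $h_i$, and then set the distances there so that the induced distance on fillers is $\eta$. If the $\Sigma$-metric is a max or sum of component distances, I would make $t_1,t_2$ coincide in distance and put all the separation on $h_1,h_2$.

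A fallback for making the point explicit is to observe that at $\varepsilon$ the fork fails, by Theorem~\ref{thm:exclusion}, while at $\varepsilon'$ canonicity fails, also by Theorem~\ref{thm:exclusion}. So the example exhibits canonicity that does not persist under refinement, in contrast with Theorem~\ref{thm:persist}.
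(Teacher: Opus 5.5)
Your construction follows the paper's outline: a small metric model in which one horn has exactly two fillers at a distance in $(\varepsilon',\varepsilon]$. The single step then gives $\Canon_\varepsilon(\sigma)$, and the singleton $\varepsilon'$-components give $\Fork_{\varepsilon'}(\sigma)$. The one design difference is that you use two composites $h_1\neq h_2$, each with its own warrant. The paper uses a single composite $h$ carrying two warrants $t_1,t_2$.

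That difference is exactly where your handling of the flagged obstacle goes wrong. Condition (M2) makes faces strongly extensional (nonexpansive in metric models), and $d_1 t_i = h_i$. So you are forced to have $d(h_1,h_2)\le d(t_1,t_2)$: a separation of the composites at $\varepsilon'$ must already be a separation of the warrants. Your remedy for the max-metric case is to let $t_1,t_2$ ``coincide in distance'' and put all the separation on $h_1,h_2$. That violates (M2), since it makes the $d_1$-faces farther apart than the triangles themselves.

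The repair is easy. Take $d(t_1,t_2)=\eta$ and $d(h_1,h_2)\le\eta$ (say equal). The total-space max metric prescribed in (M1) then gives the two fillers distance $\eta$, and that distance is governed by the warrants. One clean way to extend this to all simplicial levels:
\begin{itemize}
\item Let $\tau$ be the simplicial involution swapping $h_1\leftrightarrow h_2$ and $t_1\leftrightarrow t_2$.
\item Set $d(x,y)=\eta$ if $y=\tau x\neq x$, and $d(x,y)=D$ for some constant $D>\varepsilon$ otherwise.
\item Faces send matched pairs to equal or matched pairs, so they are nonexpansive.
\item Degeneracies are injective and commute with $\tau$, so they are isometric.
\item The triangle inequality holds because each point has at most one $\eta$-neighbour.
\end{itemize}

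The paper's shared-composite choice avoids all of this: the two triangles have identical faces, so the face conditions are trivial. It also displays the configuration the paper wants to exhibit, namely the same arrival with two measurably distinct modes of arriving. That is the configuration any readout factoring through the first projection of $\Fill$ cannot see (\S\ref{ssec:readout}). Once repaired, your variant proves the proposition equally well, but it loses that illustrative point.

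Finally, the measurement context in your step 3 is unnecessary. The proposition concerns the model's own relations, not derivability, so no record enters.
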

\begin{proof}
Take a carrier with one composite $h : \homx{a}{c}$ and exactly two
warrants: $\Fill(\sigma) = \{(h, t_1), (h, t_2)\}$, with
$d(t_1, t_2) = r \in (\varepsilon', \varepsilon]$, every other point of the
ambient triangle level at distance $> \varepsilon$ from both, $f, g$
non-degenerate; complete to a model of Definition~\ref{def:sas} (discrete
identity; intrinsic component structures; the shared composite makes the
face conditions trivial). At $\varepsilon$: the single
$\le \varepsilon$ step joins the two fillers, so $\Canon_\varepsilon$
holds. At $\varepsilon'$: the two fillers lie in different
$\varepsilon'$-components, so $\Fork_{\varepsilon'}$ holds (and
$\Canon_{\varepsilon'}$ fails, by Exclusion).
\end{proof}

Note what the witnessing configuration is: \emph{two separated warrants
over one composite} --- the same arrival, two measurably distinct modes of
arriving. The asymmetry of the two grades is thus structural: forks are
robust news; canonicity is always provisional, an artifact of the current
resolution that refinement may revoke.

\subsection{The absolute limit, and what records can never establish}

\begin{lemma}[Absolute limit]\label{lem:abslimit}
$\isContr\big(\Fill(\sigma)\big) \to \Canon_V(\sigma)$ for \emph{every}
regime $V$: the center inhabits $\|\Fill\|$; contractibility gives pairwise
identity; Lemma~\ref{lem:idtoregime} converts each path to
$V$-indiscernibility.
\end{lemma}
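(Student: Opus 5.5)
The plan is to build the function $\isContr(\Fill(\sigma)) \to \Canon_V(\sigma)$ directly, one component of $\Canon_V(\sigma)$ at a time. We fix an arbitrary regime $V$ and take a witness $(z, k) : \isContr(\Fill(\sigma))$. Here $z : \Fill(\sigma)$ is the center and $k : \prod_{\varphi : \Fill(\sigma)} z = \varphi$ is the contraction. Since $\Canon_V(\sigma)$ is a product, we supply an inhabitant of $\|\Fill(\sigma)\|$ and a dependent function into the pairwise indiscernibility family. Because $V$ is a meta-level index (Definition~\ref{def:regime}), the construction is uniform in $V$, and ``for every regime'' is simply schematic.

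For the first component we take $|z| : \|\Fill(\sigma)\|$ using the truncation constructor. For the second, given $\varphi, \psi : \Fill(\sigma)$, we form the path $(k\,\varphi)^{-1} \cdot k\,\psi : \varphi =_{\Fill(\sigma)} \psi$ using the ordinary MLTT path operations. We then apply $\mathsf{itr}_V$ from Lemma~\ref{lem:idtoregime} to obtain $\varphi \simV{V} \psi$. No elimination out of a truncation occurs, since $z$ is used as untruncated data. So the result is an honest term and not merely a mere-existence claim.

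The only point needing care is that Lemma~\ref{lem:idtoregime} and \textsc{$\simeq$-form} apply only to semantic types. We must therefore check that $\Fill(\sigma)$ is semantic. It is a $\Sigma$ over $\hom{S}{a}{c}$ of $\Tri_S(f,g;h)$, so it lies in the grammar of Definition~\ref{def:semtype}, as Definition~\ref{def:filler} already notes. The identity type $\varphi = \psi$ is not itself semantic, but it is only the domain of $\mathsf{itr}_V$, and that is permitted.

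A second check is that the path induction inside $\mathsf{itr}_V$ runs over open terms of $\Fill(\sigma)$, which the lemma's proof licenses. I expect no genuine obstacle here. The main thing to verify is this typing bookkeeping, after which the argument is a direct composition of earlier results.
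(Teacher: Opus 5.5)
Your proposal is correct and follows the paper's proof exactly: the center gives $|z| : \|\Fill(\sigma)\|$, the contraction gives the path $(k\,\varphi)^{-1}\cdot k\,\psi : \varphi = \psi$, and $\mathsf{itr}_V$ from Lemma~\ref{lem:idtoregime} turns that path into $\varphi \simV{V} \psi$. Your additional checks are accurate and only make explicit what the paper leaves implicit: $\Fill(\sigma)$ is semantic, so Lemma~\ref{lem:idtoregime} applies, and no truncation elimination is needed.
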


In the intended discrete-identity models (\S\ref{sec:semantics}),
$\isContr(\Fill)$ is the \emph{strict} Segal condition --- a unique filler
on the nose, the nerve-of-a-category case --- so the comparison with
synthetic $\infty$-category theory \citep{riehlshulman2017} is an analogy
of form together with this strict limit, and we claim nothing more. The
graded judgments $\Canon_V$ are the working notions; the absolute notion is
their common upper bound.

\begin{remark}[Canonicity is never derivable from a finite record]
\label{rem:canonunder}
No \emph{consistent} finite measurement context yields a closed term of
$\Canon_V(\sigma)$. (Consistency matters: an explosive record ---
$\delta : u \apV{V} u$ --- yields everything by
\textsc{$\#$-irrefl} and $\emp$-elimination; honest records are
consistent by Theorem~\ref{thm:recordconsistency}.) Two reasons:
(i) $\MC$-entries are pairwise verdicts
on \emph{named} closed terms, and exhaustiveness of a record is not an
$\MC$-expressible condition; (ii) $\mathsf{hom}$ and $\Tri$ have no
eliminators, so $\Fill(\sigma)$ admits no internal enumeration, and the
$\Pi$ over all fillers is unreachable from closed instances. (A
hypothesized enumeration $e : \Fill(\sigma) \simeq \mathsf{Fin}\,n$ is a
$\Gamma$-variable route --- the \emph{variables hypothesize} side of the
slogan.) The calculus issues fork verdicts from records; canonicity is the
model-side limit notion that experiments \emph{estimate} by
failure-to-refute. \textbf{Forks are verifiable; canonicity is
falsifiable-only.} This asymmetry is not a defect: it is the
falsifiability structure the empirical bridge of \S\ref{sec:empirics}
requires.

The same poverty has a second face worth stating: the theory can say
``these two senses are apart'' but never ``there are exactly three
senses at this horn.'' We hold this to be principled, not accidental.
A cardinality claim is a \emph{different kind of verdict} --- it
asserts exhaustiveness, and exhaustiveness is precisely what no
pairwise instrument delivers; an instrument that could count would be
a new primitive (an enumeration channel), not a refinement of the
ones we have. The calculus accordingly treats counting the way it
treats canonicity: hypothesize it if you wish (an enumeration
$e : \Fill(\sigma) \simeq \mathsf{Fin}\,n$ is a perfectly good
$\Gamma$-variable, and conditional theorems follow), but no record
formed from pairwise verdicts will ever discharge it. A richer
theory with enumeration-grade records is conceivable; it would be
answerable to a different experimental practice.
\end{remark}

\subsection{Cross-regime consistency}\label{ssec:crossregime}

The question Frege's puzzle will pose in \S\ref{sec:sense}: can a
connection discovered by one instrument coexist with a separation recorded
by another? The answer is a theorem, and it depends on nothing but the
order relation between the instruments.

\begin{theorem}[Disagreement forces non-finer connection]
\label{thm:crossregime}
\;
\begin{enumerate}[label=(\alph*)]
\item (Universal.) For all $u, v : S$: if $W \le V$, then
  $(u \simV{W} v) \to (u \apV{V} v) \to \emp$.
\item (Schematic consistency.) For $u, v$ \emph{fresh} constants of a base
  semantic type and $\MC$ exactly
  $\{\gamma : u \simV{W} v,\; \delta : u \apV{V} v\}$:
  $\mathrm{TTS}(\MC)$ is consistent if and only if $W \not\le V$ ---
  that is, $V < W$ strictly, or $V, W$ incomparable.
\end{enumerate}
\end{theorem}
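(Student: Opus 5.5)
Part (a) is a direct derivation inside the calculus. Given $p : u \simV{W} v$ and $\delta : u \apV{V} v$ with $W \le V$, apply \textsc{$\simeq$-mono} to $p$ along $W \le V$ to obtain $\mathsf{up}(p) : u \simV{V} v$. Then feed $\mathsf{up}(p)$ and $\delta$ to $\mathsf{clash}_V$ of Lemma~\ref{lem:path-apart}. There is an equally short dual route: push $\delta$ down by \textsc{$\#$-mono} to $u \apV{W} v$ and clash at $W$. Either suffices, so (a) costs one rule application plus the clash lemma.

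For (b), the ``only if'' direction is immediate from (a). If $W \le V$, cite $\gamma$ and $\delta$ by ($\MC$-ax) and apply the function of (a) to obtain a closed term of $\emp$, so $\mathrm{TTS}(\MC)$ is inconsistent.

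The ``if'' direction needs a model. Assuming $W \not\le V$, I will exhibit a model of Definition~\ref{def:sas} realizing $\MC$ in which $\emp$ is empty, and conclude by soundness. The construction is a two-point carrier $\{u,v\}$ at the base type, with discrete identity and no non-degenerate homs or triangles beyond what the signature forces. The regime-indexed relations are interpreted so that $u \simeq_X v$ holds exactly when $W \le X$, and $u \#_X v$ holds exactly when $X \le V$. Reflexive indiscernibility holds at every regime.

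I must then check each rule.
\begin{itemize}
\item \textsc{$\simeq$-mono}: the set $\{X : W \le X\}$ is an up-set, so the rule holds.
\item \textsc{$\#$-mono}: the set $\{X : X \le V\}$ is a down-set, so the rule holds.
\item Symmetry holds by construction.
\item \textsc{$\#$-irrefl}: no point is apart from itself, so the rule holds.
\item \textsc{$\#$-ext} at regime $X$: this requires that $u \simeq_X v$ and $u \#_X v$ never hold simultaneously. Simultaneity would need $W \le X \le V$, hence $W \le V$ by transitivity, which is excluded by hypothesis.
\item \textsc{$\#$-cotrans}: on a two-point carrier the third point is $u$ or $v$, so the required disjunct is the given separation or its symmetric form.
\end{itemize}
At higher semantic types (homs, triangles, $\Sigma$), I take the relations to be the intrinsic component structures. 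These types are degenerate, so they are inherited trivially.

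The main obstacle is making this ad hoc relation assignment an honest instance of Definition~\ref{def:sas}, with intrinsic component structure per regime, rather than a bespoke interpretation. If the definition demands metric-with-resolution regimes, I would instead realize the poset abstractly, as a family of partitions of $\{u,v\}$ indexed by $\mathcal{V}$ that is monotone in the two senses above. The key observation is that the union of the up-set at $W$ and the down-set at $V$ covers the regimes without overlap on the relevant pair precisely because $W \not\le V$. Regimes lying in neither set get the partition that neither connects nor separates $u$ from $v$, if the semantics allows ``unresolved'' pairs. If instead every regime must decide each pair, I would assign such regimes to whichever side preserves monotonicity, choosing separation for $X \le V$ and connection otherwise. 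I would then verify that the ``connection otherwise'' set is still an up-set disjoint from $\{X : X \le V\}$.

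Freshness of $u,v$ is what lets me interpret them as these two new points without disturbing the rest of the signature. Soundness (Appendix~\ref{app:soundness}) then yields consistency.
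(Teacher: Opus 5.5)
Your proposal is correct and matches the paper's proof. Part (a) is \textsc{$\simeq$-mono} followed by $\mathsf{clash}_V$, and the refutation half of (b) is (a) applied to the ($\MC$-ax) citations. Your two-point model, with $u \simeq_X v$ iff $W \le X$ and $u \#_X v$ iff $X \le V$, is exactly the paper's uniform recipe, completed as a constant simplicial object with discrete identity and concluded by soundness. At the remaining regimes the paper takes your ``unresolved'' option (diagonal indiscernibility, empty apartness), and it presents the resolution model and the two-regime record model only as motivating instances of that same recipe.
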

\begin{proof}
(a) \textsc{$\simeq$-mono} at $V' := W$ (side condition exactly
$W \le V$) lifts $u \simV{W} v$ to $u \simV{V} v$; apply
$\mathsf{clash}_V$. (b) Refutation when $W \le V$ is (a) via
($\MC$-ax). Consistency witnesses, with details in
Appendix~\ref{app:soundness}: for $V < W$, the resolution model on two
points at distance $r$ with $\varepsilon_V < r \le \varepsilon_W$; for
incomparable $V, W$, the two-point, two-regime record model
($\apV{V}$ the pair, $\simV{V}$ diagonal; $\simV{W}$ total, $\apV{W}$
empty). In each case the recipe
$\simV{U} := \text{total if } W \le U \text{ else diagonal}$,
$\apV{U} := \text{pair if } U \le V \text{ else empty}$ \emph{defines}
the model at every regime of the poset (the two-point descriptions are
its motivating instances): monotonicity follows from transitivity of
$\le$, and the clash-prone combination would require
$W \le U \le V$, excluded by hypothesis. Freshness matters:
instantiating $u := v$, or adding conflicting entries, refutes by
\textsc{$\#$-irrefl} or id-clash regardless of the regimes.
\end{proof}

At $V = W$ refutation wins --- it is $\mathsf{clash}_V$ itself. The
reading we will lean on: \emph{a configuration in which one instrument
connects what another separates is consistent exactly when the connecting
instrument is neither the separating one nor a refinement of it} ---
strictly coarser, or a different modality altogether. Looking harder
through the same instrument can never identify what it separated.
Example~\ref{ex:miniature} exhibits the consistent configuration live:
$\delta_1 : \varphi_1 \apV{V} \varphi_3$ at resolution $0.60$ and
$\varphi_1 \simV{W} \varphi_3$ at resolution $0.70$ stand together in
one dataset, because $V < W$ --- the connecting instrument is strictly
the coarser one.

\subsection{Readouts and pullback regime structures}\label{ssec:readout}

Instruments compose. One rarely observes fillers directly; one observes
them \emph{through} a readout --- an embedding, a projection, a summary ---
and measures in the image. The calculus internalizes this.

\begin{definition}[Readout; pullback]\label{def:readout}
A \emph{readout} is a function term $\rho : S \to R$ between semantic
types; instrument readouts are function constants of the signature
$\Sigma$. Given a regime $W$ (acting on $R$), the \emph{pullback notation}
\[
u \simV{\rho^* W} v \;:\equiv\; \rho u \simV{W} \rho v
\qquad
u \apV{\rho^* W} v \;:\equiv\; \rho u \apV{W} \rho v
\]
is licensed by unfolding --- \textsc{$\simeq$/$\#$-form} at $W$ on
$\rho u, \rho v$. A subscript $\rho^* W$, in former or $\MC$-entry
position, is an abbreviation, not a new element of the regime poset; no
other non-poset indices are admitted, and models need no clause for them.
\end{definition}

\begin{proposition}[Pullback laws]\label{prop:pullback}
$(\simV{\rho^* W}, \apV{\rho^* W})$ satisfies every single-regime rule of
\S\ref{ssec:apart} at $S$ (each inherited in one line; cotransitivity by
splitting $W$'s cotransitivity at $\rho w$). Derived order implications:
for $W' \le W$, $\;u \simV{\rho^* W'} v \to u \simV{\rho^* W} v$ and
$u \apV{\rho^* W} v \to u \apV{\rho^* W'} v$, each by $W$'s own
monotonicity rule.
\end{proposition}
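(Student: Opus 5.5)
The plan is to unfold the pullback notation of Definition~\ref{def:readout} and discharge each single-regime rule of \S\ref{ssec:apart} by one application of the corresponding rule at $W$ on the type $R$, applied to the terms $\rho u, \rho v, \rho w$. Each such term is well-typed at $R$ by $\Pi$-elimination, and $R$ is semantic. Formation follows at once: $u \simV{\rho^* W} v$ and $u \apV{\rho^* W} v$ are the types $\rho u \simV{W} \rho v$ and $\rho u \apV{W} \rho v$, which are licensed by \textsc{$\simeq$/$\#$-form} at $W$. Both are h-propositions because the formers at $W$ are.

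Next I go through the rules one at a time. Reflexivity is $\mathsf{r}_W(\rho u)$. Symmetry and transitivity are $p^\dagger$ and $p\cdot q$ taken at $W$; no compatibility check is needed, since the middle term $\rho v$ of the two composed premises is literally the same term. Irreflexivity: a witness of $u \apV{\rho^* W} u$ is a witness of $\rho u \apV{W} \rho u$, and $\mathsf{absurd}$ applies. Symmetry of apartness is $\delta^\dagger$ at $W$. Extensionality: from $p : \rho u \simV{W} \rho u'$ and $\delta : \rho u \apV{W} \rho v$, $\mathsf{ext}(p,\delta) : \rho u' \apV{W} \rho v$.

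Cotransitivity needs one step of care. Given $w : S$, I apply \textsc{$\#$-cotrans} at $W$ with third point $\rho w : R$. The result is $\|(\rho u \apV{W} \rho w) + (\rho w \apV{W} \rho v)\|$, and this is definitionally $\|(u \apV{\rho^* W} w) + (w \apV{\rho^* W} v)\|$.

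The order implications go the same way. For $W' \le W$, \textsc{$\simeq$-mono} on $\rho u \simV{W'} \rho v$ gives $\rho u \simV{W} \rho v$, and \textsc{$\#$-mono} on $\rho u \apV{W} \rho v$ gives $\rho u \apV{W'} \rho v$. Here ``$\le$'' is read only on genuine poset elements $W, W'$, as the definition insists.

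The main obstacle is not any single rule but keeping the proposition honest about what ``satisfies the rules'' means. The mono rules are stated relative to the underlying poset; they are not new rules indexed by $\rho^*W$. I therefore present them as the derived implications above rather than as instances of \textsc{$\simeq$-mono}. I also need every substitution into the $W$-rules to be legitimate over open terms, which the substitution-stability convention guarantees. Once that is fixed, each clause is a one-line term.
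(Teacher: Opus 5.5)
Your proposal is correct and follows the paper's own argument. Each single-regime rule is obtained by unfolding the pullback abbreviation and applying the corresponding rule at $W$ to $\rho u, \rho v$, with cotransitivity split at $\rho w$; the order implications come from $W$'s own \textsc{$\simeq$-mono} and \textsc{$\#$-mono}, and your refusal to treat $\rho^* W$ as a poset element matches Definition~\ref{def:readout}'s insistence that the pullback subscript is only an abbreviation.
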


The consequence for measurement is structural: \emph{a separation observed
in the image is a separation of the originals at the pullback index} ---
the composite instrument is itself an instrument. Its blind spots are
equally structural: if $\rho$ factors through the first projection of
$\Fill$, the pullback index cannot separate fillers differing only in
their warrant. Both facts organize the empirical section
(\S\ref{sec:empirics}).
\section{Sense and reference}\label{sec:sense}

\subsection{The identification}\label{ssec:ident}

\begin{definition}[Reference; sense]\label{def:sense}
Let $\sigma = (f, g)$ be a horn in a semantic type $A$.
\begin{itemize}
\item The \emph{reference} of $\sigma$ is its boundary data:
  $\Refr(\sigma) :\equiv (a, b, c, f, g)$ --- what is already fixed
  and must be respected by any completion.
\item A \emph{sense} at $\sigma$ is an inhabitant
  $\varphi : \Fill_A(\sigma)$ --- a way of completing what is given.
  We write $\Sense_A(\sigma) :\equiv \Fill_A(\sigma)$, the same type
  under its semantic description.
\item Senses $\varphi, \psi$ are \emph{hyperintensionally distinct at
  the regime $V$} when $\varphi \apV{V} \psi$ is inhabited. There is no
  sense-distinctness \emph{simpliciter}: only mathematical identity (the
  absolute limit of Lemma~\ref{lem:abslimit}) and graded distinctness
  relative to an instrument. This relativity is not a concession; it is
  the finding.
\end{itemize}
\end{definition}

The Fregean doctrine falls out as structure --- and the first clause
is a construction, not a typing convention. Let
$\mathsf{Horn}(A)$ be the type of inner $2$-horns
(Definition~\ref{def:filler}: a $\Sigma$-type of boundary data) and
form the \emph{total space of senses}
\[
\Sense(A) \;:\equiv\; \sum_{\sigma : \mathsf{Horn}(A)}
\Fill_A(\sigma),
\qquad
\pi : \Sense(A) \to \mathsf{Horn}(A).
\]
A sense, taken absolutely, is a point of the total space; it
\emph{carries} its reference as its image under $\pi$.

\begin{enumerate}[label=(\alph*)]
\item \emph{Sense determines reference, not conversely.} Determination
  is the projection $\pi$. The converse fails when
  $\Fork_V(\sigma)$ is inhabited: one fibre, several separated
  modes of completion.
\item \emph{No sense without a horn.} Sense is not an intrinsic halo
  attached to an expression; it is relative to a compositional demand.
  The same lexical material in a different horn has a different filler
  space. (This is the type-theoretic rendering of the context
  principle: only in the context of a compositional demand does an
  expression have a sense.)
\item \emph{Where composition is canonical, sense collapses into
  reference.} If $\Canon_V(\sigma)$ holds, all senses at $\sigma$ are
  $V$-indiscernible: one way of arriving, as far as the instrument can
  tell. The sense/reference distinction is literally invisible in a Kan
  world --- which explains why a logic adequate to mathematics, where
  canonicity is the global ambient assumption, kept failing to be a
  logic of natural-language sense.
\end{enumerate}

\subsection{The informativeness of identity, constructively}\label{ssec:inform}

Let $\varphi, \psi : \Sense_A(\sigma)$ be two senses at one horn ---
two modes of completion of one compositional reference --- and let the
record stand at $\delta : \varphi \apV{V} \psi$: some instrument $V$
has separated the two ways of arriving. What can it mean to
\emph{discover} that they are one?

Not literal identity: by id-clash (Corollary~\ref{cor:idclash}), the
recorded separation refutes $\varphi = \psi$ outright --- two
measurably distinct completions are not the same completion, and no
discovery will make them so. What is discovered is
\emph{indiscernibility at another instrument}: a record entry
$\gamma : \varphi \simV{W} \psi$ --- under the new instrument $W$, the
two modes of arriving cannot be told apart. And here the calculus
delivers its sharpest philosophical result. Refutation is universal: by
Theorem~\ref{thm:crossregime}(a), whenever $W \le V$ --- the
discovering instrument is the separating one or a refinement of it ---
the configuration $\{\varphi \apV{V} \psi,\ \varphi \simV{W} \psi\}$
explodes. Consistency in the remaining cases ($W \not\le V$: strictly
coarser, or incomparable --- a different modality altogether) is
established by Theorem~\ref{thm:crossregime}(b) for named base points,
and for filler pairs in the strictly-coarser case by the model of
Proposition~\ref{prop:asym}; the incomparable filler case follows the
same recipe. Looking harder through the same instrument, or any
refinement of it, can never identify what it separated.

This is, we contend, the right anatomy of Frege's datum. The
informativeness of an identification is the fact that it is a
\emph{new instrument's verdict}, consistent with --- not a retraction
of --- the standing separation: the perceptual regime that
distinguishes morning-star-wise from evening-star-wise arriving is not
refuted by the astronomical regime that connects them; the two verdicts
coexist because the instruments are incomparable. Cognitive
significance is the \emph{regime profile} of a filler pair: which
instruments separate them, which connect them, and how those
instruments are ordered. Nothing here invokes belief contexts,
impossible worlds, or syntactic quotation: the hyperintensional grain
is carried by the graded geometry of $\Sense_A(\sigma)$ itself.

\begin{remark}[Scope: from fillers to names]\label{rem:hesperus}
The celestial example involves more than the filler-level machinery
shown so far. ``Hesperus'' and ``Phosphorus'' are distinct
\emph{names}, and Frege's case is two expressions with one referent ---
structurally dual to the one-boundary-many-fillers configuration of
\S\ref{ssec:polysemy}, and a case any full treatment must square with
rigid designation \citep{kripke1980}. \S\ref{ssec:names} delivers the
core: the cross-regime configuration for \emph{name-states} ---
presentation-points named by the signature --- is a theorem-grade
instance of Theorem~\ref{thm:crossregime}, with the informativeness of
identity and the persistence of the modes of presentation as
one-line consequences. What remains for a companion development is the
semantics of the public name \emph{as an expression} --- the
word-to-constant link and its behaviour under predication --- together
with rigidity in full.
\end{remark}

\subsection{Worked example: lexical ambiguity as a fork}
\label{ssec:polysemy}

Let $A$ be a semantic type for an unfolding discourse, and consider
the fragment
\begin{quote}
\emph{``\dots river \dots bank \dots''} \qquad versus \qquad
\emph{``\dots deposit \dots bank \dots''}
\end{quote}
(\emph{Bank} is the classic case of \emph{homonymy} --- two unrelated
lexemes sharing a form; we use it for vividness. Systematic
\emph{polysemy} --- related senses of one lexeme, the copredication
cases --- poses further demands we do not address here; the
type-theoretic treatments of record are
\citep{asher2011, pustejovsky1995, chatzluo2020}, and relating
dot-type coercion to fork structure is future work.) Fix the horn
$\sigma = (f, g)$ in which $f$ is the continuation arriving at the
occurrence of \emph{bank} and $g$ the continuation departing from it
into further discourse that does not itself disambiguate. Then:

\begin{itemize}
\item $\Refr(\sigma)$: the lexical occurrence with its flanking
  commitments --- what any reading must respect.
\item Two fillers $\varphi_{\mathrm{fin}}, \varphi_{\mathrm{riv}} :
  \Fill_A(\sigma)$: each a composite reading \emph{with} a triangle
  witnessing that it actually composes the flanking continuations.
\item A recorded separation $\delta : \varphi_{\mathrm{fin}}
  \apV{V} \varphi_{\mathrm{riv}}$, realized in the model as the metric
  separation of the two contextual completions at the instrument $V$
  (empirically: the two readings' embeddings fall in different
  components at any reasonable resolution; \S\ref{sec:empirics}
  measures exactly this).
\end{itemize}
So $\Fork_V(\sigma)$ is inhabited: ambiguity is a fork, and a
\emph{reading} is a choice of filler. Disambiguating discourse does
not erase the fork --- by Theorem~\ref{thm:persist} a recorded fork
stands at every refinement, and nothing retracts a record --- it
\emph{outruns} it: the disambiguated discourse poses a
\emph{different} horn $\sigma'$, with richer boundary, whose filler
space may be canonical. Ambiguity is resolved not by deleting the
other sense from the language but by moving to a compositional demand
that no longer admits it. This is, recognizably, the dynamic-semantics
picture --- meaning as context-change rather than static content
\citep{kamp1981, heim1982, groenendijkstokhof1991, veltman1996} ---
transposed to a setting where the contexts are geometric.

\begin{remark}[Against sense-as-algorithm, in one paragraph]
Procedural theories individuate senses by identity of algorithms
\citep{moschovakis2006, duzi2010} or programs \citep{bentzen2020};
proof-theoretic semantics individuates the sense of a derivation by
its rule-structure \citep{tranchini2016, ayhan2021}. All locate sense
in the \emph{syntax of arriving}. TTS locates it in the \emph{space of
arrivings}: the individuation is geometric (apartness classes at a
regime), the identity criterion is instrument-relative connection, and
both are answerable to measurement rather than to a chosen formal
presentation.
\end{remark}

\subsection{Names: two routes, one object}\label{ssec:names}

Frege's own case now. Fix a base semantic type $P$ of the signature:
the \emph{presentation space} --- object-directed contextual states
(for a language model: embedded states in which the discourse is
directed at a particular object). Names enter as the signature's
\emph{term constants} valued in $P$ --- the naming discipline of
\S\ref{ssec:semtypes}, doing now for presentation-points what it did
for the miniature's completions: $h, p : P$, the Hesperus-wise and
Phosphorus-wise presentation states. Two scoping commitments, stated
plainly. First, the relation between the public \emph{word}
``Hesperus'' and the constant $h$ is presupposed, not constructed: the
presentational instrument names the clusters it separates, and the
lexicon is assumed to label them; the semantics of the name as an
expression is the companion's subject
(Remark~\ref{rem:hesperus}). Second, we do not call $h$ and $p$
\emph{senses}: by the context principle of \S\ref{ssec:ident}(b) senses live at
horns, and $h, p$ are points. Cognitive significance for names will be
the regime profile of a \emph{point pair} --- the point-level analogue
of \S\ref{ssec:inform}'s filler-pair notion; the horn-level story (name-states as
boundary vertices of predicational demands) belongs to the deferred
predication dynamics.

Two regimes carry the case, both ordinary poset regimes. $V$ is a
\emph{presentational} instrument --- it separates
presentation-clusters, as the perceptual and temporal circumstances of
observation do: evening-presentations from morning-presentations. Its
verdict stands in the record: $\delta : h \apV{V} p$ --- in any
realizing model, the two name-states lie in distinct $V$-components
(presentation-basins, when those components are canonical;
basin-hood itself, per Definition~\ref{def:basin}, is
falsifiable-only). $W$ is a different modality --- positional,
calculational: the astronomical instrument --- and the
\emph{discovery} is its record entry: $\gamma : h \simV{W} p$.

\begin{corollary}[The names configuration]\label{cor:names}
For $h, p$ fresh constants of the base semantic type $P$ ($h \neq p$
as symbols, occurring in no other $\Sigma$-declaration and no other
record entry) and $\MC$ exactly
$\{\delta : h \apV{V} p,\; \gamma : h \simV{W} p\}$:
$\mathrm{TTS}(\MC)$ is consistent if and only if $W \not\le V$. This
is Theorem~\ref{thm:crossregime}(b) verbatim --- the names
configuration is its proven scope, fresh constants of a base semantic
type.
\end{corollary}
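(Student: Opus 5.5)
The plan is to reduce Corollary~\ref{cor:names} to Theorem~\ref{thm:crossregime}(b) by checking that the names configuration meets its hypotheses exactly, and then to rerun both directions of that proof in the present setting.

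\textbf{Matching hypotheses.} Theorem~\ref{thm:crossregime}(b) requires three things: $u, v$ fresh constants of a base semantic type, and $\MC$ exactly $\{\gamma : u \simV{W} v,\ \delta : u \apV{V} v\}$. Here $P \in \mathfrak{B}$ is base semantic. The constants $h, p$ are distinct symbols occurring in no other declaration or record entry, which is precisely the freshness the theorem needs. Set $(u,v) := (h,p)$. The record $\{\delta : h \apV{V} p,\ \gamma : h \simV{W} p\}$ is the theorem's record with its entries listed in the other order. Since $\MC$-ax only tests membership, entry order is immaterial, so I would note this explicitly.

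\textbf{Refutation when $W \le V$.} Cite both entries by $\MC$-ax. Apply Theorem~\ref{thm:crossregime}(a): \textsc{$\simeq$-mono} lifts $\gamma$ to $h \simV{V} p$, and $\mathsf{clash}_V$ against $\delta$ yields a closed term of $\emp$.

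\textbf{Consistency when $W \not\le V$.} Use the recipe model from the proof of Theorem~\ref{thm:crossregime}(b), with carrier $P$ interpreted as the two points $\{h,p\}$. At each regime $U$:
\[
\simV{U} \;:=\; \begin{cases} \text{total} & \text{if } W \le U,\\ \text{diagonal} & \text{otherwise,}\end{cases}
\qquad
\apV{U} \;:=\; \begin{cases} \text{the pair} & \text{if } U \le V,\\ \text{empty} & \text{otherwise.}\end{cases}
\]
Then check the rules:
\begin{enumerate}[label=(\roman*)]
\item \textsc{$\simeq$-mono} and \textsc{$\#$-mono} hold by transitivity of $\le$.
\item Irreflexivity, symmetry and cotransitivity hold because the pair relation on two points is irreflexive, symmetric and cotransitive, and the empty relation holds trivially.
\item \textsc{$\#$-ext} could fail only where $\simV{U}$ is total and $\apV{U}$ is nonempty. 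That would require $W \le U \le V$, which the hypothesis excludes.
\end{enumerate}
Both record entries are realized: $\gamma$ at $U = W$ and $\delta$ at $U = V$. Soundness (Appendix~\ref{app:soundness}) then rules out any closed inhabitant of $\emp$.

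\textbf{Main obstacle.} The delicate step is that the model must interpret the whole signature, not just $P$. Freshness does this work: $h, p$ appear in no other declaration, so any other constants can be interpreted in separate components without touching $P$. The readout clause needs no extra care, since pullback indices are abbreviations and need no semantic clause. I would also remark on why $h \neq p$ as symbols matters. Without it the identical terms would give $h \apV{V} h$, which \textsc{$\#$-irrefl} refutes regardless of the regimes.
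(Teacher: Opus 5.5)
Your proposal is correct and is essentially the paper's argument: the corollary is Theorem~\ref{thm:crossregime}(b) at $(u,v) := (h,p)$, with the $W \le V$ direction refuted by part~(a) through ($\MC$-ax) and the $W \not\le V$ direction witnessed by the same total/diagonal, pair/empty recipe model, whose only clash-prone regime would require $W \le U \le V$. One minor precision concerns any further constants of type $P$ itself: they cannot be put in ``separate components'' of the two-point carrier, because a third point under the bare pair relation would violate cotransitivity, so do as the paper does and send them to one of the two points (all to the same point if some hom-constant links them), then complete to the constant simplicial object.
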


Where is the object? Designate an \emph{objectual} regime $W^{*}$ with
$W \le W^{*}$ --- the discovering instrument, or a coarsening of it;
which regime counts as objectual is an application-level designation,
like the choice of a metric, and the constraint $W \le W^{*}$ is
itself meaningful: \emph{objecthood is what survives coarsening from
the discovery}. The \emph{reference} of a name-state is then its
$\simV{W^{*}}$-class, and co-reference follows from the discovery by
\textsc{$\simeq$-mono}: $h \simV{W^{*}} p$ --- two routes, one
object. (Reference so defined is name-level and regime-relative;
$\Refr(\sigma)$ of Definition~\ref{def:sense} is demand-level. Frege
himself bifurcates reference by category --- the \emph{Bedeutung} of a
sentence is not that of a name --- and so do we: the two notions meet
when name-states occur as boundary vertices of a horn. The relativity
to $W^{*}$ is not a concession but the position of
Definition~\ref{def:sense}, item three, applied to reference.)

(A seam we mark rather than smooth: the ambiguity case of
\S\ref{ssec:polysemy} realizes the \emph{sense-as-filler} account
directly; the present section realizes the calculus of
\emph{instrument-relative identification}. What unites them is
Theorem~\ref{thm:crossregime} --- one consistency law, two typing
instantiations --- not a single reduction of both phenomena to
fillers.) The Fregean data now fall out as one-line derivations.

\begin{proposition}[Informativeness as record-dependence]
\label{prop:nameinfo}
(i) $\vdash_{\emptyset} \mathsf{r}_W(h) : h \simV{W} h$ for every
regime $W$: trivial identities are derivable from the \emph{empty}
record. (ii) For distinct fresh name constants $h, p : P$ there is no
closed term $\vdash_{\emptyset} t : h \simV{W} p$, nor of $h =_P p$.
\end{proposition}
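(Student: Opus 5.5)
Part (i) is immediate: apply \textsc{$\simeq$-refl} to the constant $h : P$ (well-typed in the empty context, since $h$ is a signature constant). The record is never cited, so the derivation goes through under $\vdash_{\emptyset}$ at every regime $W$. I will note explicitly that no ($\MC$-ax) step occurs.

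For part (ii) the plan is semantic: exhibit a model of Definition~\ref{def:sas} that realizes the empty record and in which the closed types $h \simV{W} p$ and $h =_P p$ are both uninhabited, then invoke soundness (\S\ref{sec:semantics}, Appendix~\ref{app:soundness}). Because $h, p$ are fresh and distinct as symbols, I may interpret them as two distinct points of a carrier for $P$. I will take the recipe from the proof of Theorem~\ref{thm:crossregime}(b) and simplify it: a two-point carrier $\{h, p\}$ with discrete identity, $\simV{U}$ diagonal at every regime $U$, and $\apV{U}$ empty (or the pair; either choice works, since the empty record imposes no constraint). All the rules of \S\ref{ssec:apart} hold trivially. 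In particular \textsc{$\simeq$-mono} preserves the diagonal, and irreflexivity, symmetry, cotransitivity and extensionality hold vacuously or because the relation is the symmetric pair. The remaining signature constants and types are completed arbitrarily, with discrete structure; freshness guarantees that no other declaration forces $h$ and $p$ together. In this model $\Sem{h \simV{W} p}$ is empty. Discrete identity makes $\Sem{h =_P p}$ empty as well, or alternatively $h =_P p$ is excluded via Lemma~\ref{lem:idtoregime}, since a closed path would yield a closed $h \simV{W} p$. Soundness then rules out both closed terms.

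The main obstacle is confirming that soundness applies to closed terms in a model with \emph{diagonal} indiscernibility at every regime: the interpretation has to validate every MLTT former, the directed rules, and \textsc{$\#$-cotrans}'s truncated sum, at all regimes simultaneously. The cleanest route is to reuse the appendix's resolution models with the two points placed at distance greater than every $\varepsilon_U$, or, for a non-metric poset, the recipe model, so that this check has already been done there. As a fallback, a syntactic erasure argument in the style of Corollary~\ref{cor:noforkexnihilo} would show that closed $\simeq$-terms from the empty record are built from $\mathsf{r}$, $\dagger$, $\cdot$ and $\mathsf{up}$ over a single term, and hence relate only syntactically equal closed terms.
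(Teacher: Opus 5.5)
Your main argument matches the paper's: (i) is \textsc{$\simeq$-refl} at the constant, and (ii) uses the two-point model with $\simV{U}$ diagonal and $\apV{U}$ empty at every regime, which realizes the empty record, followed by soundness; your extra route to $h =_P p$ via $\mathsf{itr}_W$ is a correct bonus. Drop the fallback, though: erasure sends $\simV{W}$ to $\mathbf{1}$ and so cannot separate $h$ from $p$ (the paper says exactly this), and the claim that closed $\simeq$-terms are built only from $\mathsf{r}$, $\dagger$, $\cdot$, $\mathsf{up}$ is an inversion claim that needs a normalization theorem the paper explicitly does not have.
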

\begin{proof}
(i) is \textsc{$\simeq$-refl} at the constant ($h : P$ by
Definition~\ref{def:semsig}, ``names for points''). (ii) is semantic
--- erasure cannot show it, since $(\,\simV{}\,)^{\circ} = \mathbf{1}$
--- by the two-point model of Appendix~\ref{app:soundness}:
interpret $P$ as $\{x, y\}$ with $\Sem{h} = x$, $\Sem{p} = y$, every
$\simV{U}$ the diagonal and every $\apV{U}$ empty, the rest of the
signature as in the base-consistency lemma; the model realizes the
empty record and empties both types; conclude by soundness.
\end{proof}

``Hesperus is Hesperus'' costs nothing and says nothing: it is
reflexivity, available at every instrument, from no record.
``Hesperus is Phosphorus'' is underivable from the empty record ---
\emph{no record, no identification} --- and enters only by citing an
instrument: ($\MC$-ax) on $\gamma$. The informativeness of an
identity statement is, exactly, its record-dependence.

\begin{proposition}[Segregation and transfer]\label{prop:namestransfer}
Let $e$ range over indiscernibility proofs. From the record of
Corollary~\ref{cor:names}:
\begin{enumerate}[label=(\alph*)]
\item (Segregation at the presentational regime; needs only
  $\delta$.) For every $q : P$:
  $(h \simV{V} q) \to (q \apV{V} p)$, by \textsc{$\#$-ext} at
  $(h, q, p)$: $\lambda e.\, \mathsf{ext}(e, \delta)$. Whatever the
  presentational instrument groups with $h$, it must separate from
  $p$ --- and derivability is monotone under record extension, so
  this holds whatever a \emph{consistent} record additionally
  contains (for the discovery pair, consistency is exactly
  $W \not\le V$).
\item (Transfer at the discovering regime.) For every $q : P$:
  $(h \simV{W} q) \to (p \simV{W} q)$, by
  $\lambda e.\, (\gamma^{\dagger} \cdot e)$; and
  $(h \apV{W} q) \to (p \apV{W} q)$, by
  $\lambda \delta'.\, \mathsf{ext}(\gamma, \delta')$ --- and
  symmetrically in each case, interchanging $\gamma$ and
  $\gamma^{\dagger}$.
  At the instrument that made the discovery, the two names'
  verdicts flow freely across the identification.
\end{enumerate}
\end{proposition}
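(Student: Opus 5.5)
Both parts are direct derivations in the calculus from the record of Corollary~\ref{cor:names}. I would check each proof term against the rules of \S\ref{ssec:apart}, and cite the entries $\delta : h \apV{V} p$ and $\gamma : h \simV{W} p$ by ($\MC$-ax). All formers involved are applied at the base semantic type $P$, and $q$ is an arbitrary (possibly open) term of $P$, so \textsc{$\simeq$-form} and \textsc{$\#$-form} license every type written.

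For (a), assume $e : h \simV{V} q$. The rule \textsc{$\#$-ext} at $(u,u',v) := (h,q,p)$ has premises $p : u \simV{V} u'$, here $e$, and $\delta : u \apV{V} v$, here the recorded $\delta$. It concludes $\mathsf{ext}(e,\delta) : q \apV{V} p$, and $\lambda$-abstraction over the $\Gamma$-variable $e$ gives the stated implication. Only $\delta$ is used.

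The monotonicity remark also needs a justification. Weakening of $\MC$ is admissible by Definition~\ref{def:mctx}, so the same closed term remains derivable under any extension of the record. The qualifier ``consistent'' only records that an inconsistent extension would make the statement vacuous. For the discovery pair, that consistency is exactly $W \not\le V$, which is Corollary~\ref{cor:names}.

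For (b), assume $e : h \simV{W} q$. \textsc{$\simeq$-sym} on $\gamma$ gives $\gamma^{\dagger} : p \simV{W} h$, and \textsc{$\simeq$-trans} then gives $\gamma^{\dagger}\cdot e : p \simV{W} q$. For apartness, assume $\delta' : h \apV{W} q$; then \textsc{$\#$-ext} at $(h,p,q)$ with premises $\gamma$ and $\delta'$ yields $\mathsf{ext}(\gamma,\delta') : p \apV{W} q$.

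The symmetric directions run from $p$ to $h$. There $\gamma$ replaces $\gamma^{\dagger}$ in the transitivity step, and $\gamma^{\dagger} : p \simV{W} h$ serves as the \textsc{$\#$-ext} premise.

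The only step needing care is the instantiation of \textsc{$\#$-ext}: which argument is $u'$, and that the rule is left-handed. It must be instantiated so that the conclusion has the moved point on the left, as stated. The conclusion $q \apV{V} p$ in (a) matches the rule exactly, so no \textsc{$\#$-sym} shuffle is needed. If one preferred $p \apV{V} q$, one would append $\dagger$.

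Everything else is bookkeeping: the rules are substitution-stable, and truncation never arises since all conclusions are h-propositions.
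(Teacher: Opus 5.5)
Your proposal is correct and takes the same approach as the paper. The paper's proof consists of exactly these proof terms: $\lambda e.\,\mathsf{ext}(e,\delta)$ for (a), and $\lambda e.\,(\gamma^{\dagger}\cdot e)$ and $\lambda\delta'.\,\mathsf{ext}(\gamma,\delta')$ for (b), with $\gamma$ and $\gamma^{\dagger}$ interchanged for the converse directions. Your instantiations of \textsc{$\#$-ext} and your appeal to admissible weakening of $\MC$ for the record-extension remark are both sound.
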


This pair is the anatomy of what discovery does and does not do. At
$W$, the identification licenses full information flow. At $V$,
nothing transfers --- no rule carries $\simV{W}$ down to the finer
$V$ --- and by (a) the presentational neighbourhoods of the two names
remain provably segregated. Learning that Hesperus is Phosphorus does
not, and cannot, merge the presentations: the persistence of
cognitive significance after discovery is a theorem, and it needs
only $\delta$. (A record claiming both $V$-groupings --- some closed
$q$ with $h \simV{V} q$ and $q \simV{V} p$ on record --- refutes
itself at $V$ by (a) and the clash lemma; by the contrapositive of
Theorem~\ref{thm:recordconsistency}, no honest instrument run
produces it.)

\begin{remark}[Rigidity, conditionally]\label{rem:rigidity}
Within any model, a signature constant's denotation is
environment-independent, while an open descriptive term's varies with
its environment --- a proof-theoretic shadow of the rigid/non-rigid
contrast \citep{kripke1980}. Rigidity proper --- stability across
counterfactual evaluation --- requires modal or diachronic machinery
TTS deliberately lacks (record-counterfactuals are not internalizable,
Remark~\ref{rem:provsurplus}), and is the companion's subject.
\end{remark}

\section{Semantics and metatheory}\label{sec:semantics}

Sections~\ref{sec:calculus}--\ref{sec:sense} are a calculus and its
philosophical reading; we owe the mathematics. This section defines the
models, states soundness, and proves the metatheorems that make the
measurement-context architecture more than rhetoric: the calculus is
conservative over its base, creates no forks of its own, and derives
nothing the record does not force.

\subsection{Models}\label{ssec:models}

\begin{definition}[Apartness space]\label{def:apspace}
An \emph{apartness space} $(X, \apart)$ is a set with an irreflexive,
symmetric, cotransitive relation \citep{bishopbridges1985,
troelstravandalen1988}. Every metric space yields, at each resolution
$\varepsilon$, the apartness ``lying in different
$\varepsilon$-components'', where the $\varepsilon$-chain relation (steps
of length $\le \varepsilon$) supplies the matching indiscernibility.
\end{definition}

\begin{definition}[Regime-indexed simplicial apartness space]\label{def:sas}
A \emph{regime-indexed simplicial apartness space} is a simplicial object
$X$ in apartness spaces such that:
\begin{enumerate}[label=(M\arabic*)]
\item each level $X_n$ carries, for each regime $V$, a pair
  $(\simV{V}, \apV{V})$ of h-proposition-valued relations. The
  denotation of a semantic type is a subspace of a level or a
  $\Sigma$-assembly of such (hom and $\Tri$ over $\Sigma$-assemblies
  interpreted fibrewise); each denotation carries the \emph{induced}
  structure, with components computed \emph{intrinsically}: two readings
  count as connected only if deformable into one another through
  readings, not through arbitrary ambient points --- which is what makes
  a fork verdict about the filler space rather than the embedding cloud.
  On $\Sigma$-assemblies the induced structure is the total-space one:
  in metric models, a step between pairs is measured as the maximum of
  the componentwise steps, so that pairs sharing a component are exactly
  as far apart as their other components. The rules of
  \S\ref{ssec:apart} hold per denotation;
\item faces are strongly extensional (metric models: nonexpansive);
  degeneracies preserve and reflect both relations at every regime
  (metric models: isometric) --- read at the level structures;
\item \textbf{(discrete identity)} the ambient identity type is
  interpreted discretely: only reflexivity. Coarse path-object
  interpretations of identity are \emph{not} members of the model class:
  with non-trivial identity, unrestricted path induction reaches
  hom- and $\Tri$-dependent families
  ($\transp : (a =_S b) \to \hom{S}{a}{c} \to \hom{S}{b}{c}$)
  which no apartness axiom governs. Discreteness makes every instance of
  path induction a reflexivity-transport, trivially sound --- and costs
  nothing: every model used in this paper is discrete in this sense. All
  non-trivial sameness is carried by the regime-indexed
  $\simV{V}$, which is precisely the design.
\end{enumerate}
A model \emph{realizes} $\MC$ when it validates every recorded verdict.
\end{definition}

\begin{remark}[What TTS is, positioned plainly]\label{rem:position}
Under (M3) the interesting geometry does not live in the identity
types: in the intended models, identity is discrete and all
non-trivial sameness is carried by the regime-indexed $\simV{V}$. So
TTS is not a repair of HoTT's treatment of semantic composition ---
it \emph{brackets} homotopical identity and adds an observational
layer: intensional MLTT with primitive directed types and graded,
record-fed observational indiscernibility and apartness. We take this
to be the contribution, not a concession: the layer is where
measurement lives, and the metatheorems of
\S\ref{ssec:metatheorems} are about the layer. (The kinship is less
with simplicial HoTT than with observational equality
\citep{altenkirch2007}, graded by instruments and grounded in
records.)
\end{remark}

\begin{example}[The resolution model]\label{ex:resolution}
Let $X_0$ be a set of contextual states embedded in a metric space (for
the intended case: embedded states of a generative model under
$d_{\cos}$), with levels assembled from observed or realizable
transitions and compositions. For each resolution $\varepsilon$:
$u \simV{\varepsilon} v$ iff $u, v$ are joined by a chain of steps of
length $\le \varepsilon$ within the relevant denotation;
$u \apV{\varepsilon} v$ iff they lie in different $\varepsilon$-components.
All rules of \S\ref{ssec:apart} hold (cotransitivity: a third point lies
in a different component from at least one side; monotonicity:
$\varepsilon'$-chains are $\varepsilon$-chains, so components refine);
identity is discrete. The estimator of \S\ref{sec:empirics} --- connected
components of the $\varepsilon$-neighbourhood graph on a sample ---
computes the apartness classes of this model on the sample: the experiment
is the semantics, evaluated on data.
\end{example}

\subsection{Soundness and the honesty of records}

\begin{theorem}[Soundness, environment form]\label{thm:soundness}
If $\Gamma \vdash_{\MC} t : A$, then in every regime-indexed simplicial
apartness space realizing $\MC$, $\Sem{t}$ is a section of $\Sem{A}$ over
$\Sem{\Gamma}$.
\end{theorem}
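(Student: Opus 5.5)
The proof goes by induction on derivations, simultaneously for all judgment forms: context formation, type formation, term typing and judgmental equality. Fix a regime-indexed simplicial apartness space $X$ realizing $\MC$. First I will extend the interpretation $\Sem{-}$ to contexts, types and terms. Contexts become iterated $\Sigma$-assemblies of sets. MLTT types receive the standard set-theoretic families interpretation, with identity discrete by (M3). $\hom{A}{a}{b}$ is the fibre of the face map $(d_1,d_0)$ of the level carrying $\Sem{A}$. $\Tri_A(f,g;h)$ is the set of $2$-simplices with faces $g,h,f$. Each relation former $u \simV{V} v$ and $u \apV{V} v$ becomes the subsingleton $\{\ast \mid \Sem{u} \simV{V} \Sem{v}\}$ (resp.\ $\apV{V}$), computed in the intrinsic structure of (M1). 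Signature constants are interpreted as given by the model, and $\MC$-constants as $\ast$.

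The auxiliary lemma I prove first is \emph{substitution}: $\Sem{t[s/x]} = \Sem{t}\circ\langle \mathrm{id}, \Sem{s}\rangle$. It follows by induction on terms, using the substitution-stability convention of \S\ref{ssec:semtypes} for the new formers.

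With substitution in hand, the rules fall into four groups.

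\emph{Group 1: MLTT rules.} These are the standard soundness of the set model. For path induction, (M3) reduces every instance to reflexivity-transport, so no hom- or $\Tri$-dependent motive needs separate treatment.

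\emph{Group 2: directed rules.} \textsc{hom-degen} is sound because $s_0 \Sem{a}$ has both faces equal to $\Sem{a}$, by the simplicial identities. \textsc{tri-degen-l} and \textsc{tri-degen-r} are sound because $s_0 f$ and $s_1 f$ have the required faces.

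\emph{Group 3: relational rules.} Since the relation types are subsingletons, each rule reduces to a property of the pair $(\simV{V},\apV{V})$ on the denotation. Reflexivity, symmetry, transitivity, both monotonicity rules, irreflexivity, cotransitivity (a truncated sum, hence again a subsingleton) and \textsc{$\#$-ext} are exactly the per-denotation clauses of (M1). The rule \textsc{$\#$-irrefl} lands in $\emp$ because its premise's interpretation is empty, so the section is vacuous. Pullback indices are abbreviations (Definition~\ref{def:readout}), so they need no clause.

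\emph{Group 4: the record.} \textsc{$\MC$-ax} is sound precisely because $X$ realizes $\MC$: the recorded verdict holds, so $\ast$ is a global section.

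The main obstacle is making (M1)'s ``rules hold per denotation'' carry through for denotations built by $\Sigma$-assembly and by substitution into dependent types. The relation on $\Sem{\sum_{x:S}S'}$ must be the induced total-space one, and open terms must be handled fibrewise over $\Sem{\Gamma}$. I would make this precise by stating the inductive hypothesis pointwise in the environment $\eta \in \Sem{\Gamma}$. Each relational rule is then checked at each fixed $\eta$, where only closed denotations appear and (M1) applies directly. The clash-prone combinations never need separate treatment: they are derivations of $\emp$ whose premises already have empty interpretation.
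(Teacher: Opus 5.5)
Your proposal is correct and follows essentially the same route as the paper: induction on derivations, with path induction reduced to reflexivity-transport by (M3), the directed rules handled by the simplicial identities for $s_0$ and $s_1$, the relational rules validated pointwise by (M1), and ($\MC$-ax) by realization. Your explicit substitution lemma and environment-pointwise inductive hypothesis spell out what the paper leaves to the substitution-stability convention and the standard comprehension-category treatment.
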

\begin{proof}[Proof (cases in Appendix~\ref{app:soundness})]
Induction on derivations. The MLTT cases are standard; path induction is
reflexivity-transport by (M3), hence sound for \emph{arbitrary} motives ---
including hom-, $\Tri$-, $\simeq$- and $\#$-dependent families. The
directed rules are immediate from the simplicial structure; the
relational rules hold per denotation by (M1); ($\MC$-ax) by realization.
Substitution coherence: the stated convention plus the standard
comprehension-category treatment of the MLTT fragment; the full
categorical elaboration is routine under (M3) and deferred.
\end{proof}

\begin{theorem}[Consistency of honest records]\label{thm:recordconsistency}
Let $P$ be a finite sample of a metric space, with the induced metric, and
finitely many resolutions. The record $\MC_P$ of $\varepsilon$-component
verdicts \emph{on the sample} is realized by the resolution model on $P$
--- the record \emph{is} that model's component structure --- so
$\mathrm{TTS}(\MC_P)$ is consistent.
\end{theorem}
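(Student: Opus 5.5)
The plan is to build the realizing model explicitly from the sample and then read consistency off Soundness (Theorem~\ref{thm:soundness}). The chain is: construct a regime-indexed simplicial apartness space on $P$; check that it realizes $\MC_P$; check that it interprets $\emp$ as the empty set. Then a closed derivation $\vdash_{\MC_P} t : \emp$ would give an element of $\Sem{\emp} = \varnothing$, which is impossible. So $\mathrm{TTS}(\MC_P)$ is consistent.

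\textbf{Step 1: build the model.} Take the resolution model of Example~\ref{ex:resolution} on the carrier $P$, with the induced metric. The finitely many resolutions $\varepsilon_1 < \dots < \varepsilon_k$ are the regimes, ordered as in Definition~\ref{def:regime}. For each one, set
\begin{itemize}
\item $\simV{\varepsilon_i}$ to be the $\varepsilon_i$-chain relation, computed intrinsically within each denotation;
\item $\apV{\varepsilon_i}$ to be ``lying in different $\varepsilon_i$-components''.
\end{itemize}
Identity is discrete, as (M3) requires. The checks for (M1) are the ones already sketched in Example~\ref{ex:resolution}:
\begin{itemize}
\item reflexivity, symmetry and transitivity hold because chains concatenate and reverse;
\item irreflexivity holds because a point lies in its own component;
\item cotransitivity holds because a third point $w$ lies in exactly one component, so it differs from the component of $u$ or from that of $v$;
\item \textsc{$\#$-ext} holds because $\simV{\varepsilon}$ is exactly sameness of $\varepsilon$-component;
\item both monotonicity rules hold because $\varepsilon'$-chains are $\varepsilon$-chains for $\varepsilon'\le\varepsilon$, so components refine.
\end{itemize}

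\textbf{Step 2: realization.} Each recorded entry is, by definition of $\MC_P$, a component verdict computed on the sample with the same metric and resolution. Each $\gamma$-entry asserts same $\varepsilon$-component, and each $\delta$-entry asserts different components. The model's relations are defined to be exactly these, so every entry is validated by construction. This is the sense in which ``the record \emph{is} the component structure''. If a record entry uses a pullback index $\rho^*W$, I would unfold it per Definition~\ref{def:readout} and read it at $W$ on $\rho u,\rho v$. This needs the readout to be interpreted as the map that actually produced the sample, which I would fold into the hypothesis that $P$ is the sample the instrument measured.

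\textbf{Step 3: the main obstacle.} The hard part is completing this point-level data to a full simplicial object satisfying (M1)--(M3), and interpreting the rest of the signature. The record involves only points of $P$ (or closed terms denoting them), but the model must also interpret hom- and $\Tri$-levels and the remaining constants. My first attempt would be the most degenerate completion, in two parts.
\begin{itemize}
\item Make $P$ the level-$0$ carrier and let the higher levels be the degenerate simplices. Degeneracies are then isometries and faces are nonexpansive, which gives (M2).
\item Interpret any other signature constants as in the base-consistency construction of Appendix~\ref{app:soundness}, invoked in Proposition~\ref{prop:nameinfo}. If the sample points are themselves completions or fillers, place them at the appropriate level of the same degenerate completion, with the induced max-metric on $\Sigma$-assemblies as (M1) prescribes.
\end{itemize}
One point needs care: intrinsic components are computed per denotation. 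The record's verdicts were also computed on the sample, which is exactly the denotation, so the two computations agree. With (M3) satisfied trivially, Soundness applies and the argument of the first paragraph concludes.
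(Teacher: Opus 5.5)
Your proposal is correct and is essentially the paper's own proof: put the resolution model on $P$ (discrete/degenerate simplicial levels, $\varepsilon$-chain and $\varepsilon$-component relations computed on the sample, discrete identity), observe that the record is realized by construction, and conclude consistency from Theorem~\ref{thm:soundness}. Your extra care about pullback indices and about computing components intrinsically on the sample matches the paper's gloss that the record is about the sample rather than the ambient space. One phrase needs tidying: ``place them at the appropriate level of the same degenerate completion''. Sampled continuations or fillers must be adjoined as non-degenerate simplices, since a purely degenerate completion contains no non-identity continuations; the paper's one-line proof also leaves this detail implicit.
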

\begin{proof}
Equip $P$ with discrete simplicial levels (the sampled completions as
the relevant fibre), chain/component relations at every resolution, and
discrete identity: a model per Definition~\ref{def:sas} whose verdicts
at the swept resolutions are by construction the recorded ones; conclude
by Theorem~\ref{thm:soundness}.
\end{proof}

Two glosses, both load-bearing. First, the record is a record \emph{about
the sample}: sample-apart points may be connected in the ambient space
through unsampled territory; the instrument's verdicts quantify over what
it measured, which is what the resolution model on $P$ interprets. Second,
the guarantee is a property of \emph{honest measurement}, not of the
record format: a stipulated record can explode ($\delta : u \apV{V} u$
refutes itself by \textsc{$\#$-irrefl}), and an honest sweep never records
a Theorem~\ref{thm:crossregime}-refutable pair, because its components are
consistent across its own resolutions. The calculus does not certify the
record; the metatheory certifies the instrument.

\subsection{Conservativity and provenance}\label{ssec:metatheorems}

\begin{theorem}[Conservativity]\label{thm:conserv}
For $\Gamma, A$ in the $\simeq/\#$-free fragment (hom and $\Tri$ allowed):
if $\Gamma \vdash_{\emptyset} t : A$ then $\Gamma \vdash t^{\circ} : A$ is
already derivable in MLTT${}+{}$hom/$\Tri$. (Erasure translation;
Appendix~\ref{app:soundness}.)
\end{theorem}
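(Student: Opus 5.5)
The plan is an erasure translation $(-)^{\circ}$ from TTS with empty record into MLTT${}+{}$hom/$\Tri$, followed by a check that erasure preserves derivability. On types, erasure is the identity on every former except the relation formers: set $(u \simV{V} v)^{\circ} :\equiv \mathbf{1}$ and $(u \apV{V} v)^{\circ} :\equiv \emp$. This choice is suggested by the remark after Proposition~\ref{prop:nameinfo}, where $(\simV{})^{\circ} = \mathbf{1}$. The translation is extended homomorphically through $\Pi$, $\Sigma$, identity, sums, truncation, hom and $\Tri$. On terms, the $\simeq$-constructors $\mathsf{r}_V(u)$, $p^{\dagger}$, $p\cdot q$ and $\mathsf{up}(p)$ all erase to $\star : \mathbf{1}$. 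The apartness eliminators erase as follows: $\mathsf{absurd}(\delta)$ goes to $\delta^{\circ}$ (already of type $\emp$); $\delta^{\dagger}$, $\mathsf{down}(\delta)$ and $\mathsf{ext}(p,\delta)$ go to $\delta^{\circ}$; and $\mathsf{split}(\delta,w)$ goes to the $\emp$-eliminator applied to $\delta^{\circ}$ at the target $\|\emp+\emp\|$. Since the record is empty, ($\MC$-ax) never fires, so no clause is needed for constants of relation type.

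The main lemma states that if $\Gamma \vdash_{\emptyset} \mathcal{J}$ in TTS, then $\Gamma^{\circ} \vdash \mathcal{J}^{\circ}$ in MLTT${}+{}$hom/$\Tri$. I would prove it by induction on derivations, after first establishing that erasure commutes with substitution, $(t[s/x])^{\circ} \equiv t^{\circ}[s^{\circ}/x]$. This follows from the substitution-stability convention of \S\ref{sec:calculus}, because erasure acts compositionally. Each relational rule then has a one-line check: the formation rules produce well-formed $\mathbf{1}$ or $\emp$; the $\simeq$ rules are inhabited by $\star$; the $\#$ rules all take an $\emp$ premise to a conclusion obtained via $\emp$-elimination or unchanged. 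The MLTT, hom and $\Tri$ rules translate to themselves. Path induction with a motive mentioning $\simeq$ or $\#$, as in Lemma~\ref{lem:idtoregime}, translates to path induction with the erased motive, which is still a legitimate family in the target. Definitional equalities must also be preserved, including $\beta$/$\eta$ and the h-propositionality built into the formers. The latter is automatic since $\mathbf{1}$ and $\emp$ are propositions.

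For the conclusion, take $\Gamma$ and $A$ in the $\simeq/\#$-free fragment. On that fragment erasure is the identity, so $\Gamma^{\circ}\equiv\Gamma$ and $A^{\circ}\equiv A$, and the lemma gives $\Gamma \vdash t^{\circ} : A$, as required. As a byproduct, applying the lemma to $\Fork_V(\sigma)$ gives $\Fork_V(\sigma)^{\circ} \simeq \Sigma\,\Sigma\,\emp$. This is the seed of Corollary~\ref{cor:noforkexnihilo}, provided the base is consistent.

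The step I expect to be the main obstacle is the interaction of erasure with the judgmental apparatus rather than the rule-by-rule cases. Three things must be checked together: substitution commutation, preservation of conversions, and the fact that ``semantic'' type-formation side conditions, such as $\Tri$ over $\Sigma$-types that might contain relation types, remain valid after erasure. The last is a genuine issue: the grammar of Definition~\ref{def:semtype} excludes relation types from semantic types, so hom and $\Tri$ never take erased arguments and the semantic fragment is fixed by erasure. I would verify this invariant first, as a separate lemma, before running the main induction.
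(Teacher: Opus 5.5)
Your proposal is correct and is essentially the paper's proof: the same erasure $(u \simV{V} v)^{\circ} = \mathbf{1}$, $(u \apV{V} v)^{\circ} = \emp$, the same induction showing each relational rule admissible in MLTT${}+{}$hom/$\Tri$ (with ($\MC$-ax) the one rule the empty record excludes), and the same final step that erasure fixes $\Gamma$ and $A$ in the $\simeq/\#$-free fragment. One small correction to your planned invariant: term arguments of hom and $\Tri$ can contain relational subterms (e.g.\ a first projection out of $\sum_{x:S} x \simV{V} x$), so semantic types are not literally fixed by erasure; what the induction actually needs is that erasure sends semantic types to semantic types, and that holds because erasure is homomorphic on the grammar.
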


\begin{corollary}[No fork ex nihilo]\label{cor:noforkexnihilo}
There is no closed term $\vdash_{\emptyset} t : \Fork_V(\sigma)$, nor of
the judgment of record $\|\Fork_V(\sigma)\|$, for any horn and regime.
\end{corollary}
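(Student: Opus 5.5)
The plan is the erasure argument that Remark~\ref{rem:exo} promises, run in the same translation that proves Theorem~\ref{thm:conserv}, but with apartness erased to the \emph{empty} type. Define $(\,\cdot\,)^{\circ}$ on types and terms by $(u \apV{V} v)^{\circ} :\equiv \emp$ and $(u \simV{V} v)^{\circ} :\equiv \mathbf{1}$ at every regime, including pullback indices, which are abbreviations by Definition~\ref{def:readout}. On every other former, and on hom and $\Tri$, the translation is homomorphic. We then check that each rule of \S\ref{ssec:apart} has an erased image derivable in MLTT${}+{}$hom/$\Tri$:
\begin{itemize}
\item the $\simeq$-rules all land in $\mathbf{1}$ and are witnessed by $\star$;
\item \textsc{$\#$-irrefl} becomes $\emp \to \emp$;
\item \textsc{$\#$-sym}, \textsc{$\#$-mono} and \textsc{$\#$-ext} take an inhabitant of $\emp$ to an inhabitant of $\emp$;
\item \textsc{$\#$-cotrans} becomes $\emp \to \|\emp + \emp\|$, discharged by $\emp$-elimination.
\end{itemize}
The h-proposition structure is preserved, since both $\emp$ and $\mathbf{1}$ are propositions. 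The single rule with no erased counterpart is ($\MC$-ax), and it is vacuous when $\MC = \emptyset$. Lemma~\ref{lem:idtoregime} is unproblematic, because path induction into $\mathbf{1}$ is fine.

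Suppose a closed term $\vdash_{\emptyset} t : \Fork_V(\sigma)$ existed. Erasure gives $\vdash t^{\circ} : \sum_{\varphi,\psi : \Fill(\sigma)^{\circ}} \emp$, where $\Fill(\sigma)^{\circ}$ is the same filler space, since it lies in the $\simeq/\#$-free fragment. The third projection then gives a closed inhabitant of $\emp$ in MLTT${}+{}$hom/$\Tri$. The truncated case is the same argument: $\|\Fork_V(\sigma)\|^{\circ}$ is $\big\|\sum \ldots \emp\big\|$, and mapping into the proposition $\emp$ through truncation-elimination again yields a closed term of $\emp$.

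It remains to contradict this using consistency of the base system. The plan is to interpret MLTT${}+{}$hom/$\Tri$, with the signature's constants, in a simple set model: discrete identity, every base type a nonempty set, and hom and $\Tri$ interpreted as singletons. These interpretations are always inhabited, so the degeneracy rules and the constants are satisfied. Here $\emp$ denotes the empty set, so no closed term of type $\emp$ exists. Equivalently, one can reuse the two-point model of Proposition~\ref{prop:nameinfo}(ii) in its base-consistency form, which realizes the empty record and interprets every $\apV{U}$ as empty. Soundness (Theorem~\ref{thm:soundness}) then makes $\Fork_V(\sigma)$ denote the empty set directly, and we would present this as a cross-check.

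The main obstacle is making the erasure a genuine translation of \emph{derivations}, not merely of types. Dependent eliminations whose motive mentions $\#$ must be handled, for example $\Sigma$- or path-induction into $\apV{V}$-families and $\emp$-elimination out of an apartness. We must check that substitution commutes with $(\,\cdot\,)^{\circ}$, and that definitional equalities are preserved, in particular the $\beta$-rule for truncation. We would do this by a simultaneous induction on typing and equality judgments, with the stability convention of \S\ref{ssec:semtypes} guaranteeing commutation with substitution. If that bookkeeping proves heavy, the fallback is the purely semantic route via Theorem~\ref{thm:soundness} and the empty-apartness model. That route needs only a check that this model satisfies (M1)--(M3), which is easy since every relational rule with a $\#$-premise holds vacuously.
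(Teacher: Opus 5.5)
Your proposal is correct and is essentially the paper's own proof. It uses the erasure translation $(u \apV{V} v)^{\circ} := \emp$, $(u \simV{V} v)^{\circ} := \mathbf{1}$, checked rule by rule with ($\MC$-ax) as the only rule lost (vacuous for the empty record); then a double projection, or truncation-elimination into the proposition $\emp$, to obtain a closed term of $\emp$; and finally a contradiction with base consistency in a set model interpreting hom and $\Tri$ as singletons. One harmless imprecision: $\Fill(\sigma)^{\circ}$ need not coincide literally with $\Fill(\sigma)$, because the horn's closed terms may contain relational subterms (which is why the paper writes $(\Fill\sigma)^{\circ}$), but the argument only ever uses the erased $\emp$ component.
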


\begin{corollary}[No refutation ex nihilo]\label{cor:norefutation}
Dually, there is no closed term
$\vdash_{\emptyset} t : (u \simV{V} v) \to \emp$: the empty record
refutes no indiscernibility either. (Erasure sends the type to
$\mathbf{1} \to \emp$; apply to $\star$.)
\end{corollary}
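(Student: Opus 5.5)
The plan is to run the erasure translation that underlies Theorem~\ref{thm:conserv} on the whole derivation, not only on its $\simeq/\#$-free conclusions, and then invoke consistency of the base. Suppose, for contradiction, that $\vdash_{\emptyset} t : (u \simV{V} v) \to \emp$ for some closed $u, v$ of a semantic type $S$ and some regime $V$. The erasure $(\cdot)^{\circ}$ of Appendix~\ref{app:soundness} acts on types by sending every indiscernibility type $u \simV{U} v$ to $\mathbf{1}$ and every apartness type $u \apV{U} v$ to $\emp$, at every regime $U$. It commutes with the MLTT formers and is the identity on hom, $\Tri$ and signature constants. On terms it sends $\mathsf{r}_U$, $p^{\dagger}$, $p\cdot q$ and $\mathsf{up}(p)$ to $\star$. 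Apartness eliminations such as $\mathsf{absurd}(\delta)$ become $\emp$-eliminations. The apartness-producing rules (\textsc{$\#$-sym}, \textsc{$\#$-mono}, \textsc{$\#$-ext}, \textsc{$\#$-cotrans}) all have an apartness premise, which erases to a term of $\emp$, so each conclusion is obtained by $\emp$-elimination.

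The first step is to restate the appendix's erasure lemma in its general form: if $\Gamma \vdash_{\emptyset} t : A$ then $\Gamma^{\circ} \vdash t^{\circ} : A^{\circ}$ in MLTT${}+{}$hom/$\Tri$, for arbitrary $\Gamma$ and $A$, not only those in the $\simeq/\#$-free fragment. The proof is induction on derivations. I will check the relational rules case by case as just described, using that $\mathbf{1}$ and $\emp$ are h-propositions, which handles the truncated conclusion of \textsc{$\#$-cotrans}. The ($\MC$-ax) case is vacuous because the record is empty, and this is the only place the hypothesis $\MC = \emptyset$ is used. Applying the lemma to $t$ gives a closed term $t^{\circ} : \mathbf{1} \to \emp$, and hence $t^{\circ}\,\star : \emp$ is closed in MLTT${}+{}$hom/$\Tri$.

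The second step is to refute that closed inhabitant of $\emp$ by consistency of the base theory. I will take this from the base-consistency lemma of Appendix~\ref{app:soundness}: interpret MLTT with discrete identity in sets, and interpret hom and $\Tri$ by any simplicial set realizing the signature's constants, for instance degeneracies or a small free model. Soundness of the base then yields an element of the empty set, which is impossible.

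The expected obstacle is the \textsc{$\#$-cotrans} case together with the MLTT eliminators whose motives mention $\simeq$ or $\#$. For these I must check that erasure commutes with substitution, so that a motive $x \mapsto u(x) \simV{U} v(x)$ erases to the constant family $\mathbf{1}$ and path induction and $\Sigma$-elimination go through. Because the relation formers erase to closed constant types, this should be routine, and it is also why erasure, unlike the semantic argument of Proposition~\ref{prop:nameinfo}(ii), suffices here.
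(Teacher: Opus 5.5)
Your proposal is correct and follows the paper's own argument. The erasure lemma of Appendix~\ref{app:soundness} is already stated for arbitrary $\Gamma$ and $A$ under the empty record, so your ``general form'' is just that lemma as written. It turns $t$ into a closed $t^{\circ} : \mathbf{1} \to \emp$, and $t^{\circ}\,\star$ then contradicts Lemma~\ref{lem:baseconsistency}, exactly as in the paper's proof of Corollary~\ref{cor:noforkexnihilo}. Your contrast with Proposition~\ref{prop:nameinfo}(ii) matches the paper's observation there that erasure fails because indiscernibility erases to the inhabited type $\mathbf{1}$.
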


\begin{theorem}[Provenance --- closed judgments]\label{thm:provenance}
If $\vdash_{\MC} t : \Fork_V(\sigma)$ (empty $\Gamma$, $\sigma$ closed;
likewise for $\|\Fork\|$), then $\Fork_V(\sigma)$ holds in every model
realizing $\MC$. Hence if \emph{some} model realizes $\MC$ without forking
$\sigma$ at $V$, no derivation exists: forks are never created by the
calculus, only transported from the record.
\end{theorem}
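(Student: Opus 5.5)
The argument is a direct application of soundness (Theorem~\ref{thm:soundness}) specialized to closed judgments. The contrapositive (``if some model realizes $\MC$ without forking $\sigma$ at $V$, no derivation exists'') then follows immediately.

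Fix a derivation $\vdash_{\MC} t : \Fork_V(\sigma)$ with empty $\Gamma$ and $\sigma$ closed, and fix any regime-indexed simplicial apartness space $X$ realizing $\MC$. By Theorem~\ref{thm:soundness}, $\Sem{t}$ is a section of $\Sem{\Fork_V(\sigma)}$ over $\Sem{\emptyset}$. Since the empty context denotes the terminal object, that section is just a global element, i.e.\ a point of $\Sem{\Fork_V(\sigma)}$.

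Next I unfold the denotation. By Definition~\ref{def:fork}, $\Fork_V(\sigma)$ is the $\Sigma$-type $\sum_{\varphi,\psi:\Fill(\sigma)} \varphi \apV{V} \psi$. By (M1), its interpretation is the $\Sigma$-assembly whose points are triples $(\Sem{\varphi},\Sem{\psi},w)$. Here $\Sem{\varphi},\Sem{\psi}$ are points of $\Sem{\Fill(\sigma)}$, which is the $\Sigma$-assembly of composites and warrants at the closed boundary $\Sem{\sigma}$, and $w$ witnesses the model's $\apV{V}$ relation on that denotation, computed with the induced intrinsic component structure. The point $\Sem{t}$ therefore projects, via the standard interpretation of $\Sigma$-elimination, to two fillers of $\Sem{\sigma}$ that are $V$-apart in $X$. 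This is precisely what ``$\Fork_V(\sigma)$ holds in $X$'' means.

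For the truncated judgment $\|\Fork_V(\sigma)\|$, soundness gives a global element of the propositional reflection. For the model-side conclusion ``holds'' read as inhabitation, nonemptiness of $\|\cdot\|$ is equivalent to nonemptiness of the underlying type. So the same statement follows without having to extract data from the truncation.

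The only genuinely delicate point is that the semantic reading of the fork type coincides with the intended model-level notion of fork. This is the intrinsic-versus-ambient component computation on $\Sem{\Fill(\sigma)}$, together with the fact that closedness of $\sigma$ makes the fibre over the terminal context a single filler space rather than a family. Both points are settled by the clauses of (M1): denotations carry the induced structure, and $\Sigma$-assemblies carry the total-space structure. So I would state them explicitly rather than prove anything new.

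Given this, the final clause follows by contraposition: a model realizing $\MC$ in which $\Sem{\Fork_V(\sigma)}$ is empty admits no global element. Hence no closed $t$ can exist.
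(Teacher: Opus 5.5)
Your proposal is correct and takes essentially the paper's route: soundness (Theorem~\ref{thm:soundness}) applied to the closed judgment gives inhabitation of the fork type in every realizing model, and the second clause follows by contraposition. Your extra unpacking, through the $\Sigma$-assembly structure of (M1) and through nonemptiness of the propositional reflection in the truncated case, spells out what the paper's one-line proof leaves implicit and introduces no gap.
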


\begin{remark}[What provenance adds]\label{rem:provsurplus}
By the deduction theorem, $\vdash_{\MC}$-derivability is equivalent to
$\vdash_{\emptyset}$-derivability from the record's propositions as
hypotheses; the theorem's content beyond conditionalization is threefold.
(i) With Theorem~\ref{thm:recordconsistency}, the conditional is
\emph{non-vacuous} for honest records: a realizing model exists, so the
verdict concerns an actually realizable configuration --- whereas
conditionalization holds vacuously even for explosive records. (ii) The
empty-record case is secured \emph{syntactically} by erasure, independent
of any model class. (iii) Record constants are not abstractable, so
record-counterfactuals are not internalizable: categoricity relative to
the actual record is a property of the theory, not of a sequent. We also
state the downgrade plainly: provenance is model-theoretic; no syntactic
witness-extraction is claimed, since normalization for the axiomatic
extension is not claimed.
\end{remark}

\subsection{Basins, strata, and the third grade}\label{ssec:basins}

\begin{definition}[Points; $V$-components]\label{def:points}
The \emph{points} of a semantic type $S$ are the elements of $\Sem{S}$ in
a fixed model (syntactic reading: closed terms, identified when
$\vdash_{\MC} u \simV{V} v$ is inhabited --- an equivalence by
refl/sym/trans). The \emph{$V$-component} of a point is its
$\simV{V}$-class; components partition the points.
\end{definition}

\begin{definition}[Basin at a regime]\label{def:basin}
A $V$-component $C$ of $S$ is a \emph{basin at $V$} if every inner
$2$-horn whose three boundary vertices lie in $C$ satisfies
$\Canon_V(\sigma)$ --- ambient canonicity, full $\Fill(\sigma)$, no
relativized notion. No maximality is invoked: a basin, when it exists, is
the unique component of its points; whether a given component \emph{is} a
basin is a model-side property --- like canonicity itself
(Remark~\ref{rem:canonunder}), refutable by finite records, never
certified by them.
\end{definition}

\begin{corollary}[Basins exclude forks]\label{cor:basinfork}
If $C$ is a basin at $V$ and $\sigma$ has all three vertices in $C$, then
$\Fork_V(\sigma) \to \emp$, by Exclusion at the same ambient filler space.
\end{corollary}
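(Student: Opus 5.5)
The plan is a direct composition of two earlier results. Fix a basin $C$ at $V$ (Definition~\ref{def:basin}) and an inner $2$-horn $\sigma = (a,b,c,f,g)$ whose boundary vertices $a,b,c$ all lie in $C$. I first unfold the definition of basin: since every inner $2$-horn with vertices in $C$ satisfies $\Canon_V$, we obtain $\Canon_V(\sigma)$ for this particular $\sigma$. This canonicity is ambient: it is a statement about the full filler space $\Fill_S(\sigma)$, not some relativized subspace. I would check this point explicitly, because it is what makes the argument go through. Exclusion (Theorem~\ref{thm:exclusion}) is stated for $\Canon_V(\sigma)$ and $\Fork_V(\sigma)$ over the \emph{same} type $\Fill(\sigma)$, and Definition~\ref{def:basin} was deliberately phrased with ``full $\Fill(\sigma)$, no relativized notion'' so that the two agree.

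Second, I apply Exclusion to the canonicity witness just obtained. Theorem~\ref{thm:exclusion} supplies $\Canon_V(\sigma) \to \Fork_V(\sigma) \to \emp$. Feeding it the witness yields the desired map $\Fork_V(\sigma) \to \emp$. As in the proof of Exclusion, no truncation elimination into data is needed. The fork triple $(\varphi,\psi,\delta)$ is destructed by plain $\Sigma$-elimination, the $\Pi$-component of the canonicity witness gives $\varphi \simV{V} \psi$, and $\mathsf{clash}_V$ (Lemma~\ref{lem:path-apart}) closes the argument. The same reasoning covers the truncated verdict $\|\Fork_V(\sigma)\|$, since $\emp$ is an h-proposition and the truncation may be eliminated into it.

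The only real obstacle is bookkeeping about the level at which the statement lives. Basinhood is a model-side property (Definition~\ref{def:points} reads points as elements of $\Sem{S}$ in a fixed model), so I would read the corollary semantically. In a model where $C$ is a basin, $\Sem{\Canon_V(\sigma)}$ is inhabited, and the soundness of Exclusion (Theorem~\ref{thm:soundness}, applied to the closed Exclusion term) turns the inhabited $\Sem{\Fork_V(\sigma)}$ into an element of $\Sem{\emp} = \varnothing$, so $\Sem{\Fork_V(\sigma)}$ is empty. Read syntactically, with basinhood supplied as a $\Gamma$-hypothesis $\Canon_V(\sigma)$ in the spirit of ``variables hypothesize'', the same proof term $\lambda x.\,\lambda w.\,\mathsf{exclusion}\,x\,w$ works internally. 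In both readings the work is carried entirely by Exclusion.
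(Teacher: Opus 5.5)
Your proposal is correct and follows the paper's own argument. Basin-hood at $V$ yields ambient $\Canon_V(\sigma)$ on the full $\Fill(\sigma)$, and Exclusion (Theorem~\ref{thm:exclusion}) applied to that witness gives $\Fork_V(\sigma) \to \emp$; your added remarks on the truncated verdict and on the semantic versus hypothesized readings of basin-hood are sound bookkeeping that the paper's one-line proof leaves implicit.
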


\begin{remark}[Stratification, exactly]\label{rem:strat}
The contrapositive is precise: \emph{no basin contains all three vertices
of a forked horn}. In particular, if a forked horn's vertices are
co-located in a single $V$-component, that component is not a basin ---
the ambiguous occurrence is a singular point. When the vertices span
components, the forked horn forces the ambiguous middle vertex \emph{out
of the flanking contexts' basin}; whether its own component is a basin is
a further check, not a consequence. This is the formal counterpart of the
stratified picture of \citep{robinson2024structure, li2025stratified}:
basins are the strata where travel is forced; fork loci are where strata
meet.
\end{remark}

\begin{definition}[The third grade: locked]\label{def:fer}
A basin $C$ at $V$ is \emph{locked} if every escape horn from $C$ --- a
horn with some boundary vertex in $C$ and some outside --- has empty
filler space, $\Fill(\sigma) \to \emp$. (Emptiness is regime-independent;
the regime enters solely through basin-hood.) A trajectory confined to a
locked basin exhibits \emph{over-coherence}: maximal local fluency, zero
semantic decision --- the degenerate-fluency phase documented in
repetition collapse and low-temperature locking \citep{arnold2024phase,
nakaishi2026critical}. The criticality finding of
\citep{nakaishi2026critical} --- natural-language statistics arise at the
phase boundary --- receives a structural gloss: meaning requires residence
near the fork locus; both the locked basin and the fully disconnected
regime are semantically dead. We offer this as an interpretation, not a
theorem.
\end{definition}

The three conditions of \S\ref{sec:intro} are now formal --- with
their typing stated honestly. \textbf{Canonical} ($\Canon_V$) and
\textbf{forked} ($\Fork_V$) are the two observational grades \emph{of
a horn}; they do not exhaust verdicts (undetermined is the generic
state of an unexamined horn). \textbf{Locked} is not a third parallel
grade: it is a basin-level construction \emph{from} canonicity ---
a dynamical pathology of a region, not a verdict on a demand. The
triple is a phenomenology of travel; the trichotomy-sounding talk of
\S\ref{sec:intro} should be heard that way. One asymmetry organizes
everything: forks persist under refinement and are verifiable from
finite records; canonicity and basin-hood are falsifiable-only;
locked-ness adds an emptiness condition that no record can certify
either. The logic's positive verdicts are exactly the ones
measurement can deliver.
\section{The empirical bridge}\label{sec:empirics}

The resolution model was chosen for more than soundness: every
constituent of it is computable from samples. A horn is a prompt with
flanking commitments; fillers are sampled continuations; a regime is an
embedding model with a resolution; apartness classes are components of
the $\varepsilon$-neighbourhood graph; and the readout through which
fillers are observed is a $\Sigma$-constant whose pullback
(Definition~\ref{def:readout}) is itself a regime structure. The
dictionary is Table~\ref{tab:correspondence}; the reader is invited to
keep it open --- every row is exercised below. We state the
correspondences as a falsifiable protocol, with the limits of current
evidence marked --- including, prominently, the scope limits of our own
pilot.

\begin{table}[t]
\centering\small
\begin{tabular}{@{}p{0.36\linewidth}p{0.40\linewidth}l@{}}
\toprule
\textbf{Experimental operation} & \textbf{Formal object} & \textbf{Where} \\
\midrule
prompt with flanking commitments & inner $2$-horn $\sigma$ (cone if the
  suffix is free; genuine horn in the FIM test) & Def.~\ref{def:filler} \\
sampled continuation & readout image of a filler & \S\ref{ssec:tri},
  Def.~\ref{def:readout} \\
embedding model $+$ resolution $\varepsilon$ & regime
  $W = (d_{\cos}, \varepsilon)$ & Def.~\ref{def:regime} \\
embedding map (first sentence) & readout $\rho$; pullback index
  $\rho^* W$ & Def.~\ref{def:readout} \\
$\varepsilon$-graph components on the sample & apartness classes of the
  resolution model on $P$ & Ex.~\ref{ex:resolution} \\
the sweep's component verdicts & measurement context $\MC_P$ &
  Def.~\ref{def:mctx}, Thm.~\ref{thm:recordconsistency} \\
$\ge 2$ stable sizable components & $\vdash_{\MC_P}$ apartness verdict at
  $\rho^* W$ & Def.~\ref{def:fork}, scope note \\
one component throughout & failure to refute $\Canon$ --- estimator
  verdict only & Rem.~\ref{rem:canonunder} \\
control conditions, their means & putative basins, their
  representatives & Def.~\ref{def:basin} \\
incoherent generator (GPT-2 null) & undetermined --- no verdict either
  way & \S\ref{sec:empirics} \\
\bottomrule
\end{tabular}
\caption{The dictionary between TTS and the fork-test protocol. By
Example~\ref{ex:resolution}, the left column is not an analogy for the
right: it computes it, on a finite sample.}
\label{tab:correspondence}
\end{table}

\paragraph{Manifold-hypothesis violation $\leadsto$ candidate fork
loci.} \citep{robinson2025manifold} rejects, by a fiber-bundle null
test on radial volume growth, the hypothesis that token subspaces are
manifolds; singular neighbourhoods are concentrated on identifiable
tokens, and prompts containing them yield measurably less stable
continuations (their Theorem~2 propagates input singularities to
outputs). In TTS terms: loci where local dimension is undefined are
exactly where no single basin chart covers the boundary data ---
candidate inhabitants of the fork locus. The inference is
directional: non-manifoldness certifies that canonical-everywhere
fails; it does not by itself produce a fork witness.

\paragraph{Forking tokens $\leadsto$ apartness verdicts.}
\citep{bigelow2025forking} resamples generation token-by-token and
detects positions where the downstream outcome distribution splits
into discrete alternatives. Read in TTS: the resampled continuations
are observed, through a readout, as draws from a space of completions;
the detected outcome-classes are apartness components at the pullback
index; a forking token is a position at which the protocol has
\emph{recorded} the components of a separation --- two concrete
completions apart at the declared resolution.

\paragraph{Scope: cones and horns.} A scoping point we state
plainly, because the formal subject of $\Fork_V$ is the inner
$2$-horn. Free-completion protocols --- Bigelow et al.'s and our first
pilot below --- fix only the \emph{arriving} commitment: the measured
object is a \emph{cone} (context, awaiting continuation), not a
$2$-horn (both flanking commitments held fixed). The verdicts such
protocols deliver are apartness verdicts on sampled completions at the
pullback index $\rho^* W$ --- fully inside the theory by
Proposition~\ref{prop:pullback} --- but they instantiate no
$\Fork_V(\sigma)$. The genuine $2$-horn test is, in machine-learning
terms, a \textbf{fill-in-the-middle (FIM) evaluation}
\citep{donahue2020, bavarian2022}: hold prefix \emph{and} suffix
fixed, sample the bridge --- a direct sample of $\Fill(\sigma)$'s
composites, with the model's fluent generation under both flanking
commitments as the operational shadow of the warrant.

\paragraph{The two FIM instruments.} A language model never generates
backwards; fill-in-the-middle works by moving the future into the
prompt, and there are two ways to do it. \emph{Native} FIM is trained
in \citep{bavarian2022}: pretraining documents are cut into (prefix,
middle, suffix) and re-ordered as prefix--suffix--middle behind
sentinel markers, so that at inference the model receives both flanks
in its context window and writes the middle left-to-right,
conditioned on what must come after --- the mechanism behind code
editors' insert mode, and exposed for our model as the completion
API's suffix parameter. What we will call \emph{prompted} FIM --- the
label is ours; the practice, infilling via instruction, is standard
and descends from \citet{donahue2020} --- carries both flanks inside
an ordinary prompt (``story before the gap \dots\ story after the
gap \dots\ the missing middle:'') and asks for the bridge. The two
instruments sample \emph{different conditional distributions}: native
FIM approximates the document model's
$P(\text{middle} \mid \text{prefix}, \text{suffix})$; prompted FIM
samples an instruction-conditioned variant in which the demand that
the whole ``read as one coherent whole'' is part of the conditioning.
Neither is the other's refinement. They are, in the paper's own
vocabulary, incomparable regimes --- and we run both. One blind spot is structural either
way: our $\rho$ factors through the generated text, so the measured
index cannot separate fillers differing only in their warrant ---
precisely the configuration of Proposition~\ref{prop:asym}'s
countermodel.

\paragraph{The fork test.} The protocol, then: given a boundary, sample
$N$ continuations, embed completions ($\rho$), and compute the
components of the $\varepsilon$-neighbourhood graph across a sweep ---
by Example~\ref{ex:resolution}, these are the apartness classes of the
resolution model at the pullback index, evaluated on the sample. The
verdict criterion must respect two facts. First, the asymmetry of the
grades (Theorem~\ref{thm:persist}, Proposition~\ref{prop:asym}):
separations persist downward through the sweep; single-component
verdicts are always provisional. Second, on a finite sample every
point is eventually its own component as $\varepsilon \to 0$, so
``stable under refinement'' needs a rejection region: we adopt a
persistence-interval criterion --- a fork verdict requires $\ge 2$
components, each of size $\ge m$, across all $\varepsilon$ in a window
of length $\ge L$, with $(m, L)$ calibrated against a resampled
unimodal null. The component-count clause is licensed by monotonicity;
the size floor and null calibration are an independent statistical
stipulation whose job is to reject sampling dust --- they are not
consequences of the calculus, and we do not present them as such.
``Canonical'' verdicts, by Remark~\ref{rem:canonunder}, are estimator
verdicts --- failure to refute --- never calculus theorems. The sweep's
component verdicts --- one apartness entry per separated pair per
resolution, one indiscernibility entry per within-component pair ---
\emph{constitute} the measurement context $\MC_P$ of
Theorem~\ref{thm:recordconsistency}, and the protocol's fork verdict is
the closed judgment $\vdash_{\MC_P} \|\cdot\|$-apartness at the pullback
index: the experiment writes the record; the calculus derives what it
forces.

\paragraph{Pilot study.} We ran the test on the ambiguity case of
\S\ref{ssec:polysemy} --- in cone configuration, per the scoping note
above. (The four completions of Example~\ref{ex:miniature} are drawn
from this run; the reader has been looking at its data since
\S\ref{sec:calculus}.) Protocol: three boundaries --- an ambiguous
occurrence of \emph{bank} (AMB), a financially disambiguated control
(FIN), a riparian control (RIV) --- each sampled $N{=}300$ times from
a public completion model (\texttt{gpt-3.5-turbo-instruct}) at
temperature $1.0$ without truncation (unbiased draws from the model's
filler distribution); completions embedded
(\texttt{text-embedding-3-small}, first sentence --- the
sense-committing region); per draw, a scalar \emph{sense margin}
(cosine to a financial anchor minus cosine to a riparian anchor ---
itself a second readout in the sense of Definition~\ref{def:readout},
measuring at its own pullback index); mixture model selection (BIC,
$k{=}1$ vs $k{=}2$) on margins, and the component sweep above on full
embeddings. Throughout the findings, ``basin'' is used in its estimator
sense: the control conditions anchor two putative basins (basin-hood
being falsifiable-only, their unimodal behaviour is failure-to-refute),
with condition means as representatives.
Three findings (Figure~\ref{fig:pilot}):
\begin{enumerate}[label=(\alph*)]
\item \emph{The fork signature is detectable, after calibration.} The ambiguous
  boundary's margin distribution is decisively bimodal
  ($\Delta\mathrm{BIC}(2{-}1) = -39.3$), with its minor mode at
  $-0.196$, within a fraction of a standard deviation of the riparian
  control's mean ($-0.218$), and its major mode on the financial
  side: the two populations of fillers align with the two reference
  basins. The riparian control is unimodal
  ($\Delta\mathrm{BIC} = +15.4$).
\item \emph{Boundary balance is instrument-critical, as the theory
  predicts.} An earlier, uncalibrated ambiguous boundary drew
  $169/31$ financial-vs-riparian fillers and registered only
  marginally; a calibration pass over candidate boundaries (selecting
  the one whose $n{=}50$ pre-sample split most evenly) preceded the
  main run. A fork judgment requires only two separated fillers, not
  balance; but \emph{detection power} at fixed $N$ degrades with
  imbalance.
\item \emph{One-dimensional bimodality is not the fork criterion ---
  basin alignment is.} The financial control is itself
  margin-bimodal ($\Delta\mathrm{BIC} = -26.9$): inspection shows the
  split separates banking-dense completions from same-scenario
  atmospherics --- two regions \emph{within one basin}, both on the
  financial side of the inter-basin midpoint. This is exactly what
  the resolution model requires: apartness is separation in the full
  space (different components), not multimodality along a contrast
  axis. The fork verdict for AMB stands because its modes straddle
  the basins; the FIN split does not qualify because its modes do
  not. (In the calculus's own terms: the margin readout and the
  full-embedding readout are \emph{incomparable instruments}, and their
  disagreement on FIN is licensed by
  Theorem~\ref{thm:crossregime} --- the fork criterion is the
  full-space instrument's.)
\end{enumerate}
A baseline run with GPT-2-small under the same protocol detected no
structure in any condition (sense ordering visible only in condition
means). The null is not a canonicity verdict: it is
\emph{undeterminedness relative to an inadequate instrument} --- the
generator must be coherent enough to commit to a reading within the
sampled window for the readout to measure anything.

\paragraph{Data availability.} Nothing here is offered on trust.
Every sampled text in every run --- $3{,}630$ records across the three
cone pilots, the boundary calibration, both FIM instruments, the failed
\texttt{gpt-4o-mini} calibration, and the $2$-horn experiment --- is
logged verbatim with its full prompt context, and the supplementary
package (logs, prompts and templates, embeddings, results, and all
scripts, with a file-by-file manifest) accompanies the
paper.\footnote{Draft supplement:
\url{https://cassie.tanazur.org/drafts/tts-supplement-v0.4.4.zip}.}
Appendix~\ref{app:logs} reproduces, for each key condition, the exact
prompt and a seeded random sample of the logged outputs, unabridged,
indexed into the package --- including the mode-collapsed
\texttt{gpt-4o-mini} bridges behind the instrument finding.

\begin{figure}[t]
\centering
\includegraphics[width=0.99\linewidth]{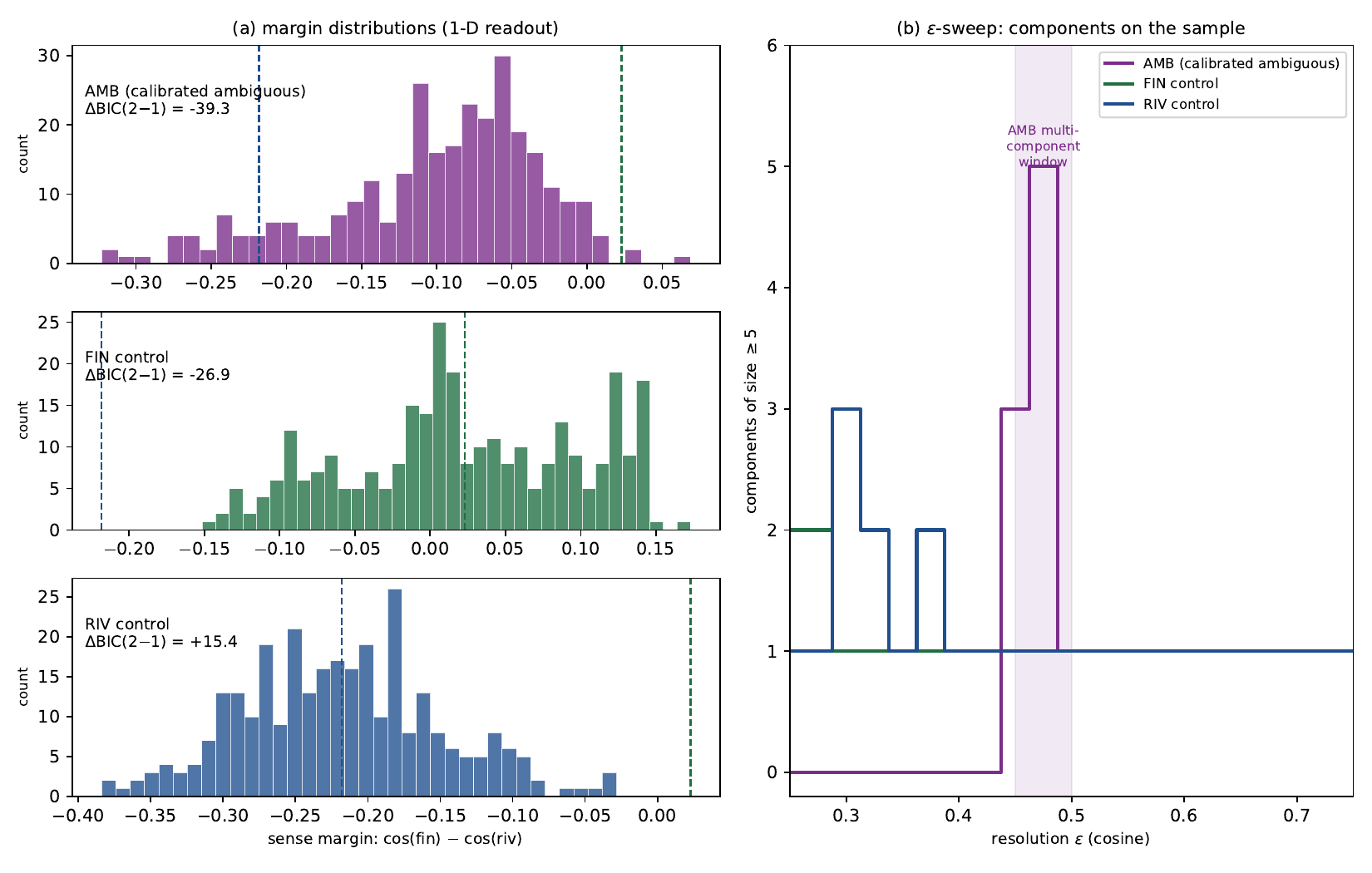}
\caption{The pilot, in both readouts ($300$ sampled completions per
boundary; cone configuration; apartness verdicts at the pullback
index). \emph{(a)} Sense-margin distributions; dashed lines are the
control-condition means (basin representatives in the estimator sense).
The ambiguous boundary spans both basins with modes aligned to each ---
the fork signature; each control remains within its own basin.
\emph{(b)} The $\varepsilon$-sweep: sizable components
($\ge 5$ members) of the sample's $\varepsilon$-neighbourhood graph.
The ambiguous boundary alone holds multiple sizable components in the
shaded window; the controls' fine-$\varepsilon$ fragmentation below
$0.40$ is the sampling dust the persistence criterion's $(m, L)$
calibration exists to reject.}
\label{fig:pilot}
\end{figure}

\paragraph{The 2-horn experiment: infilling.} We then ran the genuine
$2$-horn test as a FIM evaluation. Design: \emph{one} ambiguous prefix
--- the same calibrated boundary (``We met them near the bank, just
before sunset.'') --- and \emph{three} suffixes: neutral (NEU, ``By
the time we said our goodbyes, the stars were already out.''),
financially disambiguating (FIN, a manager unlocking doors for the
last customers), and riparian (RIV, a boat carried downstream). For
each horn, $n = 200$ bridges sampled at temperature $1.0$; same
instrument as the pilot. Two instrument notes, both of independent ML
interest. First, an instruction-tuned chat model
(\texttt{gpt-4o-mini}) failed calibration: its bridge distribution is
mode-collapsed (near-identical sociable boilerplate) and it
under-weights the suffix constraint --- so the experiment uses
prompted FIM on the completion model (\texttt{gpt-3.5-turbo-instruct}),
which retains sampling diversity and respects both flanks. The
instrument requirement is itself a finding: alignment tuning narrows
the bridge distribution toward mode collapse. (Precisely put: no
record certifies canonicity (Remark~\ref{rem:canonunder}); what
alignment manufactures is ever-stronger \emph{failure-to-refute} ---
observed unimodality that the fork test cannot break.)
Second, all runs and boundaries are disclosed; the full-scale study
will pre-register boundaries, sample sizes and the $(m, L)$ criterion.

Results (Figure~\ref{fig:horn}). The neutral-suffix horn is
decisively bimodal on the margin readout
($\Delta\mathrm{BIC}(2{-}1) = -17.8$): a riverside-committed mode at
$-0.209$, aligned with the riparian condition's mean ($-0.223$), and
a second mode at $-0.104$ consisting largely of bridges that
\emph{defer} the reading --- sociable narrative that completes the
passage without committing. Deferral is a FIM-specific phenomenon
the cone test cannot exhibit: a filler whose own reading remains
unresolved, licensed exactly when the right context demands no
commitment. We state the verdict with the paper's own discipline.
By the standard of \S6's protocol --- persistent multi-component
structure in the full embedding space --- this sample does
\emph{not} certify an inhabitant of $\Fork$: the bridges chain
through their shared narrative register, and the
$\varepsilon$-component criterion does not trigger at $n = 200$.
What the experiment establishes is the \emph{fork signature on the
margin readout} (itself a pullback instrument), with modes aligned
to distinct basins --- branching evidence, by the paper's own
finding~(c) standard, short of a certified fork. The certified-fork
experiment needs either larger $n$ or a readout that suppresses
register and isolates reading. The decisive evidence is the
\textbf{suffix effect}, with the prefix held identical: the financial
suffix kills the riparian branch (mean $+0.001$; both
mixture sub-modes on the financial side; riverside bridges nearly
absent), and the riparian suffix kills the financial branch (mean
$-0.223$). This is \S\ref{ssec:polysemy}'s account of disambiguation,
measured: enriching the boundary does not erase the fork --- it poses
a \emph{different} horn whose foreclosed reading is no longer
generable. In TTS terms, the other branch's triangles cease to be
inhabited; in ML terms, right context prunes the bridge distribution
to one basin.

\paragraph{Two instruments, one horn.} Rerunning the same three horns
through native FIM ($n = 200$ each; logs in the supplement and
Appendix~\ref{app:logs}) gives a textbook instance of the paper's own
doctrine, and we set it out in full. First, what \textbf{replicates}:
the suffix effect, the theoretically central prediction, holds under
both instruments --- the condition means order identically, with the
neutral horn between the two disambiguated ones (native: NEU $-0.160$
between FIN $-0.054$ and RIV $-0.218$; prompted: NEU $-0.145$ between
FIN $+0.001$ and RIV $-0.223$). Whatever else the instruments
disagree about, both report that the right context prunes the bridge
distribution toward its own basin. Second, what \textbf{does not}:
the fork verdict at the neutral horn. Prompted FIM is decisively
bimodal ($\Delta\mathrm{BIC} = -17.8$); native FIM is unimodal-broad
($\Delta\mathrm{BIC} = +3.9$). The logs explain the disagreement.
The native bridges, produced by an objective trained for document
continuity, frequently bridge the \emph{scene} without re-engaging
the ambiguous word at all --- Tai Chi gatherings, a Mekong crossing,
a Ferris wheel against the sunset (Appendix~\ref{app:logs}) ---
deferral at scale: fluent connective tissue that postpones the
reading indefinitely. The prompted instrument's demand that the whole
``read as one coherent whole'' forces an actual \emph{reading}, and
readings commit. Third, what the disagreement is and is not. ``Instrument'' bundles
three roles the formalism keeps separate: the \emph{generator}
(which fillers become observable --- here, two different conditional
distributions), the \emph{readout} $\rho$ (how fillers are mapped
into a measured space), and the \emph{regime} (the
connection/separation verdicts in that space).
Theorem~\ref{thm:crossregime} formally governs the third role only
--- two regimes judging \emph{one} pair --- whereas native and
prompted FIM differ at the \emph{generator}: they sample different
bridge populations, so the theorem's configuration is not literally
instantiated. Nor is the native result a connection verdict: a broad
unimodal sample is failure-to-detect, and the calculus itself
insists failure-to-detect is not canonicity
(Remark~\ref{rem:canonunder}). What the comparison shows is the
theorem's \emph{lesson} operating one layer up --- verdicts are
relative to the whole instrument stack, and disagreement across
stacks is data about the stacks. The robust finding is what survives
both: the suffix effect. The instrument-validity question for the
full study is now sharply posed --- if the fork judgment's subject is
the space of \emph{readings}, the generator and readout of record
must be the ones that elicit and preserve readings rather than
scene-continuity, and that choice must be argued and pre-registered,
not slipped in.

\begin{figure}[t]
\centering
\includegraphics[width=0.9\linewidth]{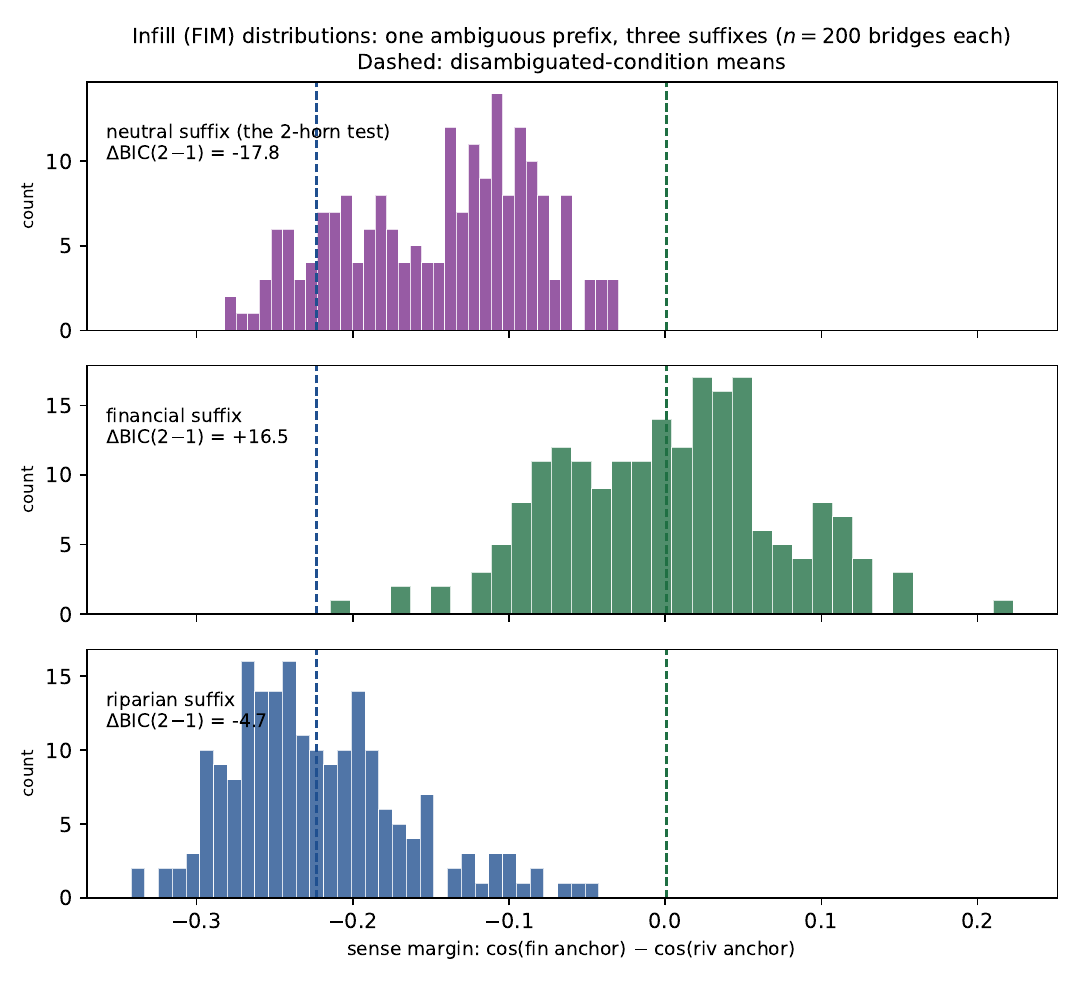}
\caption{The $2$-horn (FIM) experiment: one ambiguous prefix, three
suffixes, $n = 200$ sampled bridges per horn. Dashed lines: the
disambiguated conditions' means. The neutral-suffix horn (top) is
bimodal --- riverside-committed mode plus a deferral mass; each
disambiguating suffix (middle, bottom) collapses the bridge
distribution onto its own basin: the suffix effect.}
\label{fig:horn}
\end{figure}

\paragraph{For readers from machine learning.} The entire protocol
uses standard artifacts, and the theory is a logic over exactly those
artifacts. Resampling a prompt = sampling the filler space of a cone;
FIM with both flanks fixed = sampling an inner $2$-horn; the embedding
model with a threshold = a regime; the clustering output at swept
thresholds = the measurement context $\MC_P$; stable multimodality of
the completion or infill distribution = the fork; unimodality robust
to resolution = (estimated) canonicity; the persistence criterion =
the rejection region. What the calculus adds to this familiar
pipeline is a guarantee layer: verdicts compose across instruments
and resolutions by rules with a soundness theorem ---
separations persist under refinement, instruments may disagree only
in the configurations of Theorem~\ref{thm:crossregime}, and no
verdict appears that the recorded data do not force.

\paragraph{Honesty clause.} What is proven empirically today is
non-manifoldness and outcome-splitting; \emph{non-contractibility of
filler spaces} is strictly stronger than either, and no published test
computes local homotopy type. The fork test is the proposed bridge,
and it is falsifiable in both directions: ubiquitous single-component
results under refinement would disconfirm the fork reading of
forking tokens; stable multi-component results would instantiate the
theory's central judgment in measurement.

\section{Related work}\label{sec:related}

\paragraph{Synthetic $\infty$-category theory.} Simplicial type theory
\citep{riehlshulman2017}, its triangulated extension with directed
univalence \citep{gratzer2024}, machine-checked developments in Rzk
\citep{kudasov2024, lossin2026}, and the model-theoretic analysis of
\citep{rasekh2025} constitute a mature discipline of \emph{directed}
type theory. Throughout that lineage, composition-uniqueness
(Segal-ness) is a property to be \emph{imposed or established} for the
types of interest. TTS inverts the stance: Segal-ness is a local
judgment whose presence, failure, and boundary structure are the
objects of the logic. We use the lineage's vocabulary (and intend its
toolchain, \S\ref{sec:conclusion}) while occupying a complementary
question.

\paragraph{Two-level type theory.} 2LTT \citep{annenkov2023} pairs a
fibrant (HoTT) layer with a non-fibrant (exo-)layer; exo-types may
lack transport. The contrast with TTS is the contrast between
\emph{absence} and \emph{witness}: in 2LTT a non-fibrant type simply
fails to support the operations, and nothing internal certifies,
locates, or measures the failure; in TTS the failure of canonicity at
a specific horn at a specific regime is a positively inhabited type
($\Fork_V$) whose inhabitant is a term --- two named fillers and a
recorded separation --- with a provenance theorem
(Theorem~\ref{thm:provenance}) governing where such terms can come
from. The separation is term-level: fork witnesses persist under
refinement (Theorem~\ref{thm:persist}) and transport along readout
pullbacks (Proposition~\ref{prop:pullback}) because there is a term to
transport; 2LTT's non-fibrancy, being mere absence, supports no
analogous calculus of verdicts.

\paragraph{Categorical and dynamical accounts of learned
representations.} GAIA \citep{mahadevan2024} reads learning problems
as horn-extension problems over simplicial sets --- backpropagation
solves inner horns --- but is solution-seeking by design: its Kan
assumption exists so that extensions succeed, no failure is tracked,
and no theory of meaning is offered. Dynamical proposals model
generation as trajectory evolution on an assumed semantic manifold
\citep{zhang2025dmet}, precisely the assumption the measurements of
\S\ref{sec:empirics} defeat; none offers judgments, rules, or
metatheory. The formal-geometry result of \citep{jiang2024linear}
(softmax + implicit bias provably promote linear concept
representations within a latent-variable model) is the kind of
peer-reviewed anchor a semantics of embedding geometry can build on,
and is consistent with basin-interior linearity; it says nothing about
the singular loci that concern us. Whether LLM geometry is
\emph{semantic} at all is contested \citep{benderkoller2020,
lakemurphy2023}; TTS does not presuppose an answer --- its verdicts
concern the structure of the geometry under measurement, and the Frege
reading of \S\ref{sec:sense} is the philosophical wager that this
structure is where sense lives.

\paragraph{Theories of sense.} The procedural lineage --- Tich\'y/TIL
\citep{duzi2010}, Moschovakis \citep{moschovakis2006}, Muskens
\citep{muskens2005}, and in Martin-L\"of type theory Bentzen
\citep{bentzen2020} --- individuates sense by identity of procedures;
the proof-theoretic lineage \citep{tranchini2016, ayhan2021, ayhan2025}
by derivational structure, with \citep{ayhan2025}'s bilateralist
calculus of proofs \emph{and refutations} particularly close in
spirit to our positive/positive (canonicity/fork) architecture.
Truthmaker semantics \citep{fine2017} individuates hyperintensionally
by exact verifier sets, set-theoretically and without procedural or
geometric structure. The philosophical literature on identity in HoTT
\citep{ladymanpresnell2017, corfield2020} has the path-structure but
has not, to our knowledge, connected filler spaces to modes of
presentation. Informal anticipations that LLM latent geometry plays
the role of Fregean sense exist \citep{schuele2024}; TTS is, to our
knowledge, the first \emph{formal} system in which the identification
is carried by a calculus with a soundness theorem. Categorical
compositional distributional semantics \citep{coecke2010} is the other
formal bridge between compositional structure and vector geometry; the
stances are complementary --- DisCoCat composes meanings \emph{in} the
vector space, while TTS judges the failure modes of composition.
Within formal
semantics of natural language, the vector-space bridge of
\citep{quigley2025} (a homomorphism between Montagovian and
distributional semantics, in this journal) is methodologically the
closest published relative: it connects a formal semantics to
embedding geometry; TTS connects a \emph{proof theory} to that
geometry's failure modes.

\paragraph{Singular thought.} The mental-files tradition
\citep{recanati2012} is the closest philosophical model of
\S\ref{ssec:names}, and the alignment deserves precise statement ---
in our vocabulary, not theirs. What that tradition posits as primitive
mental particulars (files, dossiers) appears here as emergent
structure: regions of a presentation space. Its \emph{presumption of
co-reference} --- information within one file combines freely ---
corresponds to a monotonicity entailment: when the presentational
regime refines the objectual one, co-location in a presentational
component entails co-reference at the objectual regime
(\textsc{$\simeq$-trans} + \textsc{$\simeq$-mono}). Its
\emph{epistemically rewarding relations}, which individuate files by
the channel through which information arrives, correspond to regimes
--- an ER relation is an instrument. And its signature doctrine, that
identity discovery \emph{links} files without \emph{merging} them, is
Proposition~\ref{prop:namestransfer}: flow at the discovering regime,
provable segregation at the presentational one. We adopt the
alignment, not the vocabulary or the representationalist metaphysics:
nothing in \S\ref{ssec:names} is a stored particular; everything is
graded geometry under measurement.

\section{Conclusion and open problems}\label{sec:conclusion}

We have presented a type theory in which the semantically decisive
events of a stratified meaning-space --- forced composition, genuine
choice, locked fluency --- are graded verdicts with positive,
term-level witnesses; in which the verdicts are consequences of
measurement by construction and by theorem (conservativity,
no-fork-ex-nihilo, provenance, record consistency); and in which
Frege's distinction is theorem-grade structure: sense is choice of
filler, hyperintensional distinctness is apartness at a regime, and
the consistency of discovered identity with standing difference is
characterized exactly by the order between the instruments involved.
Exclusion is derived, not legislated; the witnesses are measurements,
not stipulations; and what the calculus cannot do --- certify
canonicity from a finite record, manufacture a fork, retract a verdict
--- is as much a part of its content as what it can.

Open problems, in ascending order of ambition:
\begin{enumerate}[label=(\arabic*)]
\item \textbf{Names as expressions.} \S\ref{ssec:names} delivers the
  name-\emph{state} configuration; what remains is the public name as
  an expression --- the word-to-constant link, predication dynamics
  over presentation-points, and rigidity in full
  (Remark~\ref{rem:rigidity}). A first name-pilot, logged in the
  supplement, taught us more than its design intended. The
  hypothesized dissociation --- names apart presentationally,
  connected under predication readouts --- did not appear for
  \texttt{gpt-3.5-turbo-instruct}. The logs show why. Asked for
  facts, the model answers about Venus perfectly: ``108 million
  km,'' every draw. It never routes the classical names through that
  knowledge: Phosphorus's facts are confabulated. Yet asked
  directly, it affirms the identity ($0.90$ yes-same-object). So a
  one-step Leibniz syllogism fails in a fully \emph{transparent}
  context: both premises assented to, the substitution never
  executed. That is stronger than Frege's puzzle, which concerns
  opaque contexts. The calculus names the situation precisely.
  Substitution everywhere is transport along ambient identity.
  Ambient identity is refuted by the standing presentational
  separation (id-clash, Corollary~\ref{cor:idclash}). What the model
  holds is the graded connection $\simV{W}$, and that licenses
  substitution at $W$ alone
  (Proposition~\ref{prop:namestransfer}). Predication and
  presentation are, for these signifiers, distinct regimes that do
  not marry. Meaning here does not operate syllogistically: the
  identities it trades in are never $\mathsf{Id}$-grade, only some
  instrument's verdict. (Compare the reversal curse
  \citep{berglund2023}: factual knowledge not closed under one-step
  logical transforms.) The question for the full study: which
  inferential closures does a model's geometry implement, and at
  which regimes?
\item \textbf{General horns.} Extend \S\ref{sec:grades} to
  $\Lambda^n_i$ (Remark~\ref{rem:higher}); identify which semantic
  phenomena require $n \ge 3$ (anaphora across discourse units is a
  candidate).
\item \textbf{The alignment--diversity question.} The $2$-horn
  experiment's instrument finding --- an instruction-tuned model's
  bridge distribution is mode-collapsed where the completion model's
  forks --- deserves systematic study: does alignment tuning collapse
  the filler distributions (ever-stronger failure-to-refute
  canonicity --- no record can do more, Remark~\ref{rem:canonunder}),
  and can the fork census quantify what it removes?
\item \textbf{Normalization and decidability.} The extension is
  axiomatic; canonicity and normalization for it are not claimed.
  The natural hardening is fragment-wise: the relational fragment's
  proof terms carry near-groupoid structure ($\simV{V}$-terms under
  sym/trans, $\apV{V}$-terms acted on by \textsc{ext}), suggesting a
  confluent rewriting independent of the (axiomatically inert)
  directed fragment --- which would yield \emph{syntactic} provenance
  exactly where it matters, for the fork witnesses, without solving
  normalization for the whole theory.
\item \textbf{Completeness.} Soundness is Theorem~\ref{thm:soundness};
  completeness for regime-indexed simplicial apartness spaces is open.
  We conjecture it fails for the full theory (the directed fragment is
  axiomatically inert) while holding for the relational fragment over
  resolution models on finite carriers.
\item \textbf{Mechanization.} The directed fragment is within reach of
  the Rzk ecosystem \citep{kudasov2024}; the relational fragment is
  not, and an Agda development with postulated regimes and apartness
  is the pragmatic first step.
\item \textbf{The empirical program.} The fork test at scale: a census
  of the fork locus of a production model, the singular-token
  cross-check against \citet{robinson2025manifold}, and the diachronic
  question --- how the locus moves under fine-tuning --- which is
  where this logic touches alignment.
\end{enumerate}

\appendix
\section{Proofs}\label{app:soundness}

\subsection{Base consistency}

\begin{lemma}[Base consistency]\label{lem:baseconsistency}
MLTT${}+{}$hom/$\Tri$ over any semantic signature $\Sigma$ is consistent.
\end{lemma}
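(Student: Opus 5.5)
We will prove consistency semantically, by exhibiting one model of MLTT${}+{}$hom/$\Tri$ over an arbitrary signature $\Sigma$ in which $\emp$ is interpreted as the empty set, and then invoking soundness. Concretely, take the standard set-theoretic (or, to handle the universe-free, identity and truncation structure uniformly, the discrete-groupoid) model of intensional MLTT. Here $\Pi$, $\Sigma$, $+$ and $\mathbf{1}$ are interpreted as usual. Identity types are interpreted discretely (only reflexivity, in the spirit of (M3)), and $\|\cdot\|$ is interpreted as the subsingleton ``inhabited or not''. In particular $\Sem{\emp} = \varnothing$, so once soundness is established no closed term of $\emp$ exists.

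The new ingredients are the signature and the directed formers. Interpret every base semantic type $\mathfrak{B}$ as a \emph{nonempty} set; a one-point set $\{\ast\}$ suffices. Then interpret every term constant of $\Sigma$ by an arbitrary element of its (recursively computed) denotation. For this to be possible, every type at which $\Sigma$ declares a constant must have nonempty denotation, so we make all directed types inhabited. We set $\Sem{\hom{A}{a}{b}} := \{\ast\}$ for all $a,b$ and $\Sem{\Tri_A(f,g;h)} := \{\ast\}$ for all $f,g,h$, and interpret $\mathsf{id}_a$, $\lambda^{\mathsf{l}}_f$ and $\lambda^{\mathsf{r}}_f$ as $\ast$. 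This is the terminal simplicial set, which makes \textsc{hom-form}, \textsc{hom-degen}, \textsc{tri-form}, \textsc{tri-degen-l} and \textsc{tri-degen-r} trivially sound. Since hom and $\Tri$ are eliminator-free, no computation rule needs checking. Semantic $\Sigma$-types over nonempty families are nonempty, so every semantic type, and hence every constant's declared type, is inhabited. The only remaining care is that a constant's type may be a non-semantic MLTT type (e.g.\ a readout $\rho : S \to R$). Such a type is still inhabited when the codomain is semantic, and we restrict attention to the signature forms the paper permits (points, continuations, fillers, readouts between semantic types).

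Soundness for the MLTT fragment is the standard induction on derivations, with substitution commuting with the interpretation by the substitution-stability convention. Path induction is reflexivity-transport under discrete identity, exactly as in Theorem~\ref{thm:soundness}. Consistency follows: a closed derivation of $t:\emp$ would give $\Sem{t}\in\varnothing$.

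I expect the main obstacle to be the well-definedness of the interpretation of constants when their declared types depend on other constants, or when they are genuinely non-semantic types that might be empty (e.g.\ a constant of type $\emp$-valued $\Pi$). I would handle this by ordering $\Sigma$'s declarations well-foundedly and interpreting constants in order, noting that each declared type is built from semantic types, hence inhabited in the terminal interpretation. Signatures declaring constants of possibly-empty non-semantic types are excluded by Definition~\ref{def:semsig}, which only admits names for points, continuations and fillers. Readouts are handled because their codomains are semantic, hence nonempty.
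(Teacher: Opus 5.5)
Your proposal is correct and takes essentially the same route as the paper: extend the $\mathsf{Set}$ model of MLTT by interpreting base semantic types, $\mathsf{hom}$ and $\Tri$ as one-point sets, constants and degeneracies as $\star$, and readouts as the unique map into a point; the eliminator-free directed rules then hold, and $\emp$ stays empty. Your extra care about interpreting the signature's declarations in a well-founded order and about readout types being inhabited fills in what the paper leaves implicit in ``the interpretation is compositional''; it changes nothing essential.
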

\begin{proof}
Extend any model of MLTT (e.g.\ in $\mathsf{Set}$) by interpreting each
base semantic type as $\mathbf{1}$, each point/hom/$\Tri$ constant of
$\Sigma$ as $\star$, and each function constant (readouts,
Definition~\ref{def:readout}) as the unique map into $\mathbf{1}$; set
$\hom{}{a}{b} := \mathbf{1}$ and $\Tri(f,g;h) := \mathbf{1}$,
with $\mathsf{id}_a, \lambda^{\mathsf{l}}_f, \lambda^{\mathsf{r}}_f
\mapsto \star$. The interpretation is compositional; the directed fragment
has no eliminators and no computation rules, so all five of its rules are
validated, and $\emp$ remains uninhabited. (The path-action $e_* f$ is
derived by transport and needs no separate clause.)
\end{proof}

\subsection{Erasure and conservativity}

\begin{lemma}[Erasure soundness]\label{lem:erasure}
Define $(-)^{\circ}$ on the language of $\mathrm{TTS}(\Sigma, \emptyset)$
by $(u \apV{V} v)^{\circ} := \emp$, $(u \simV{V} v)^{\circ} :=
\mathbf{1}$, homomorphically elsewhere (contexts, types, terms,
derivations). Then $\Gamma \vdash_{\emptyset} t : A$ implies
$\Gamma^{\circ} \vdash t^{\circ} : A^{\circ}$ in MLTT${}+{}$hom/$\Tri$.
\end{lemma}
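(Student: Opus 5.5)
The plan is a single induction on the derivation of $\Gamma \vdash_{\emptyset} t : A$, with one case per rule of TTS, showing in each case that the erased conclusion is derivable in MLTT${}+{}$hom/$\Tri$ from the erased premises. First I will fix the term-level clauses of $(-)^{\circ}$ that the statement leaves implicit under ``homomorphically elsewhere'':
\begin{itemize}
\item every $\simeq$-proof former goes to the unit element: $\mathsf{r}_V(u)^{\circ}$, $(p^{\dagger})^{\circ}$, $(p\cdot q)^{\circ}$ and $\mathsf{up}(p)^{\circ}$ all become $\star$;
\item $\#$-proof formers go to their erased arguments: $(\delta^{\dagger})^{\circ} := \delta^{\circ}$, $\mathsf{down}(\delta)^{\circ} := \delta^{\circ}$, $\mathsf{ext}(p,\delta)^{\circ} := \delta^{\circ}$, $\mathsf{absurd}(\delta)^{\circ} := \delta^{\circ}$;
\item $\mathsf{split}(\delta,w)^{\circ}$ becomes the $\emp$-eliminator applied to $\delta^{\circ}$, landing in $\|\emp+\emp\|$.
\end{itemize}
The rule ($\MC$-ax) never fires, since the record is empty, so no case is needed for record constants. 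This is exactly where the hypothesis $\MC=\emptyset$ is used.

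Before the induction I will prove a substitution lemma: $(B[u/x])^{\circ} \equiv B^{\circ}[u^{\circ}/x]$, and likewise for terms. This follows by induction on syntax, because erasure is homomorphic and the two relation clauses ignore their term arguments. The convention that all new formers commute with substitution is what keeps this routine. I also need erasure to preserve judgmental equality. Since the relation formers have no computation rules, this reduces to the MLTT $\beta\eta$ rules, which are preserved because erasure is a homomorphism on them.

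The cases then go as follows.
\begin{itemize}
\item \textbf{MLTT rules.} These map to instances of the same rules, using the substitution lemma for dependent eliminators. Path induction with motive $x\simV{V}y$ (Lemma~\ref{lem:idtoregime}) becomes path induction into the constant family $\mathbf{1}$, which is fine.
\item \textbf{Directed rules.} The five rules (\textsc{hom-form}, \textsc{hom-degen}, \textsc{tri-form}, \textsc{tri-degen-l}, \textsc{tri-degen-r}) are untouched by erasure, so they map to themselves.
\item \textbf{Formation rules.} \textsc{$\simeq$-form} and \textsc{$\#$-form} erase to formation of $\mathbf{1}$ and $\emp$. The h-proposition structure is preserved, since both $\mathbf{1}$ and $\emp$ are propositions.
\item \textbf{$\simeq$ rules.} \textsc{$\simeq$-refl}, \textsc{$\simeq$-sym}, \textsc{$\simeq$-trans} and \textsc{$\simeq$-mono} all conclude in $\mathbf{1}$ and are witnessed by $\star$.
\item \textbf{$\#$ rules.} \textsc{$\#$-sym}, \textsc{$\#$-mono} and \textsc{$\#$-ext} take a premise of type $\emp$ and conclude $\emp$, so the erased premise itself is the witness. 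For \textsc{$\#$-irrefl}, the premise and conclusion are both $\emp$. \textsc{$\#$-cotrans} erases to a conclusion of type $\|\emp+\emp\|$, obtained by $\emp$-elimination from the erased premise.
\end{itemize}

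The main obstacle is the definitional-equality and substitution bookkeeping rather than any single rule. Two points need care. First, erasure is defined on raw syntax, so types that differ only in their regime index or relation arguments become identified; I must check this never breaks a typing side-condition. Second, the truncations that appear in conclusions must erase consistently. The remedy for the second point is to treat each conclusion $\|\cdots\|$ homomorphically and to supply its witness via $\emp$-elimination or $\star$ as above.

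Conservativity (Theorem~\ref{thm:conserv}) then follows because erasure is the identity on the $\simeq/\#$-free fragment.
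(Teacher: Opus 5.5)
Your proposal is correct and follows the paper's own proof. It runs induction on derivations with the MLTT and directed rules untouched, sends the $\simeq$-rules to $\star$, sends the $\#$-rules to the erased premise (the paper's $\mathsf{id}_{\emp}$, or $\lambda x\,y.\,y$ for \textsc{$\#$-ext}) and \textsc{$\#$-cotrans} to $\emp$-elimination into $\|\emp+\emp\|$, preserves judgmental equality because the erased formers carry no computation rules, and rules out ($\MC$-ax) exactly by the empty record; your explicit substitution lemma only spells out what the paper leaves under ``homomorphically,'' and the raw-syntax identification you flag is harmless, since the only regime side-conditions are meta-level and vanish under erasure.
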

\begin{proof}
By induction on derivations; the MLTT and directed rules are untouched by
the translation. The relational rules are admissible in the image:
formation rules become ``$\mathbf{1}\ \mathsf{type}$'' /
``$\emp\ \mathsf{type}$''; the formers' built-in proposition-hood becomes
the standard proofs that $\mathbf{1}$ and $\emp$ are h-propositions;
\textsc{$\simeq$-refl} $\mapsto \star$;
\textsc{$\simeq$-sym/trans/mono} $\mapsto$ constants valued $\star$;
\textsc{$\#$-irrefl/sym/mono} $\mapsto \mathsf{id}_{\emp}$;
\textsc{$\#$-cotrans} $\mapsto$ $\emp$-elimination into
$\|\emp + \emp\|$; \textsc{$\#$-ext} $\mapsto
\lambda (x{:}\mathbf{1})(y{:}\emp).\, y$; id-to-regime $\mapsto$ constant
$\star$. The translation preserves judgmental equality since the erased
formers carry no computation rules. Note ($\MC$-ax) is precisely the rule
erasure cannot validate --- a $\#$-entry would erase to a closed term of
$\emp$ --- which is why the lemma is stated for the empty record.
\end{proof}

\begin{proof}[Proof of Theorem~\ref{thm:conserv}]
For $\Gamma, A$ in the $\simeq/\#$-free fragment, $\Gamma^{\circ} =
\Gamma$ and $A^{\circ} = A$; apply Lemma~\ref{lem:erasure}.
\end{proof}

\begin{proof}[Proof of Corollary~\ref{cor:noforkexnihilo}]
Suppose $\vdash_{\emptyset} t : \Fork_V(\sigma)$. Erasure gives
$\vdash t^{\circ} : \sum_{\varphi, \psi : (\Fill\sigma)^{\circ}} \emp$;
projecting twice yields a closed term of $\emp$, contradicting
Lemma~\ref{lem:baseconsistency}. For the judgment of record: erasure sends
$\|\Fork_V(\sigma)\|$ to $\|\Sigma_{\varphi\psi}\,\emp\|$, and the
truncation eliminates into the h-proposition $\emp$ via the projections,
giving the same contradiction. The companion fact --- no closed
$\vdash_{\emptyset} t : (u \simV{V} v) \to \emp$ --- follows by applying
$t^{\circ} : \mathbf{1} \to \emp$ to $\star$: the empty record refutes no
indiscernibility either.
\end{proof}

\begin{proof}[Proof of Theorem~\ref{thm:provenance}]
Immediate from Theorem~\ref{thm:soundness} applied to the closed judgment:
every model realizing $\MC$ inhabits $\Sem{\Fork_V(\sigma)}$; the
contrapositive gives underivability whenever one realizing model fails to
fork $\sigma$ at $V$.
\end{proof}

\subsection{Soundness: the case analysis}

\begin{proof}[Proof of Theorem~\ref{thm:soundness}]
Induction on derivations, over a model per Definition~\ref{def:sas}
realizing $\MC$.
\emph{MLTT fragment.} Standard \citep{hottbook}, with one global
simplification: by (M3) the identity type is interpreted discretely, so
every instance of path induction is a reflexivity-transport ---
$\Sem{J(c; \refl)} = \Sem{c}$ --- and the case closes for \emph{arbitrary}
motives, including families built from $\mathsf{hom}$, $\Tri$,
$\simV{V}$ and $\apV{V}$. (This is where coarse path-object
interpretations would fail: a non-trivial identity between $a, b$ with
$\hom{}{a}{c}$ inhabited and $\hom{}{b}{c}$ empty refutes the transport
case; such models are excluded by (M3).)
\emph{Directed fragment.} $\Sem{\hom{A}{a}{b}}$ is the fibre of
$(d_1, d_0) : X_1 \to X_0 \times X_0$; $\Sem{\Tri}$ the fibre of the
boundary map $X_2 \to X_1^3$; degeneracies interpret $\mathsf{id}$,
$\lambda^{\mathsf{l}}$, $\lambda^{\mathsf{r}}$ via the simplicial
identities ($d_2 s_0 f = \mathsf{id}$, $d_0 s_0 f = f = d_1 s_0 f$, and
dually for $s_1$). No rule demands inhabitation
of any fibre: the fragment is sound in any simplicial object, Kan or not,
Segal or not.
\emph{Relational fragment.} Per (M1), each denotation carries
$(\simV{V}, \apV{V})$ satisfying the rules of \S\ref{ssec:apart}; each
rule is validated pointwise. \textsc{$\simeq$/$\#$-mono} hold by the
regime-indexed structure; \textsc{$\#$-ext} and the component laws were
verified for the resolution model in Example~\ref{ex:resolution}.
\emph{Record.} ($\MC$-ax) holds by realization.
\emph{Substitution.} The new formers commute with substitution by
convention; coherence of the interpretation follows the standard
comprehension-category treatment of the MLTT fragment, extended to the
new formers componentwise; under (M3) no fibration structure beyond
reflexivity-transport is required.
\end{proof}

\subsection{Cross-regime witnesses}

\begin{proof}[Witnesses for Theorem~\ref{thm:crossregime}(b), in full]
\emph{Case $V < W$ (strictly finer separator).} Resolution model on a
two-point carrier $\{u, v\}$ at distance $r$ with
$\varepsilon_V < r \le \varepsilon_W$ (such $r$ exists by strictness).
Then $u \simV{\varepsilon_W} v$ by the single $\le \varepsilon_W$ step;
$u \apV{\varepsilon_V} v$ since $r > \varepsilon_V$; and the cross-regime
monotonicity laws hold: $\simV{\varepsilon_V}$ is the diagonal, contained
in the total $\simV{\varepsilon_W}$, and $\apV{\varepsilon_W}$ is empty,
contained in the pair $\apV{\varepsilon_V}$.
\emph{Case $V, W$ incomparable.} The two-point, two-regime record model:
carrier $\{u, v\}$; $\apV{V}$ the symmetric pair, $\simV{V}$ the diagonal;
$\simV{W}$ total, $\apV{W}$ empty. Every rule of \S\ref{ssec:apart}
holds per regime: at $V$, cotransitivity splits within $\{u, v\}$ (one
disjunct is always the recorded pair) and extensionality is trivial on the
diagonal; at $W$ the apartness rules are vacuous.
\emph{Extension to the full poset.} For any regime $U$ set
$\simV{U} := $ total if $W \le U$, else diagonal; $\apV{U} := $ the pair
if $U \le V$, else empty. Monotonicity is transitivity of $\le$; a clash
at some $U$ would need $\simV{U}$ total and $\apV{U}$ nonempty, i.e.\
$W \le U \le V$, contradicting $W \not\le V$.
\emph{Completion to a full model.} Both witnesses extend to
Definition~\ref{def:sas} as the constant simplicial object on the
carrier (faces and degeneracies identities, identity discrete), with
hom- and $\Tri$-fibres taken as the genuine fibres of the boundary
maps --- the diagonal fibres are singletons, the off-diagonal fibres
empty, which is harmless: no rule demands inhabitation of any fibre.
(M2) and (M3) hold trivially. The rest of the signature is interpreted
as in Lemma~\ref{lem:baseconsistency} (other base types as
$\mathbf{1}$, constants as $\star$, function constants as the unique
map); further constants of the carrier type may be sent to either
point, since the record mentions only the named pair.
\end{proof}

\subsection{The two-point diagonal model}

\begin{proof}[Model for Proposition~\ref{prop:nameinfo}(ii)]
Interpret $P$ as the two-point carrier $\{x, y\}$ with $\Sem{h} = x$,
$\Sem{p} = y$; at \emph{every} regime $U$ set $\simV{U} :=$ the
diagonal and $\apV{U} := \emptyset$; complete to
Definition~\ref{def:sas} as the constant simplicial object with
discrete identity, interpreting the rest of the signature as in
Lemma~\ref{lem:baseconsistency} (freshness of $h, p$ --- no other
$\Sigma$-declaration or record entry mentions them --- is what makes
this free). Every rule of \S\ref{ssec:apart} holds:
refl/sym/trans on the diagonal; \textsc{$\simeq$-mono} across any
poset pair is diagonal $\subseteq$ diagonal; the five apartness rules
are vacuous on the empty relation; id-to-regime lands on the diagonal
under discrete identity. The model realizes the empty record
vacuously, and both $\Sem{h \simV{W} p}$ and $\Sem{h =_P p}$ are
empty at $(x, y)$; Theorem~\ref{thm:soundness} gives underivability
of both.
\end{proof}
\section*{Acknowledgments}
The formal system, its verification, the experimental design and the
prose of this paper were developed in sustained collaboration with
Nahla, an AI research collaborator (Claude, Anthropic), whose
contributions span the apartness-grounding of the fork judgment, the
measurement-context architecture, the metatheorems, and the pilot
study. Per publisher policy on AI authorship, responsibility for the
content rests with the human author.

\input{appendix-logs}

\end{document}

%% file: appendix-logs.tex
\section{Experiment log excerpts}\label{app:logs}

Every text sampled in this paper is logged verbatim, with its full prompt context, in the supplementary data package (\S\ref{sec:empirics}, data availability). This appendix reproduces, for each key condition, the exact prompt and a seeded random sample of the logged outputs ($\mathrm{seed}=11235$, indices shown) --- unabridged and uncorrected, exactly as generated. The reader can check every quoted text against the package by its index.

\subsection*{Free-completion pilot (\texttt{gpt-3.5-turbo-instruct}, $T=1.0$, $n=300$/condition)}
Each record: the prompt below, completed freely; first sentences of the completions were embedded. The four completions of Example~\ref{ex:miniature} are records \#114, \#64, \#89 and \#199 of the AMB log.

\subsection*{Cone / AMB (calibrated ambiguous boundary)}

\noindent Prompt (verbatim): \textit{``We met them near the bank, just before sunset.''}
\begin{itemize}[itemsep=2pt]
\item[\#20:] \textit{The clouds were beginning to turn a deep purple and the air was thick with the smell of rain.\;\textsf{\small[newline]}\;\;\textsf{\small[newline]}\;"Welcome to our village," the man said, extending his hand in greeting. He introduced himself as}
\item[\#153:] \textit{Raymond was the father of the only person we knew, which made this strange meeting even more bizarre. Aaron seemed to vanish with the sunlight, but Raymond stood in the shadows, waiting for us.}
\item[\#155:] \textit{They had gotten out of a sleek black car, their faces hidden behind sunglasses. Five of them, all dressed in sharp suits, looking more like secret agents than the vigilantes they purported to be.}
\item[\#170:] \textit{It was a group of four, two men and two women, all wearing dark clothing and masks. They were armed with guns and led us inside the bank, where they quickly took control of the situation}
\item[\#185:] \textit{Like the early afternoon when these kinds of joints are crowded, the evening before the banks close proved to be favorable for moving cryptocurrency.\;\textsf{\small[newline]}\;\;\textsf{\small[newline]}\;These kinds of operations are risky and complicated. Often done with unprofession}
\item[\#192:] \textit{I can’t tell you whether it was gonna be a clout session or a crushing session or something new, but they were gonna do something, and it was gonna be soon.\;\textsf{\small[newline]}\;\;\textsf{\small[newline]}\;“Sister}
\item[\#199:] \textit{As we paddled our kayaks up to the bank, we see that they were waiting for us. There were three of them, all dressed in simple grey robes, with their heads bald and their}
\item[\#210:] \textit{Sarah was carrying several books in her arms and Matt had a sketch pad tucked under his arm. They were both talking animatedly about a book they had just finished reading and were eager to share their thoughts}
\end{itemize}

\subsection*{Cone / FIN control}

\noindent Prompt (verbatim): \textit{``He needed to deposit the check before noon, so he walked down to the bank.''}
\begin{itemize}[itemsep=2pt]
\item[\#20:] \textit{As he entered the bank, he was greeted by a friendly teller who directed him to the deposit slips. He quickly filled one out with his information and the amount of the check.\;\textsf{\small[newline]}\;\;\textsf{\small[newline]}\;He handed}
\item[\#153:] \textit{As he made his way through the crowded streets, he couldn't help but feel a sense of nostalgia wash over him. He had grown up in this small town and had spent countless afternoons}
\item[\#155:] \textit{As he approached the bank, he noticed a line forming outside the door. It seemed that he wasn't the only one with important errands to run before noon. He joined the line, impatient}
\item[\#170:] \textit{As he entered the bank, he noticed a long line of customers waiting to be served. He groaned inwardly, knowing that he would have to wait for a while.\;\textsf{\small[newline]}\;\;\textsf{\small[newline]}\;He took a number}
\item[\#185:] \textit{As he entered the bank, he noticed that the line was long and he didn't have much time left. He approached the teller and handed her the check, explaining that he needed to deposit}
\item[\#192:] \textit{As he entered the bank, he was greeted by the teller who asked how she could assist him. He explained that he needed to deposit a check before noon.\;\textsf{\small[newline]}\;\;\textsf{\small[newline]}\;The teller directed him to}
\item[\#199:] \textit{As he stood in line, he checked his watch anxiously. It was already eleven thirty and he could feel the time slipping away. He hoped the line would move quickly.\;\textsf{\small[newline]}\;\;\textsf{\small[newline]}\;Finally, it was}
\item[\#210:] \textit{As he approached the bank, he noticed a long line of people outside the entrance. He groaned, knowing he would have to wait in line for a while before he could deposit his check.}
\end{itemize}

\subsection*{Cone / RIV control}

\noindent Prompt (verbatim): \textit{``He carried his fishing rod through the reeds and walked down to the bank.''}
\begin{itemize}[itemsep=2pt]
\item[\#20:] \textit{The sound of water gently lapping against the shore soothed his soul. He took a deep breath and cast his line. \;\textsf{\small[newline]}\;\;\textsf{\small[newline]}\;The sun was just beginning to rise, casting a warm glow}
\item[\#153:] \textit{Even as he stepped onto the muddy shore, he felt the tension of everyday life ease from his shoulders. This was his sanctuary, his escape from reality. He breathed in the fresh air and let the}
\item[\#155:] \textit{At this time of year, his favorite spot on the Mize River was shaded by a pair of ancient willow trees that dipped their twisted branches into the water. The air was thick and humid and}
\item[\#170:] \textit{The sun was just starting to dip below the horizon, casting a warm orange glow over the water. He took a deep breath, taking in the fresh, crisp air and the peacefulness of the}
\item[\#185:] \textit{I stopped just short of where he was standing. If he didn't want my help that was fine. I silently took note of how he was rigged: on the end of his line he had a}
\item[\#192:] \textit{He cast his line out into the water and settled down to wait. As he sat there, he took in the peaceful surroundings – the sound of the birds chirping, the gentle rustling of the}
\item[\#199:] \textit{He smiled as he felt a sense of calm wash over him. The sun was shining and there was a light breeze that rustled the leaves on the trees above. He could hear the sound of birds}
\item[\#210:] \textit{Ducks were lazing on the pond’s surface and would scatter as he edged closer. It was a roasting, sunny day and it was perfect. Cormac let out a pleased breath and calmly cast}
\end{itemize}

\subsection*{The $2$-horn (FIM) experiment (\texttt{gpt-3.5-turbo-instruct}, prompted infilling, $T=1.0$, $n=200$/horn)}
Each record: prefix and suffix below, both held fixed; the model samples the bridge. The FIM template is reproduced in the supplementary package and in \texttt{run\_horn.py}.

\subsection*{Horn / neutral suffix --- the $2$-horn fork test}

\noindent Prefix: \textit{``We met them near the bank, just before sunset.''} \quad Suffix: \textit{``By the time we said our goodbyes, the stars were already out.''}
\begin{itemize}[itemsep=2pt]
\item[\#77:] \textit{As we walked along the riverbank, we shared stories and laughter with our new friends, watching the sky turn shades of pink and orange as the sun began to set behind us. We exchanged contact information and promised to keep in touch before parting ways, already reminiscing on the wonderful evening spent together.}
\item[\#85:] \textit{As we chatted with the group of strangers, we learned that they were travelers from a different country, exploring the same city as us. They shared stories of their adventures and we shared tips for the best local spots to visit.}
\item[\#92:] \textit{They were a merry group, with smiles on their faces and laughter in their voices. We shared stories and drinks by the riverbank, enjoying the warm summer air.}
\item[\#96:] \textit{They were a group of travelers from a far-off land, telling tales of their adventures and sharing exotic spices for our food. As the night went on, we laughed and learned from each other, creating bonds that would last long after we parted ways.}
\item[\#99:] \textit{We chatted and laughed with them, swapping stories and sharing snacks. As the sun began to dip below the horizon, we made plans to meet up again tomorrow for a hike in the mountains.}
\item[\#162:] \textit{We had a picnic by the river, laughing and sharing stories as the sun set over the water. They told us about their travels and we shared our plans for the future. It felt like we had known each other for years, and we promised to stay in touch.}
\item[\#189:] \textit{They were a group of travelers passing through our small town, and we quickly struck up a conversation. Their stories of faraway lands and thrilling adventures captivated us, and we ended up sharing a meal together at a nearby tavern. Laughter and camaraderie filled the air as we swapped tales late into the night.}
\item[\#196:] \textit{They were a group of local fishermen, and they invited us to join them for a drink on their boat. As we sat there, sipping on cold beer and listening to their stories of life at sea, the sun slowly dipped below the horizon and painted the sky in stunning shades of orange and purple. It was a peaceful and memorable evening, and}
\end{itemize}

\subsection*{Horn / financial suffix}

\noindent Prefix: \textit{``We met them near the bank, just before sunset.''} \quad Suffix: \textit{``When the manager finally unlocked the doors again, we were the last customers of the day.''}
\begin{itemize}[itemsep=2pt]
\item[\#77:] \textit{The bank was bustling as people hurriedly finished their transactions before the doors closed. We noticed a couple standing nearby, looking confused and anxious. Without thinking, we went over to them and asked if they needed any help. They explained that they had just moved to the city and were trying to open a new account, but had missed the bank's}
\item[\#85:] \textit{They were a group of young kids, no more than ten or eleven years old, sitting on the sidewalk with a cardboard sign that read "Need help, anything will do." My heart broke seeing them there, so I asked them if they were hungry. They eagerly nodded and I promised to buy them a meal. We all went inside the bank together}
\item[\#92:] \textit{As we walked out, we realized we had forgotten to withdraw cash, so we went back in. The bank was now nearly empty except for a few tellers and the security guard who greeted us with a friendly smile.}
\item[\#96:] \textit{As we stepped outside, we noticed a group of street performers had set up nearby. They were juggling, dancing and playing music, drawing a small crowd with their lively energy. We paused to watch for a moment before heading to our cars, feeling grateful for the unexpected entertainment after a long day at work.}
\item[\#99:] \textit{We chatted with them briefly, finding out they were new to town and looking to set up a small business. They seemed like a friendly couple, and we promised to stop by their shop once they were up and running. As we walked away, we couldn't help but feel excited for this new addition to our community.}
\item[\#162:] \textit{As we walked out with our newly acquired loan, I couldn't help but reflect on how lucky we were to have stumbled upon the friendly couple who offered us a ride to the bank after our car broke down.}
\item[\#189:] \textit{We had been standing in line for what felt like hours, our stomachs grumbling and our feet sore. As soon as we saw the "Open" sign flicker on, we rushed inside, eager to complete our errands before the bank closed for the day. The teller greeted us with a smile and processed our transactions efficiently, but as}
\item[\#196:] \textit{The bank's interior was cool and dimly lit, the air thick with the smell of money. We quickly filled out the necessary paperwork and handed over the check, relieved to have our transaction completed. As we left, we noticed the sun sinking lower in the sky and decided to stay and watch the sunset from the nearby park.}
\end{itemize}

\subsection*{Horn / riparian suffix}

\noindent Prefix: \textit{``We met them near the bank, just before sunset.''} \quad Suffix: \textit{``We watched the current carry our little boat downstream until it disappeared.''}
\begin{itemize}[itemsep=2pt]
\item[\#77:] \textit{We had been planning this trip for weeks, and the river had seemed like the perfect spot to unwind and reconnect with nature. But as the sky darkened and the temperature dropped, we realized we were ill-prepared for the journey. The friendly couple we had met offered to tow us back to shore, and we gratefully accepted, feeling humbled}
\item[\#85:] \textit{The group of locals had warned us about the treacherous waters, but we didn't listen. As we struggled to navigate the rough waves and sharp rocks, we realized our mistake. But it was already too late. Our boat capsized and we were left clinging onto debris for dear life. Thankfully, the locals came to our rescue and brought us}
\item[\#92:] \textit{The strangers offered us a ride on their bigger boat and we accepted gratefully. As we traveled down the river, we shared stories and laughs, and before we knew it, the sun had set and the stars had come out, filling the sky with their twinkling lights.}
\item[\#96:] \textit{As the sun began to dip below the horizon, we decided to paddle back to shore. Suddenly, the wind picked up and the water became choppy, making it difficult for us to row against the current. We were worried we wouldn't make it back in time, but thankfully, we eventually reached the shore and pulled our boat onto the bank just}
\item[\#99:] \textit{We waved goodbye to our new friends and turned to head back to our campsite. As the sun sank below the horizon, we sat by the fire and recounted the day's adventures, grateful for the unexpected encounter that had made our trip even more memorable.}
\item[\#162:] \textit{As we waved goodbye, I couldn't help but feel a twinge of sadness. These strangers we had met just one day ago had shown us such kindness and hospitality on our journey down the river. I knew we would never forget the experience and the unexpected friends we had made.}
\item[\#189:] \textit{We saw them on the other side of the river, waving and laughing at us. We couldn't help but join in and wave back, feeling a sense of joy and connection even though we were separated by the rushing waters.}
\item[\#196:] \textit{As we made our way towards the shore, they offered to give us a ride on their larger boat. We eagerly accepted, grateful for the chance to rest our tired arms.}
\end{itemize}

\subsection*{Native FIM (API suffix conditioning, \texttt{gpt-3.5-turbo-instruct}, $T=1.0$, $n=200$/horn)}
Same horns as above, no instruction template: the prefix is the prompt, the suffix is passed natively. Note the scene-bridging register --- the basis of the two-instruments finding.

\subsection*{Native FIM / NEU}

\noindent Suffix: \textit{``By the time we said our goodbyes, the stars were already out.''}
\begin{itemize}[itemsep=2pt]
\item[\#77:] \textit{It was a sprawling group of all ages, practicing Tai Chi. Curiosity leading the way, we walked among them, listening to the sounds of the summer evening at the Chagrin River.}
\item[\#85:] \textit{It was cool and slowly cooling even more, but no one seemed to mind as they tightly wound terrible tales about corpses and skeletons and rivers that ran with untold horrors.\;\textsf{\small[newline]}\;\;\textsf{\small[newline]}\;Alex told me that he was a writer, that he was well known in The Realm. But I recognized him at once as the reality broker who’d started our little scavenger hunt}
\item[\#92:] \textit{This was the second day of our backup crossing of Mekong. We had started near Mae Salong, drove till Chiang Khong, crossed to HueXieng in Laos on the first day. And then drove south - to a village on the outskirts of Luang Prabang.\;\textsf{\small[newline]}\;\;\textsf{\small[newline]}\;We, a group of four, had 'saved'}
\item[\#96:] \textit{The band stood against the railing with the sun behind them, the buildings and—as fate would have it—the Ferris wheel (tourists) behind us. We got to talk music, inspiration and even sentiments of the gambler.}
\item[\#99:] \textit{In a quiet, sheltered inlet, Dave and Luticia were sitting in their small boat, patiently pulling in a fishing line. They must have spotted us paddling our kayak their direction. I’m glad they waved us down.\;\textsf{\small[newline]}\;\;\textsf{\small[newline]}\;They had a beautiful boat – probably one of the most graceful-looking I’ve ever seen. It was the kayak variety}
\item[\#162:] \textit{A bit shy at first, Grace quickly opened up, showing us around her community and introducing us to her daughter.}
\item[\#189:] \textit{The lights were dipping below the mountains, staining the sky violet and pink. They had wandered here from up north, and had brought their fiddles, their drums and their songs with them. Their voices were like water, spells woven together to pull light out of the dying day.}
\item[\#196:] \textit{By the time we had reached the palms, their boat had transformed from a onetime shipping vessel to a dinner party.\;\textsf{\small[newline]}\;\;\textsf{\small[newline]}\;Threadbare rugs were strewn over what seemed like it had previously been just plain old timber, a worn guitar had made its appearance, and plastic chairs had been set around a worn table. My aunt had already seated herself and booked}
\end{itemize}

\subsection*{Native FIM / FIN}

\noindent Suffix: \textit{``When the manager finally unlocked the doors again, we were the last customers of the day.''}
\begin{itemize}[itemsep=2pt]
\item[\#85:] \textit{They were on the water because the bank had turned them away. It was five minutes to closing time.\;\textsf{\small[newline]}\;\;\textsf{\small[newline]}\;Sitting in the back of Mark’s truck, watching the Swans in the warm evening breeze, was a pleasant way to wait a few minutes. And to them, that’s how we were. Just waiting a few minutes.\;\textsf{\small[newline]}\;\;\textsf{\small[newline]}\;I thought they}
\item[\#96:] \textit{Each was a dignitary in his people’s community, bearing gifts appropriate to a king’s reception. The meeting was as a very long week in the lives of a wife and husband away from home.}
\item[\#99:] \textit{They stood out awkwardly on the sidewalk, square, ugly, and utterly out of place. We had seen them before, and knew what they were capable of.\;\textsf{\small[newline]}\;\;\textsf{\small[newline]}\;“Didn’t know they’d spread so far.” I muttered under my breath. Rae gave me a stern look. We both knew our job, and complaints wouldn’t bring change.\;\textsf{\small[newline]}\;\;\textsf{\small[newline]}\;“They}
\item[\#196:] \textit{We smiled at them and motioned to go ahead, feeling good that we showed them courtesy. Later in line, they made the “gun” finger-in-the-pocket signal. No big deal, we thought, just kids horse-playing. We swiftly did our banking and went to our car. But we didn’t drive off. They passed by menacingly}
\end{itemize}

\subsection*{Native FIM / RIV}

\noindent Suffix: \textit{``We watched the current carry our little boat downstream until it disappeared.''}
\begin{itemize}[itemsep=2pt]
\item[\#85:] \textit{The little village we came across was quite unexpected, but at least we could spend the night in the tavern and make our way to the capital with daylight on our side.\;\textsf{\small[newline]}\;\;\textsf{\small[newline]}\;But despite Collin’s objections, we got caught in the moment and decided to take a boat ride in the river. It was the old boatman’s last day of work and}
\item[\#96:] \textit{They wanted to go home before the darkness descended, and there was an island that would be pitch black by seven pm. That meant crossing several streams, a village, and of course; the last of them - the most terrifying - was the river.\;\textsf{\small[newline]}\;\;\textsf{\small[newline]}\;"We can't let you cross this river, not in this darkness," said a helpful woman from the}
\item[\#99:] \textit{They were young, just a bit older than us, and they were working hard, clearly playing a game. They kept jumping into the river, swimming back to the shore, jumping in, over and over. They invited us to join, and said, "C'mon, if we're going to build this here raft we'll need some hard}
\item[\#196:] \textit{At that time of year, the river is little more than a collection of puddles. And with yam prices skyrocketing, these girls were taking no chances.}
\end{itemize}

\subsection*{The failed instrument: \texttt{gpt-4o-mini} calibration ($T=1.0$, $n=10$/horn)}
The basis of the mode-collapse finding (\S\ref{sec:empirics}): the instruction-tuned model's bridges, same horns as above. Note the near-identical sociable register and the weak uptake of the suffix constraint.

\subsection*{\texttt{gpt-4o-mini} / NEU}

\noindent Suffix: \textit{``By the time we said our goodbyes, the stars were already out.''}
\begin{itemize}[itemsep=2pt]
\item[\#5:] \textit{They were a group of friends, laughing and sharing stories as the day turned to dusk, their silhouettes casting long shadows on the water's surface.}
\item[\#6:] \textit{They were setting up a small campfire, laughter ringing in the air as they shared stories from their recent adventures.}
\item[\#8:] \textit{Their laughter echoed off the water as we shared stories, the sky slowly transitioning from hues of orange to deep navy.}
\item[\#9:] \textit{Their laughter echoed across the water as we shared stories and watched the sun dip below the horizon, painting the sky in shades of orange and purple.}
\end{itemize}

\subsection*{\texttt{gpt-4o-mini} / FIN}

\noindent Suffix: \textit{``When the manager finally unlocked the doors again, we were the last customers of the day.''}
\begin{itemize}[itemsep=2pt]
\item[\#5:] \textit{Their laughter echoed as they tossed pebbles into the water, the golden light dancing across the surface, and we were drawn to their carefree spirit.}
\item[\#6:] \textit{They were a small family, huddled close together, sharing stories of their day while casting long shadows on the grass. As we approached, their laughter filled the air, creating an inviting warmth that made us hesitate to leave.}
\item[\#8:] \textit{They were a friendly couple, eagerly sharing tales of their recent travels while we waited for the bank to reopen. As the sun dipped below the horizon, their laughter echoed in the cool evening air, making the short wait feel much longer.}
\item[\#9:] \textit{They had set up a small pop-up stall selling freshly made pastries, their laughter mixing with the gentle sounds of the river nearby. Intrigued, we couldn’t resist trying a few treats while the sun dipped below the horizon, casting a golden glow over the water.}
\end{itemize}

\subsection*{\texttt{gpt-4o-mini} / RIV}

\noindent Suffix: \textit{``We watched the current carry our little boat downstream until it disappeared.''}
\begin{itemize}[itemsep=2pt]
\item[\#5:] \textit{Their laughter echoed as they splashed in the shallow water, while we secured our belongings and prepared to set off on our adventure.}
\item[\#6:] \textit{They were gathering their gear, laughter bubbling as they prepared for the evening’s fishing trip. Excitement filled the air as they invited us to join them, sharing tales of past catches and the best spots along the river.}
\item[\#8:] \textit{They were preparing to set off on their own adventure, laughter echoing over the gentle ripples of the water.}
\item[\#9:] \textit{The laughter and chatter filled the air as we approached, the warm glow of the setting sun reflecting on the water's surface. As we got closer, we could see the familiar faces of our friends gathered around, ready to embark on the evening's adventure.}
\end{itemize}